\documentclass[11pt,reqno]{amsart}
\usepackage{amssymb}

\usepackage{amsthm,amsfonts,amssymb,euscript,hyperref,color}
\usepackage{comment}
\usepackage{mathtools}
\usepackage{stmaryrd}
\usepackage{mathrsfs}
\usepackage{pifont}
\usepackage{amsfonts,amsmath,amssymb,amsthm,amscd,cancel}
\usepackage{graphicx}
\usepackage[]{geometry}
\usepackage{verbatim}
\usepackage{mathrsfs}
\usepackage{fancyhdr}
\usepackage{footnpag}

\newtheorem{theorem}{Theorem}
\newtheorem{lemma}{Lemma}
\newtheorem{proposition}{Proposition}
\newtheorem{corollary}{Corollary}

\newtheorem{remark}{Remark}

\allowdisplaybreaks[4]

\DeclareMathAlphabet{\mathsfsl}{OT1}{cmss}{m}{sl}
\numberwithin{equation}{section}

\newcommand{\D}{\mathrm{d}}

\newcommand{\tr}{\mathrm{tr}}
\newcommand{\cir}[1]{\overset{\circ}{#1}}

\renewcommand{\iint}{\int\!\!\!\!\!\int}

\def\alphab{\underline{\alpha}}
\def\betab{\underline{\beta}}
\def\chib{\underline{\chi}}
\def\chibh{\widehat{\underline{\chi}}}
\def\chih{\widehat{\chi}}
\def\etab{\underline{\eta}}

\def\Lb{\underline{L}}
\def\mub{\underline{\mu}}
\def\kappab{\underline{\kappa}}
\def\tr{\mathrm{tr}}
\def\omegab{\underline{\omega}}

\def\tensor{\widehat{\otimes}}

\def\ub{\underline{u}}
\def\Cb{\underline{C}}

\def\Lb{\underline{L}}

\newcommand{\Db}{\underline{D}}
\newcommand{\Dh}{\widehat{D}}
\newcommand{\Dbh}{\widehat{\underline{D}}}

\def\nablas{\mbox{$\nabla \mkern -13mu /$ }}
\def\Deltas{\mbox{$\Delta \mkern -13mu /$ }}

\def\divs{\mbox{$\mathrm{div} \mkern -13mu /$ }}
\def\curls{\mbox{$\mathrm{curl} \mkern -13mu /$ }}
\def\ds{\mbox{$\mathrm{d} \mkern -9mu /$}}
\def\gs{\mbox{$g \mkern -9mu /$}}
\def\epsilons{\mbox{$\epsilon \mkern -9mu /$}}

\begin{document}

\title{On the Local Extension of the Future Null Infinity}

\author[Junbin Li]{Junbin Li}
\address{Department of Mathematics, Sun Yat-sen University\\ Guangzhou, China}
\email{mc04ljb@mail2.sysu.edu.cn}

\author[Xi-Ping Zhu]{Xi-Ping Zhu}
\address{Department of Mathematics, Sun Yat-sen University\\ Guangzhou, China}
\email{stszxp@mail.sysu.edu.cn}

\date{}

\maketitle

\begin{abstract}
We consider a characteristic problem of the vacuum Einstein equations with part of the initial data given on a future complete null cone with suitable decay, and show that the solution exists uniformly around the null cone for general such initial data. We can then define a segment of the future null infinity. The initial data are not required to be small and the decaying condition inherits from the works of \cite{Ch-K} and \cite{K-N}.

\end{abstract}

\tableofcontents

\setcounter{tocdepth}{1}

\parskip=\baselineskip

\section{Introduction}

\subsection{Introduction} It is well known that the weak cosmic censorship, one of the major problem in mathematical relativity, states that the maximal developments of generic asymptotically flat initial data possess a complete future null infinity. There are many mathematical works towards this problem. 

In 1993, Christodoulou and Klainerman \cite{Ch-K} proved the nonlinear stability of Minkowski space-time in vacuum. They showed that, beginning with any strongly asymptotically flat, maximal Cauchy initial data sets which are sufficiently close to the space-like slices in Minkowski space-time, the maximal future developments are future geodesically complete and approaching to Minkowski space-time in suitable sense. In particular, the weak cosmic censorship holds in this class.
Klainerman and Nicol\`o \cite{K-N} has proved the global existence in the domain of dependence outside a compact set in the initial data set. They constructed the space-time in a neighbourhood of space-like infinity, and does not care about what happens in the interior region of the initial data. 

According to \cite{Chr90} and \cite{K-N}, a corollary of the above works is, for any strongly asymptotically flat Cauchy data, there exists a region $\Omega_0$ with compact closure such that the boundary of the causal future $\partial J^+(\Omega_0)$ of $\Omega_0$ in the maximal development consists of future complete null generators. Christodoulou then defined the phase ``possessing a complete future null infinity'' in \cite{Chr90} as if for every large $A>0$, we can find some $\Omega$ containing $\Omega_0$ such that the boundary of the domain of dependence $\partial D^+(\Omega)$ of $\Omega$ has the property that, each of its futre null geodesic generators has an affine length measured from $\partial D^+(\Omega)\bigcap\partial J^+(\Omega_0)$, of at least $A$.

Therefore, it is natural to consider the characteristic-Cauchy mixed initial data problem, that the initial data are given on a three-dimensional disk $\Omega_0$, and the complete outgoing null cone rooted at $\partial\Omega_0$, as an approach to the weak cosmic censorship. In practice, we should consider the case that $\Omega_0$ is not necessarily a large enough region, that is, the initial data are no longer small. As a first step, we should answer whether the solution to the vacuum Einstein equations exists around the initial hypersurface in a uniform way, such that we can define a segment of the future null infinity. This is a local existence problem, where ``local'' is measured from infinity. In addition, if we firstly solve the vacuum Einstein equations on $\Omega_0$, the boundary of the domain of dependence of $\Omega_0$ can be served as a new initial null hypersurface, and we are facing a double characteristic problem with initial data given on two intersecting null cones.

The characteristic setting is quite useful because the constraints are basically ODEs along the null generators. In 1980-90s, Christodoulou considered in a series of papers the spherical symmetric solutions of Einstein equations coupled with scalar fields and gave an affirmative answer to the weak cosmic censorship in this class, see \cite{Chr99}. The initial data of this problem consists a function $\alpha_0=\partial(r\phi)/\partial s|_{C_0}$ where $s$ is the affine parameter of the null generators of $C_0$, and $C_0$ is assumed to be a complete null cone from a point towards to the future null infinity. 

In the light of Christodoulou's work on spherical symmetric Einstein-scalar field equations, we find that the characteristic setting may be a more reasonable approach to the problem. In fact, in the case that the black hole eventually forms, a global existence result for the characteristic problem leads to the event horizon in a natural way, because the event horizon is also a complete null cone, and serves as the ``final'' piece of the solution. We remark that Christodoulou considered the single characteristic problem with initial data given on a null cone from a point, but not the characteristic-Cauchy mixed or double characteristic cases. The approaches do not differ too much because the null cone which extends to infinity plays the essential role in the problem.

The local existence of the double characteristic problem for general initial data was considered in \cite{Ren} and \cite{Luk}. The initial data are given on two null cones $C_0$ and $\Cb_0$ which intersect at a two-sphere $S_{0,0}$. Rendall \cite{Ren} has showed that for smooth initial data given on $C_0$ and $\Cb_0$, the solution of vacuum Einstein equations exists in a neighbourhood of their intersection $S_{0,0}$ to the future of two null cones. Luk \cite{Luk} extended this result. He showed that if the initial data are given on $C_0$ for $0\le\ub\le I_1$ and $\Cb_0$ for $0\le u\le I_2$, the solution in fact exists in a full neighbourhood of two initial null cones to the future, i.e., $(\ub,u)\in\left([0,I_1]\times[0,\varepsilon]\right)\bigcup\left([0,\varepsilon]\times[0,I_2]\right)$, where $\varepsilon>0$ is a small parameter depending on the size of the initial data. One can directly apply this result to the case when $C_0$ is complete. The solution then exists on a neighbourhood of $C_0$ but the ``thickness'' of the solution is not uniform. Caciotta and Nicol\`o \cite{C-N05} \cite{C-N10} also consider the initial outgoing null cones to be complete towords to the future. The data they prescribe are small and suitably decay. In this case, the solution exists globally to the whole future of the two initial null cones, say $(\ub,u)\in[0,+\infty)\times[0,I]$ where the constant $I$ is the affine length of the null generators of $\Cb_0$. 

There are some other works using characteristic setting to capture interesting mathematical and physical phenomena, by specifying some carefully designed initial data, such as in the recent breakthrough on the formation of black holes \cite{Chr}\cite{K-R-09}, and on the impulsive gravitational waves in \cite{L-R1}\cite{L-R2}.

The main result of this paper is the following existence theorem (in a rough form).
\begin{theorem}Given general initial data on $C_0$ and $\Cb_0$, which are two null cone intersecting at a sphere $S_{0,0}$, assume that the outgoing null cone $C_0$ is complete towards to the future and the data on $C_0$ suitably decay (but not necessarily small). Then the solution of vacuum Einstein equations exists in a uniform neighbourhood of $C_0$, that is, the solution contains a family of outgoing complete null cone such that we can define a small segment of future null infinity.
\end{theorem}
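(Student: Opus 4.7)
The plan is to work with a double null foliation $(u, \ub)$ rooted at $S_{0,0}$, where $u$ labels the outgoing cones $C_u$ (with $C_0$ the given complete cone) and $\ub$ labels the incoming cones $\Cb_{\ub}$ (with $\Cb_0$ the given transverse cone). Since existence in a full neighbourhood of $\Cb_0$ for short $\ub$-time, and in a full neighbourhood of $C_0$ for short $u$-time, is already provided by Luk \cite{Luk}, the new content is to show that the short $u$-time $\varepsilon > 0$ can be chosen independently of how far one extends along $C_0$; equivalently, one wants estimates on the slab $[0, \varepsilon] \times [0, \infty)$ that are uniform in $\ub$. This is the sense in which the existence is ``local from infinity'', and it requires setting up weighted norms in $\ub$ on the Ricci coefficients $\chih, \chibh, \tr\chi, \tr\chib, \eta, \etab, \omega, \omegab$ and the null curvature components $\alpha, \beta, \rho, \sigma, \betab, \alphab$ that match the expected peeling decay along $C_0$, in the spirit of Christodoulou--Klainerman \cite{Ch-K}, Klainerman--Nicol\`o \cite{K-N}, and Caciotta--Nicol\`o \cite{C-N05, C-N10}.

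The main steps I would carry out, inside a bootstrap/continuity argument in $u$, are: (i) assume weighted bootstrap bounds on these geometric quantities on $[0, u_*] \times [0, \infty)$ with $u_* \leq \varepsilon$; (ii) integrate the null structure equations in the $u$-direction, using the prescribed data on $C_0$ as the datum at $u=0$, and recover improved $L^\infty$ and $L^2$ bounds for the Ricci coefficients with an explicit gain of a power of $\varepsilon$ or of the $\ub$-weight; (iii) run $L^2$ energy estimates for the null Bianchi equations, with the multiplier vector fields and commutation fields of \cite{Ch-K} adapted to the foliation, thereby improving the curvature bounds; (iv) commute with $\Lie_L, \Lie_{\Lb}$ and the angular rotations to obtain higher derivative estimates, and use Sobolev and standard elliptic estimates on the spheres $S_{u,\ub}$ to close the $L^\infty$ part of the bootstrap. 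Choosing $\varepsilon$ sufficiently small in terms of the initial norm on $C_0$ (which is finite by the decay assumption, but not small) then closes the bootstrap and yields the desired uniform slab.

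The main obstacle is precisely the absence of smallness on $C_0$: many nonlinear terms in the structure and Bianchi equations involve products of quantities that are of size one on $C_0$, and the only smallness available is the thin-slab parameter $\varepsilon$ in the $u$-direction. The argument must therefore treat the $C_0$ data as a large but prescribed background, writing the transport equations in $u$ as linear equations for the new quantity with coefficients built from this background, so that no two ``large'' quantities are ever multiplied together without at least a factor of $\varepsilon$ coming from integration in $u$. The compatibility of the $\ub$-weights with the various Gronwall-type integrations as $\ub \to \infty$ is delicate, and is where the decay conditions inherited from \cite{Ch-K} and \cite{K-N} play an essential role. Once the uniform slab is constructed, letting $\ub \to \infty$ along each $C_u$ for $u \in [0, \varepsilon]$ produces a one-parameter family of complete outgoing null cones whose limiting structure at infinity defines the segment of future null infinity asserted in the theorem.
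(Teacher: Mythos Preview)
Your proposal captures the broad architecture (double null foliation, bootstrap, transport equations for Ricci coefficients, energy estimates for curvature), but it misses the one structural ingredient that makes the scheme close: the \emph{canonical foliation on a last slice}. You propose to take $u$ as the naive optical function rooted at $S_{0,0}$ and to integrate all structure equations in the $u$-direction from $C_0$. This fails for two concrete reasons.

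First, $\omegab$ has no $\Db$-transport equation; it satisfies only $D\omegab=\Omega^{2}(2(\eta,\etab)-|\eta|^{2}-\rho)$, so it must be integrated in $\ub$. If you integrate from $\Cb_0$ outward, the source $\rho\sim r^{-3}$ is integrable and $\omegab$ tends to a nonzero limit along each generator; you obtain no decay at all, whereas the curvature energy estimates (the ``Type~2'' terms $\Gamma_p|\nablas^i\underline R_j|^2$, which require $p\ge 2$ for the $\Gamma$ coupling to $\underline R$) need $\omegab$ to behave like $r^{-1}$. The paper fixes this by imposing the canonical condition \eqref{lastslice} on a finite last slice $\Cb_{\ub_*}$, which yields an elliptic equation for $\omegab$ there with the correct $r(\ub_*)^{-1}$ size, and then integrates $D\omegab$ \emph{backward} from $\Cb_{\ub_*}$.

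Second, the same mechanism is needed at top order for $\nablas^{3}(\Omega\tr\chi)$ and hence, via Codazzi, for $\nablas^{3}\chih$. The $\Db$-equation for $\Omega\tr\chi$ contains $2\Omega^{2}\divs\eta$; after three angular derivatives this is a fourth derivative of $\eta$, one more than you control. On the canonical last slice the identity $\divs\eta=\tfrac12((\chih,\chibh)-\overline{(\chih,\chibh)})-(\rho-\bar\rho)$ kills this term, and one can instead integrate the $D$-equation for $\nablas^{3}(\Omega\tr\chi)$ from $\Cb_{\ub_*}$. Without the canonical slice you have a genuine loss of derivative that your scheme does not address.

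Consequently the bootstrap should run in $\ub_*$ (the position of the last slice), not in $u$: one proves uniform estimates on $[0,\ub_*]\times[0,\varepsilon]$ with $u$ canonical on $\Cb_{\ub_*}$, extends the solution a little in $\ub$, rebuilds the canonical foliation on the new last slice, and shows the estimates persist. The global retarded time is then obtained as the limit of these $\ub_*$-dependent optical functions as $\ub_*\to\infty$, using the comparison equations \eqref{D2u1u2}--\eqref{Db2L1L2}. Your outline, as written, would also need the renormalization $\check\rho=\rho-\tfrac12(\chih,\chibh)$, $\check\sigma=\sigma+\tfrac12\chih\wedge\chibh$ in the Bianchi estimates to avoid any hypothesis on $\alpha$, since the theorem as stated assumes nothing about $\alpha$ on $C_0$.
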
 

The precise form of the theorem is in Theorem \ref{maintheorem} in Section \ref{maintheoremsection}. One direct corollary is 
\begin{corollary}
Consider the characteristic-Cauchy mixed initial data problem of the vaccum Einstein equations, where the initial data are given on a three-dimensional disk and the complete outgoing null cone rooted at the boundary of the disk. Then for general such initial data with suitable decay along the null cone, the solution exists in a uniform neighbourhood of the initial hypersurfaces.
\end{corollary}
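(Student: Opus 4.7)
The plan is to reduce the characteristic-Cauchy mixed problem to the double characteristic problem treated by the Main Theorem (Theorem \ref{maintheorem}). The first step is to apply the standard local well-posedness of the Cauchy problem for the vacuum Einstein equations to the initial data prescribed on the three-dimensional disk $\Omega_0$. This yields a smooth solution in a neighborhood of $\Omega_0$ in its future. By the domain-of-dependence property (finite speed of propagation) and the smoothness of $\partial\Omega_0 \cong S^2$, the future boundary of this local development consists of an incoming null cone $\Cb_0$ emanating from $\partial\Omega_0 = S_{0,0}$, with the affine length $I$ of its null generators controlled from below in terms of the size of the Cauchy data.

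The second step is to read off the double characteristic data on $C_0 \cup \Cb_0$. On $C_0$ the data is given and, by hypothesis, complete towards the future and suitably decaying. On $\Cb_0$ the data is obtained from the local Cauchy evolution above, hence is smooth on $[0,I]$ with norms controlled by those of the original Cauchy data and automatically compatible with the data on $C_0$ along the intersection sphere $S_{0,0}$. The third step is to apply Theorem \ref{maintheorem} to this double characteristic data, which produces a solution of the vacuum Einstein equations in a uniform neighbourhood of $C_0$ containing a family of complete outgoing null cones; gluing this to the Cauchy evolution in $D^+(\Omega_0)$ via uniqueness of the characteristic problem yields the desired solution in a uniform neighbourhood of the initial hypersurfaces.

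The main obstacle is verifying that the characteristic data induced on $\Cb_0$ by the Cauchy evolution actually lies in the class of admissible initial data for the Main Theorem. Concretely, one must check that, after setting up the double null foliation adapted to $\Cb_0$, the induced Ricci coefficients and null curvature components on $\Cb_0$ belong to the function spaces required for the double null a priori estimates. Since $\Cb_0$ has finite affine length no decay hypothesis is needed, and this amounts to a regularity/boundedness check that follows from the local well-posedness theory for the Cauchy problem together with the trace properties of the solution on characteristic hypersurfaces. A secondary technical point is the smooth matching of the Cauchy piece and the double characteristic piece across $\Cb_0$, which is immediate from uniqueness of the characteristic initial value problem.
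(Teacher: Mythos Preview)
Your proposal is correct and follows exactly the approach the paper indicates: the paper explicitly says in the introduction that ``if we firstly solve the vacuum Einstein equations on $\Omega_0$, the boundary of the domain of dependence of $\Omega_0$ can be served as a new initial null hypersurface, and we are facing a double characteristic problem with initial data given on two intersecting null cones,'' after which the Main Theorem applies. The paper treats the corollary as immediate from this reduction and does not spell out the regularity check on $\Cb_0$ or the gluing step you flag, but these are routine and your identification of them is appropriate.
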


As mentioned above, this corollary serves as the first step towards to the weak cosmic censorship. Physically, it says that the solution exists locally in retarted time, which can be viewed as the affine parameter of the null generators of the future null infinity. We also remark that the decaying condition we impose inherits from the work of Christodoulou-Klainerman \cite{Ch-K} and Klainerman-Nicol\`o \cite{K-N}.

\subsection{Comments on the Proof} 

The technique of energy estimates for curvature was developed by Christodoulou and Klainerman in \cite{Ch-K}, using the Bel-Ronbinson tensor for the Weyl curvature of the space-time based on the second Bianchi identities. This technique was also employed in \cite{K-N} and in the recent breakthrough on the formation of black holes by Christodoulou \cite{Chr} and soon an improvement by Klainerman and Rodnianski \cite{K-R-09}. In \cite{Ch-K}, \cite{K-N} and \cite{Chr}, this technique was applied in an infinite region which extends to (future or past) null infinity. They used some specific approximately Killing and conformal Killing vectorfields as multipliers and commutators in order to capture the decay of different geometric quantities towards to the null infinity. The more recent works such as \cite{Luk}, \cite{L-Y}, \cite{L-R1}, \cite{L-R2}, \cite{D-H-R} used instead the null Bianchi equations which are written in components and integration by parts. When dealing with the null infinity, such as in \cite{L-Y} and \cite{D-H-R} and the current work, since the decay of the geometry is well understood after the work of \cite{Ch-K} and \cite{K-N}, we can use suitable weights to generate the weighted energy estimate using integration by parts. The advantage of not using Bel-Robinson tensor is that one do not need to use specific vectorfields to be multipliers and commutators. The existence of a preferable vectorfield is not ensured in general space-times. One another advantage is that, it is much easier to do renormalization in order to compare two different space-time or explorer some hidden cancellations of the equations as in \cite{L-R1} and \cite{L-R2}.

In our proof, the space-time is foliated by two optical functions $\ub$ and $u$ which have null gradiants. Such a foliation is usually called a double null foliation. Almost all works mentioned above are in this framework. The double null foliation is particular suitable for the problems dealing with null infinity, since the (future or past) null infinity can be viewed as the limit of the (incoming or outgoing) null cones. For example, in \cite{Chr}, the initial data are given on a family of outgoing null cones which are tending to the past null infinity, and in \cite{K-N}, the solutions are solved up to the future null infinity. 

When constructing the solution up to the future null infinity, the construction of the global optical function which represents the retarded time should be done using a last slice argument, e.g. in \cite{Ch-K}, \cite{K-N}, \cite{C-N10}. The level sets of the global optical function is not constructed by simply extend the outgoing null cones rooted on the initial hypersurface (space-like or null). The level sets of the global optical function should be constructed initiated from the sections of the future null infinity, and the sections satisfy a specific differential equation. In practice, this is done by a bootstrap and limiting argument. In the current work, though the geometric quantities are not small, since we only construct the solution in a small retarded time, it turns out the canonical double null foliation first used in \cite{K-N} also works in our case.

One another ingrediant is that some special \textit{reductive} structures in Einstein equations are used to ellimilate the high nonlinearity of the equations. The work \textit{reductive} has different meanings in different works, but in general, it means that for some local or semi-global problem, when no uniform  smallness is imposed, we can always adjust the small parameter, such that the highly coupled nonlinear terms are absorbed by the small parameter. The coupled nature is completely destroyed. An important example to this philosophy is \cite{Chr}. Even though the data prescribed is protentially very large for some components, but it turns out that these components appear together with some other small components. The small parameter is still able to absorb the nonlinearity. Other examples are in \cite{Luk}, \cite{L-R1}, \cite{L-R2}. In these cases, the data prescribed are bounded (in suitable norms) and the small parameter comes from the smallness of the existence region, which is the same in our current work. We also remark that the reductive struture is related to the null structure of Einstein equations, which corresponds to the null condition in nonlinear wave equations and ensures the decay estimates are strong enough to make the space-time integral converges. The two structures are more or less the same thing in some cases. In the current work, they are considered simultaneously.

We make some comments on some relaxation of the decaying condition. Firstly, notice that the work of \cite{Chr} serves as a good example of our theorem. If we in addition assume the initial data given in $\ub\in[0,\delta)$ is compactly supported and extend the initial data trivially (that is $\chih\equiv0$), then we obtain the initial data given on a  complete null cone $C_{u_0}$. Notice that the incoming null cone $\Cb_\delta$ serves as the initial incoming cone in the new problem. According to \cite{Chr}, if we do not appeal to a higher order estimate, the curvature component $\beta$ is of size $\delta^{-1}$ on $\Cb_{\delta}$. Consider now the characteristic initial data problem on $C_{u_0}$ and $\Cb_\delta$, one should ask whether we can solve the vacuum Einstein equations independent of $\delta$.

At this point, we find the renormalization in \cite{L-R1} is quite useful. They define two new quantities
\begin{align*}
\check{\rho}=\rho-\frac{1}{2}(\chih,\chibh),\quad \check{\sigma}=\sigma+\frac{1}{2}\chih\wedge\chibh.
\end{align*}
This renormalization ellimilates the component $\alpha$ in the null Bianchi equtions $D\rho$, $D\sigma$, and then we can do energy estimate without knowing information about $\alpha$. We find that this renormalization also works when considering such a semi-global problem. In usual energy estimate, the estimate of $\alpha$ on outgoing null cone appears together with the estimate of $\beta$ on incoming null cone. Although $\beta$ is of size $\delta^{-1}$ on $\Cb_{\delta}$ in \cite{Chr}\footnote{This is not directly pointed out in  \cite{Chr} because only the estimate of $\beta$ on outgoing null cone is used and of size $\delta^{-1/2}$.}, we find that no other components on $\Cb_{\delta}$ are of size $\delta$ to some negative power. Therefore we can solve the vacuum Einstein equations independent of $\delta$. In addition, we find that our main theorem may indicate that the decaying condition of the component $\alpha$ seems not to be quite relevant to the weak cosmic censorship.

\section{Preliminary}

\subsection{Basic Geometric Setup} We follow the geometric setup and notations in \cite{Chr}.  We use $M$
 to denote the underlying space-time (which will be the solution) and use $g$ to denote the background 3+1 dimensional Lorentzian metric. We use $\nabla$ to denote the Levi-Civita connection of the metric $g$.

Let $\ub$ and $u$ be two optical functions on $M$, that is
\begin{equation*}
g(\nabla\ub,\nabla \ub)= g(\nabla u,\nabla u)=0.
\end{equation*}
The space-time $M$ is foliated by the level sets of $\ub$ and $u$ respectively. Since the gradients of $u$ and $\ub$ are null, we call the the these two foliations together a double null foliation. We require the functions $u$ and $\ub$ increase towards the future. We use $C_u$ to denote the outgoing null hypersurfaces which are the level sets of $u$ and use ${\Cb}_{\ub}$ to denote the incoming null hypersurfaces which are the level sets of $\ub$. We denote the intersection $S_{\ub,u}=\Cb_{\ub} \cap C_u$, which is a  space-like two-sphere.

We define a positive function $\Omega$ by the formula $ \Omega^{-2}=-2g(\nabla\ub,\nabla u)$.  We then define the normalized null pair $(e_3, e_4)$ by $e_3=-2\Omega\nabla\ub$ and $e_4=-2\Omega\nabla u$, and define one another null pair $\Lb=\Omega e_3$ and $L=\Omega e_4$. We remark that the flows generated by $\Lb$ and $L$ preserve the double null foliation. On a given two sphere $S_{\ub, u}$ we choose a local orthonormal frame $(e_1,e_2)$. We call $(e_1, e_2, e_3, e_4)$ a \emph{null frame}.  As a convention, throughout the paper, we use capital  Latin letters $A, B, C, \cdots$ to denote an index from $1$ to $2$, e.g. $e_A$ denotes either $e_1$ or $e_2$. 

We define $\phi$ to be a tangential tensorfield if $\phi$ is \textit{a priori} a tensorfield defined on the space-time $M$ and all the possible contractions of $\phi$ with either $e_3$ or $e_4$ are zeros. We use $D\phi$ and $\Db\phi$ to denote the projection to $S_{\ub,u}$ of usual Lie derivatives $\mathcal{L}_L\phi$ and $\mathcal{L}_{\Lb}\phi$. The space-time metric $g$ induces a Riemannian metric $\gs$ on $S_{\ub,u}$ and $\epsilons$ is the volume form of $\gs$ on $S_{\ub,u}$. We use $\ds$ and $\nablas$ to denote the exterior differential and covariant derivative (with respect to $\gs$) on $S_{\ub,u}$.

We recall the definitions of null connection coefficients. Roughly speaking, the following quantities are Christoffel symbols of $\nabla$ according to the null frame $(e_1,e_2,e_3,e_4)$:
\begin{align*}
\chi_{AB}&=g(\nabla_Ae_4,e_B),\quad \eta_A=-\frac{1}{2}g(\nabla_3e_A,e_4),\quad \omega=\frac{1}{2}\Omega g(\nabla_4e_3,e_4),\\
\chib_{AB}&=g(\nabla_Ae_3,e_B), \quad\etab_A=-\frac{1}{2}g(\nabla_4e_A,e_3), \quad\omegab=\frac{1}{2}\Omega g(\nabla_3e_4,e_3).
\end{align*}
They are all tangential tensorfields. We also define $\chi'=\Omega^{-1}\chi$, $\chib'=\Omega^{-1}\chi$ and $\zeta=\frac{1}{2}(\eta-\etab)$. The trace of $\chi$ and $\chib$ will play an important role in Einstein field equations and they are defined by $\tr\chi = \gs^{AB}\chi_{AB}$ and $\tr\chib = \gs^{AB}\chib_{AB}$. By definition, we can check directly the following identities $\ds\log\Omega=\frac{1}{2}(\eta+\etab)$, $D\log\Omega=\omega$, $\Db\log\Omega=\omegab$.

We can also define the null components of the curvature tensor
{\bf R}:
\begin{align*}
\alpha_{AB}&=\mathbf{R}(e_A,e_4,e_B,e_4),\quad\beta_A=\frac{1}{2}\mathbf{R}(e_A,e_4,e_3,e_4),\quad\rho=\frac{1}{4}\mathbf{R}(e_3,e_4,e_3,e_4),\\
\alphab_{AB}&=\mathbf{R}(e_A,e_3,e_B,e_3),\quad\betab_A=\frac{1}{2}\mathbf{R}(e_A,e_3,e_3,e_4),\quad\sigma=\frac{1}{4}\mathbf{R}(e_3,e_4,e_A,e_B)\epsilons^{AB}.
\end{align*}

We then define several kinds of contraction of the tangential tensorfields, which are used in deriving the equations. For a  symmetric tangential 2-tensorfield $\theta$, we use $\widehat{\theta}$ and $\tr\theta$ to denote the trace-free part and trace of $\theta$ (with respect to $\gs$). If $\theta$ is trace-free, $\Dh\theta$ and $\Dbh\theta$ refer to the trace-free part of $D\theta$ and $\Db\theta$. Let $\xi$ be a tangential $1$-form. We define some products and operators for later use. For the products, we define $(\theta_1,\theta_2)=\gs^{AC}\gs^{BD}(\theta_1)_{AB}(\theta_2)_{CD}$ and $\ (\xi_1,\xi_2)=\gs^{AB}(\xi_1)_A(\xi_2)_B$. This also leads to the following norms $|\theta|^2=(\theta,\theta)$ and $|\xi|^2=(\xi,\xi)$. We then define the contractions $(\theta\cdot\xi)_A=\theta_A{}^B\xi_B$, $(\theta_1\cdot \theta_2)_{AB}=(\theta_1)_A{}^C(\theta_2)_{CB}$, $\theta_1 \wedge\theta_2=\epsilons^{AC}\gs^{BD} (\theta_1)_{AB}(\theta_2)_{CD}$ and $\xi_1\tensor \xi_2=\xi_1\otimes\xi_2+\xi_2\otimes\xi_1-(\xi_1,\xi_2)\gs$. The Hodge dual for $\xi$ is defined by $\prescript{*}{}\xi_A=\epsilons_A{}^C\xi_C$. For the operators, we define $\divs\xi=\nablas^A\xi_A$, $\curls\xi_A=\epsilons^{AB}\nablas_A\xi_B$ and $(\divs\theta)_A=\nablas^B\theta_{AB}$. We finally define a traceless operator $(\nablas\tensor\xi)_{AB}=(\nablas\xi)_{AB}+(\nablas\xi)_{BA}-\divs\xi \,\gs_{AB}$.


\subsection{Equations}

The following is the first structure equations in the space-time written in a null frame (where  $K$ is the Gauss curvature of $S_{\ub,u}$):\footnote{See Chapter 1 of \cite{Chr} for the derivation of these equations.}
\begin{align*}
\Dh \chih'=-\alpha,&\quad
D\tr\chi'=-\frac{1}{2}\Omega^2(\tr\chi')^2-\Omega^2|\chih'|^2,\\
\Dbh \chibh'=-\alphab,&\quad
\Db\tr\chib'=-\frac{1}{2}\Omega^2(\tr\chib')^2-\Omega^2|\chibh'|^2,\\
D\eta &= \Omega(\chi \cdot\etab-\beta),\\
\Db\etab &= \Omega(\chib \cdot\eta+\betab),\\
D  \omegab &=\Omega^2(2(\eta,\etab)-|\eta|^2-\rho),\\
\Db  \omega &=\Omega^2(2(\eta,\etab)-|\etab|^2-\rho),\\
K&=-\frac{1}{4}\tr \chi\tr\chib+\frac{1}{2}(\chih,\chibh)-\rho,\\
\divs \chih'&=\frac{1}{2}\ds \tr \chi'-\chih'\cdot\eta+\frac{1}{2}\tr \chi'\eta-\Omega^{-1}\beta,\\
\divs \chibh'&=\frac{1}{2}\ds \tr \chib'-\chibh'\cdot\etab+\frac{1}{2}\tr \chib'\etab-\Omega^{-1}\betab,\\
\curls \eta&=\sigma-\frac{1}{2}\chih \wedge\chibh,\\
\curls \etab&= -\sigma + \frac{1}{2}\chih \wedge\chibh,\\
\Dh(\Omega\chibh)&=\Omega^2(\nablas \tensor \etab + \etab \tensor \etab +\frac{1}{2}\tr\chi\chibh-\frac{1}{2}\tr\chib \chih),\\
D(\Omega\tr\chib)&=\Omega^2(2\divs\etab+2|\etab|^2-(\chih,\chibh)-\frac{1}{2}\tr\chi\tr\chib+2\rho),\\
\Dbh(\Omega\chih)&=\Omega^2(\nablas \tensor \eta + \eta \tensor \eta +\frac{1}{2}\tr\chib\chih-\frac{1}{2}\tr\chi \chibh),\\
\Db(\Omega\tr\chi)&=\Omega^2(2\divs\eta+2|\eta|^2-(\chih,\chibh)-\frac{1}{2}\tr\chi\tr\chib+2\rho),\\
\Db\eta&=-\Omega(\chib\cdot\eta+\betab)+2\ds\omegab,\\
D\etab&=-\Omega(\chi\cdot\etab-\beta)+2\ds\omega.
\end{align*}

We also use the null frame to decompose the contracted second Bianchi identity $\nabla^\alpha \mathbf{R}_{\alpha\beta\gamma\delta} = 0$ into components. This leads the following null Bianchi equations:\footnote{See Proposition 1.2 of \cite{Chr}.}
\begin{gather*}
\Dbh\alpha-\frac{1}{2}\Omega\tr\chib \alpha+2\omegab\alpha+\Omega\{-\nablas\tensor\beta -(4\eta+\zeta)\tensor \beta+3\chih \rho+3{}^*\chih \sigma\}=0,\\
\Dh\alphab-\frac{1}{2}\Omega\tr\chi \alphab+2\omega\alphab+\Omega\{\nablas\tensor\betab +(4\etab-\zeta)\tensor \betab+3\chibh \rho-3{}^*\chibh \sigma\}=0,\\
D\beta+\frac{3}{2}\Omega\tr\chi\beta-\Omega\chih\cdot\beta-\omega\beta-\Omega\{\divs\alpha+(\etab+2\zeta)\cdot\alpha\}=0,\\
\Db\betab+\frac{3}{2}\Omega\tr\chib\betab-\Omega\chibh\cdot\betab-\omegab\betab+\Omega\{\divs\alphab+(\eta-2\zeta)\cdot\alphab\}=0,\\
\Db\beta+\frac{1}{2}\Omega\tr\chib\beta-\Omega\chibh \cdot \beta+\omegab \beta-\Omega\{\ds \rho+{}^*\ds \sigma+3\eta\rho+3{}^*\eta\sigma+2\chih\cdot\betab\}=0,\\
D\betab+\frac{1}{2}\Omega\tr\chi\betab-\Omega\chih \cdot \betab+\omega \betab+\Omega\{\ds \rho-{}^*\ds \sigma+3\etab\rho-3{}^*\etab\sigma-2\chibh\cdot\beta\}=0,\\
D\rho+\frac{3}{2}\Omega\tr\chi \rho-\Omega\{\divs \beta+(2\etab+\zeta,\beta)-\frac{1}{2}(\chibh,\alpha)\}=0,\\
\Db\rho+\frac{3}{2}\Omega\tr\chib \rho+\Omega\{\divs \betab+(2\eta-\zeta,\betab)+\frac{1}{2}(\chih,\alphab)\}=0,\\
D\sigma+\frac{3}{2}\Omega\tr\chi\sigma+\Omega\{\curls\beta+(2\etab+\zeta,{}^*\beta)-\frac{1}{2}\chibh\wedge\alpha\}=0,\\
\Db\sigma+\frac{3}{2}\Omega\tr\chib\sigma+\Omega\{\curls\betab+(2\etab-\zeta,{}^*\betab)+\frac{1}{2}\chih\wedge\alphab\}=0.
\end{gather*}

\subsection{Hodge systems and Commutation Formulas}

We will introduce the Hodge systems satisfied by the connection coefficients. We first define $\mu,\mub$ to be
\begin{align*}
\mu=K+\frac{1}{4}\tr\chi\tr\chib-\divs\eta,\\
\mub=K+\frac{1}{4}\tr\chi\tr\chib-\divs\etab.
\end{align*}
And we also define $\kappa,\kappab$ to be
\begin{align*}
\kappa=\Deltas\omega+\divs(\Omega\beta),\\
\kappab=\Deltas\omegab -\divs(\Omega\betab).
\end{align*}

Then we have\footnote{See Chapter 6 of \cite{Chr}.}
\begin{lemma}\label{hodgesystems}
$(\Omega\chih,\Omega\tr\chi)$ satisfies
\begin{align*}
D(\Omega\tr\chi)&=-\frac{1}{2}(\Omega\tr\chi)^2-|\Omega\chih|^2+2\omega(\Omega\tr\chi),\\
\divs (\Omega\chih)&=\frac{1}{2}\ds (\Omega\tr \chi)-\Omega\chih\cdot\eta+\frac{1}{2}\Omega\tr \chi\eta-\Omega\beta.
\end{align*}

$(\Omega\chibh,\Omega\tr\chib)$ satisfies
\begin{align*}
\Db(\Omega\tr\chib)&=-\frac{1}{2}(\Omega\tr\chib)^2-|\Omega\chibh|^2+2\omega(\Omega\tr\chib),\\
\divs (\Omega\chibh)&=\frac{1}{2}\ds (\Omega\tr \chib)-\Omega\chibh\cdot\etab+\frac{1}{2}\Omega\tr \chib\etab+\Omega\betab.
\end{align*}

$(\eta,\mu)$ satisfies
\begin{align*}
&\begin{cases}\divs\eta=-\rho+\frac{1}{2}(\chih,\chibh)-\mu,\\
\curls\eta=\sigma-\frac{1}{2}\chih\wedge\chibh,\end{cases}\\
D\mu=&-\Omega\tr\chi\mu-\frac{1}{2}\Omega\tr\chi\mub-\frac{1}{4}\Omega\tr\chib|\chih|^2+\frac{1}{2}\Omega\tr\chi|\etab|^2\\&+\divs(2\Omega\chih\cdot\eta-\Omega\tr\chi\etab).
\end{align*}

$(\etab,\mub)$ satisfies
\begin{align*}
&\begin{cases}\divs\etab=-\rho+\frac{1}{2}(\chih,\chibh)-\mub,\\
 \curls\etab=-\sigma+\frac{1}{2}\chih\wedge\chibh,\end{cases}\\
\Db\mub=&-\Omega\tr\chib\mub-\frac{1}{2}\Omega\tr\chib\mu-\frac{1}{4}\Omega\tr\chi|\chibh|^2+\frac{1}{2}\Omega\tr\chib|\eta|^2\\&+\divs(2\Omega\chibh\cdot\etab-\Omega\tr\chib\eta).
\end{align*}

$(\omega,\kappa)$ satisfies
\begin{align*}
\Deltas\omega&=\kappa-\divs(\Omega\beta),\\
\Db\kappa+\Omega\tr\chib\kappa&=-2(\Omega\chibh,\nablas^2\omega)+m,
\end{align*}
where
\begin{align*}
m=&-2(\divs(\Omega\chibh),\ds\omega)+\frac{1}{2}\divs(\Omega\tr\chib\cdot\Omega\beta)-(\ds(\Omega^2),\ds\rho)+(\ds(\Omega^2),{}^*\ds\sigma)-\rho\Deltas(\Omega^2)\\
&+\Deltas(\Omega^2(2(\eta,\etab)-|\etab|^2))+\divs(\Omega^2(-\chibh\cdot\beta+2\chih\cdot\betab+3\eta\rho+3{}^*\eta\sigma)).
\end{align*}

$(\omegab,\kappab)$ satisfies
\begin{align*}
\Deltas\omegab& =\kappab+\divs(\Omega\betab),\\
D\kappab+\Omega\tr\chi\kappab&=-2(\Omega\chih,\nablas^2\omegab)+\underline{m},
\end{align*}
where
\begin{align*}
\underline{m}=&-2(\divs(\Omega\chih),\ds\omegab)+\frac{1}{2}\divs(\Omega\tr\chi\cdot\Omega\betab)-(\ds(\Omega^2),\ds\rho)-(\ds(\Omega^2),{}^*\ds\sigma)-\rho\Deltas(\Omega^2)\\
&+\Deltas(\Omega^2(2(\eta,\etab)-|\eta|^2))+\divs(\Omega^2(\chih\cdot\betab-2\chibh\cdot\beta+3\etab\rho-3{}^*\etab\sigma)),
\end{align*}
\end{lemma}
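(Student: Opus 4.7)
The plan is to derive each pair of equations by combining the first structure equations and null Bianchi equations already recorded with the definitions of $\mu$, $\mub$, $\kappa$, $\kappab$, together with the identities $D\log\Omega=\omega$, $\Db\log\Omega=\omegab$, and $\ds\log\Omega=\tfrac12(\eta+\etab)$.

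For the two shear systems I would start from the first structure equations for $\tr\chi'$ and $\chih'$ (and their conjugates) and rescale by the appropriate power of $\Omega$. Writing $\Omega\chih=\Omega^2\chih'$ and $\Omega\tr\chi=\Omega^2\tr\chi'$, the identity $D\log\Omega=\omega$ accounts for the extra $2\omega(\Omega\tr\chi)$ contribution in the transport equation, while the commutator of $\ds$ with the conformal factor $\Omega^2$ (controlled by $\ds\log\Omega$) reorganises the Codazzi equation into the divergence form stated. The $(\Omega\chibh,\Omega\tr\chib)$ equations follow from the entirely analogous computation with $D$, $\eta$, $\beta$ replaced by $\Db$, $\etab$, $-\betab$.

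For the $(\eta,\mu)$ system the divergence equation is obtained by substituting the Gauss equation $K=-\tfrac14\tr\chi\tr\chib+\tfrac12(\chih,\chibh)-\rho$ directly into the definition of $\mu$; the curl equation is already one of the first structure equations. The propagation equation for $\mu$ is then derived by applying $D$ to its definition. The term $DK$ is expanded using the Gauss equation together with the rescaled transport of $\Omega\tr\chi$ just derived, the first structure equation for $\chih$, and the null Bianchi equation for $D\rho$. The term $D\divs\eta$ is handled via the commutation formula for $[D,\divs]$ combined with $D\eta=\Omega(\chi\cdot\etab-\beta)$; the divergence contributions produced this way reassemble into $\divs(2\Omega\chih\cdot\eta-\Omega\tr\chi\etab)$, with the remaining pointwise terms matching the stated right hand side. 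The $(\etab,\mub)$ equations then follow from the $\Db$-conjugate argument.

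For the $(\omega,\kappa)$ system the Poisson equation is a direct rephrasing of the definition of $\kappa$. The transport equation for $\kappa$ is obtained by applying $\Db$ to this definition: the term $\Db\Deltas\omega$ is rewritten via the commutator $[\Db,\Deltas]$, which yields $\Deltas\Db\omega$ plus the principal term $-2(\Omega\chibh,\nablas^2\omega)$ together with lower-order contributions; $\Deltas\Db\omega$ is then computed from the first structure equation $\Db\omega=\Omega^2(2(\eta,\etab)-|\etab|^2-\rho)$, with the extra factor of $\Omega^2$ producing the $-\rho\Deltas\Omega^2$ and $(\ds\Omega^2,\ds\rho)$ terms in $m$. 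The term $\Db\divs(\Omega\beta)$ is treated by commuting $\Db$ past $\divs$ and invoking the null Bianchi equation for $\Db\beta$. The $(\omegab,\kappab)$ system follows by the $D$-conjugate calculation. I expect this last pair to be the main obstacle: careful bookkeeping is needed to verify that every second-order angular derivative of $\omega$ except the principal $-2(\Omega\chibh,\nablas^2\omega)$ gets absorbed into either pure-divergence terms or into the explicit expression $m$ (respectively $\underline{m}$).
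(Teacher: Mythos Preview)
Your proposal is correct and follows the standard derivation. The paper does not supply its own proof of this lemma; it simply cites Chapter~6 of \cite{Chr}, where exactly the kind of computation you outline---rescaling the Raychaudhuri and Codazzi equations by powers of $\Omega$, differentiating the definitions of $\mu,\mub,\kappa,\kappab$, and using the commutation formulas together with the structure and Bianchi equations---is carried out in detail.
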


We denote the first order elliptic operators (or Hodge operators) $\mathcal{D}_1,\mathcal{D}_2$, by\footnote{See \cite{K-R-09}.}
\begin{align*}
&\mathcal{D}_1:\text{tangential one-form $\xi\mapsto$ a pair of functions $(\divs\xi,\curls\xi)$};\\
&\mathcal{D}_2:\text{tangential symmetric trace-free $(0,2)$ type tensorfield $\theta\mapsto$ tangential one-form $\divs\theta$}.
\end{align*}
It is easy to calculate the formal $L^2$ adjoint
\begin{align*}
&^*\mathcal{D}_1:\text{a pair of functions $(f,g)\mapsto$ tangential one-form $-\ds f+{}^*\ds g$};\\
&^*\mathcal{D}_2:\text{tangential one-form $\xi\mapsto$ tangential symmetric trace-free $(0,2)$ type tensorfield $-\frac{1}{2}\nablas\tensor\xi$}.
\end{align*}
We will denote any one of the above elliptic operators (or Hodge operators) and their formal $L^2$ adjoint by $\mathcal{D},{}^*\mathcal{D}$.

We also need the following commutation formula for the estimate of derivatives\footnote{See Chapter 4 of \cite{Chr} for the first group. The second group can be derived directly by the definition of curvature.}.
\begin{lemma}\label{commutator}
Given integer $i$ and tangential tensorfield $\phi$. we have
\begin{align*}
[D,\nablas^i]\phi&=\sum_{j=1}^i\nablas^j(\Omega\chi)\cdot\nablas^{i-j}\phi,\\
[\Db,\nablas^i]\phi&=\sum_{j=1}^i\nablas^j(\Omega\chib)\cdot\nablas^{i-j}\phi,
\end{align*}
and
\begin{align*}
[\mathcal{D},\nablas^i]\phi&=\sum_{j=1}^i\nablas^{j-1}K\cdot\nablas^{i-j}\phi,\\
[^*\mathcal{D},\nablas^i]\phi&=\sum_{j=1}^i\nablas^{j-1}K\cdot\nablas^{i-j}\phi.
\end{align*}
Here we use ``$\cdot$'' to represent an arbitrary contraction with the coefficients by $\gs$ or $\epsilons$. In addition, if $\phi$ is a function, then when $i=1$, all commutators above are zero; when $i\ge2$, all $i$'s are replaced by $i-1$'s in above formulas.
\end{lemma}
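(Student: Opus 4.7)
The plan is to prove all four identities by induction on $i$, each reducing to a first-order commutator that can be computed once and for all from the definitions of $D$, $\Db$, $\nablas$, and the Hodge operators.

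For the first group, the base case $i=1$ rests on the identity $D\gs_{AB}=2(\Omega\chi)_{AB}$, which follows from $\chi_{AB}=g(\nabla_Ae_4,e_B)$, the relation $L=\Omega e_4$, and the definition of $D$ as the tangential projection of $\mathcal{L}_L$. Differentiating the Koszul formula for the induced connection $\Gammas$ along $L$ then yields schematically
\begin{equation*}
D\Gammas^A_{BC}=\gs^{AE}\bigl(\nablas_B(\Omega\chi)_{CE}+\nablas_C(\Omega\chi)_{BE}-\nablas_E(\Omega\chi)_{BC}\bigr),
\end{equation*}
and applying $D$ to the coordinate expression $\nablas\phi=\partial\phi+\Gammas\cdot\phi$ produces $[D,\nablas]\phi=\nablas(\Omega\chi)\cdot\phi$, where ``$\cdot$'' denotes an arbitrary contraction by $\gs$. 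For the induction step I would use
\begin{equation*}
[D,\nablas^{i+1}]\phi=\nablas[D,\nablas^i]\phi+[D,\nablas](\nablas^i\phi),
\end{equation*}
expand the first term by Leibniz using the inductive hypothesis, and re-index to recover the stated schematic sum; the numerical coefficients are absorbed by ``$\cdot$''. The derivation for $\Db$ is identical with $\chi$ replaced by $\chib$.

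For the second group, each Hodge operator $\mathcal{D}$ (and $^*\mathcal{D}$) is a first-order operator expressible as a contraction of $\nablas$ by $\gs$ or $\epsilons$, both of which are parallel on $S_{\ub,u}$; therefore $[\mathcal{D},\nablas^i]\phi$ coincides schematically with $[\nablas,\nablas^i]\phi$ up to index contractions. The crucial input here is that on the two-sphere the full Riemann tensor is determined by the Gauss curvature via $R_{ABCD}=K(\gs_{AC}\gs_{BD}-\gs_{AD}\gs_{BC})$, so $[\nablas_A,\nablas_B]\phi=K\cdot\phi$ schematically. A direct induction on $i$ analogous to the first group then gives
\begin{equation*}
[\nablas,\nablas^i]\phi=\sum_{j=1}^i\nablas^{j-1}K\cdot\nablas^{i-j}\phi,
\end{equation*}
from which the stated formulas for $\mathcal{D}$ and $^*\mathcal{D}$ follow. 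The shift by one in the function case reflects that $[\nablas_A,\nablas_B]f=0$ kills the $j=1$ contribution and shifts every index in the induction.

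The main obstacle is the base case $[D,\nablas]\phi=\nablas(\Omega\chi)\cdot\phi$: one must carefully verify that the variation of the tangential Christoffel symbols under $D$ really is given by the Koszul formula applied to the ``metric variation'' $D\gs=2\Omega\chi$, despite $D$ being defined through a projection of a spacetime Lie derivative rather than an intrinsic one-parameter family of metrics on $S_{\ub,u}$. This is best handled by working in local coordinates adapted to the double null foliation and tracking the action of $D$ on $\Gammas$ directly; once this is in hand, the rest of the argument is pure combinatorial induction.
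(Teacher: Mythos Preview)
Your proposal is correct and follows the standard route the paper points to: the paper does not give a self-contained proof but simply refers to Chapter~4 of \cite{Chr} for the first group and remarks that the second group ``can be derived directly by the definition of curvature,'' which is exactly the Koszul/induction argument and the two-dimensional Ricci identity you outline.
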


\subsection{Basic Inequalities}\label{basicinequalities}

We will introduce some frequently used basic inequalities,, such as Sobolev inequalities, Poincar\'e inequalities, Gronwall type inequalities, and estimates for Hodge systems.

For Sobolev inequalities, we start from isoperimetric inequality: Given a function $f\in W^{1,1}(S_{\ub,u})$ and denoting by $\bar{f}$ the average of $f$ on $S_{\ub,u}$, we have
\begin{align*}
\int_{S_{\ub,u}}(f-\bar{f})^2\D\mu_{\gs}\le I(S_{\ub,u})\left(\int_{S_{\ub,u}}|\ds f|\D\mu_{\gs}\right)^2.
\end{align*}
where $I(S_{\ub,u})$ is the isoperimetric constant. Based on the isoperimetric inequality, we have

\begin{lemma}[Sobolev inequalities, see Section 5.2 of \cite{Chr}]\label{Sobolev}
Given a tangential tensorfield $\phi$, we have for $q\in(2,+\infty)$,
\begin{align*}
\|\phi\|_{L^q(S_{\ub,u})}\le C\sqrt{\max\{I(S_{\ub,u}),1\}}\sum_{i=0}^1 r^{-1+1/q}\|(r\nablas)^i\phi\|_{L^2(S_{\ub,u})},\\
\|\phi\|_{L^\infty(S_{\ub,u})}\le C\sqrt{\max\{I(S_{\ub,u}),1\}}\sum_{i=0}^2 r^{-1}\|(r\nablas)^i\phi\|_{L^2(S_{\ub,u})},
\end{align*}
where $r=r(\ub,u)$ satisfies $4\pi r^2=Area(S_{\ub,u})$ and $C$ is a universal constant.
\end{lemma}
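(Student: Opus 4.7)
The plan is to deduce both inequalities from the isoperimetric inequality stated just above, following the approach of Section~5.2 of \cite{Chr}. A first reduction, via the Kato inequality $|\ds|\phi||\le|\nablas\phi|$ which holds almost everywhere for any tangential tensorfield, brings the problem down to nonnegative scalar functions $f=|\phi|$.

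For the $L^q$ estimate, I would apply the isoperimetric inequality to $f^{q/2}$, which gives
\[
\int_{S_{\ub,u}}\bigl(f^{q/2}-\overline{f^{q/2}}\bigr)^2\D\mu_{\gs}\le I(S_{\ub,u})\biggl(\frac{q}{2}\int_{S_{\ub,u}}f^{q/2-1}|\ds f|\D\mu_{\gs}\biggr)^2.
\]
Combining Cauchy--Schwarz on the right-hand side with the elementary decomposition $\int f^q\le 2\int(f^{q/2}-\overline{f^{q/2}})^2+2\,\mathrm{Area}(S_{\ub,u})\overline{f^{q/2}}^{\,2}$, and H\"older's inequality on the mean, then rearranging, produces a bound of the claimed shape. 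The endpoint $q=4$ falls out directly in the form $\|f\|_{L^4}^4\le C\,I(S_{\ub,u})\|f\|_{L^2}^2\|\ds f\|_{L^2}^2+C\,\mathrm{Area}(S_{\ub,u})^{-1}\|f\|_{L^2}^4$, and the general case $q\in(2,\infty)$ can be obtained either by repeating the calculation or by interpolating between this $L^4$ endpoint and $L^2$ with Poincar\'e.

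For the $L^\infty$ estimate I would iterate: apply the $L^q$ inequality (say with $q=4$) separately to $\phi$ and to $\nablas\phi$, which brings in one more covariant derivative, and then feed the resulting $\|\phi\|_{L^q}+\|\nablas\phi\|_{L^q}$ control into the two-dimensional Sobolev embedding $W^{1,q}(S_{\ub,u})\hookrightarrow L^\infty(S_{\ub,u})$ for $q>2$. This last embedding is itself a consequence of the isoperimetric inequality, via a Morrey-type argument applied to $f^p$ in the limit $p\to\infty$, so the whole proof stays inside the framework of the one underlying inequality.

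The weights in $r$ are produced by the conformal rescaling $\widetilde{\gs}=r^{-2}\gs$ that normalizes $S_{\ub,u}$ to unit area: the isoperimetric constant $I(S_{\ub,u})$ is invariant under this rescaling, while $\D\mu_{\gs}=r^2\D\mu_{\widetilde{\gs}}$ and $|\nablas\phi|_{\gs}=r^{-1}|\widetilde\nabla\phi|_{\widetilde{\gs}}$. Proving the scale-invariant inequalities on the unit sphere and translating back yields exactly the stated weighted forms. The point to watch is that every constant must depend on $I(S_{\ub,u})$ alone and on no other differential-geometric quantity of $S_{\ub,u}$ (such as Gauss curvature bounds), since in the subsequent bootstrap for the main theorem $I(S_{\ub,u})$ is controlled a priori while higher-order geometric information is only recovered a posteriori; this is precisely why one works directly from the isoperimetric inequality rather than, say, from a Bochner identity.
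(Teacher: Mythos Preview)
The paper does not give its own proof of this lemma; it simply cites Section~5.2 of \cite{Chr}, and your proposal follows exactly that approach. Your outline---reduction to scalars via Kato, applying the isoperimetric inequality to $f^{q/2}$, iterating for the $L^\infty$ bound, and reading off the $r$-weights by conformal rescaling to unit area---is the standard argument and is correct.
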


Applying H\"older inequality on the right hand side of the isoperimetric inequality, we have:
\begin{lemma}[Poincar\'e inequality]\label{poincare}
Given a function $f$, we have
\begin{align*}
\|f-\bar{f}\|_{L^2(S_{\ub,u})}\le C\sqrt{I(S_{\ub,u})}\|(r\nablas)f\|_{L^2(S_{\ub,u})}.
\end{align*}
\end{lemma}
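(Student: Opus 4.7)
The statement to prove is essentially a consequence of the isoperimetric inequality already recalled immediately above the lemma, so my plan is just to apply the Cauchy--Schwarz (H\"older with $p=q=2$) inequality to the gradient integral on the right-hand side and express the result in terms of the area of $S_{\ub,u}$.

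More concretely, starting from
\[
\int_{S_{\ub,u}}(f-\bar{f})^2\,\D\mu_{\gs}\le I(S_{\ub,u})\left(\int_{S_{\ub,u}}|\ds f|\,\D\mu_{\gs}\right)^2,
\]
I would bound
\[
\int_{S_{\ub,u}}|\ds f|\,\D\mu_{\gs}\le\bigl(\mathrm{Area}(S_{\ub,u})\bigr)^{1/2}\left(\int_{S_{\ub,u}}|\ds f|^2\,\D\mu_{\gs}\right)^{1/2}
\]
by Cauchy--Schwarz. Using the definition $4\pi r^2=\mathrm{Area}(S_{\ub,u})$ and the fact that for a function $|\ds f|=|\nablas f|$, the right-hand side becomes $\sqrt{4\pi}\,r\,\|\nablas f\|_{L^2(S_{\ub,u})}$. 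Substituting back and taking square roots yields
\[
\|f-\bar{f}\|_{L^2(S_{\ub,u})}\le\sqrt{4\pi\,I(S_{\ub,u})}\,\|(r\nablas)f\|_{L^2(S_{\ub,u})},
\]
which is the claimed inequality with $C=\sqrt{4\pi}$.

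There is essentially no obstacle here: the isoperimetric inequality does the real work, and the only ingredients needed are Cauchy--Schwarz and the normalization $4\pi r^2=\mathrm{Area}(S_{\ub,u})$ that was fixed in the statement of Lemma~\ref{Sobolev}. So I would present it as a one-line computation rather than a structured multi-step argument.
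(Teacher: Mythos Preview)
Your proposal is correct and matches the paper's approach exactly: the paper states the lemma immediately after the isoperimetric inequality with the remark ``Applying H\"older inequality on the right hand side of the isoperimetric inequality, we have,'' and your Cauchy--Schwarz step is precisely that application.
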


We also need the following Gronwall type estimates:
\begin{lemma}[Gronwall type estimates, see Chapter 4 of \cite{Chr} or Chapter 4 of \cite{K-N}]\label{evolution}
Assume that on the outgoing null cone $C_u$, the parameter $\ub\in[\ub_0,\ub_1]\subset[0,+\infty)$, $C^{-1}(\ub+1)\le r\le C(\ub+1)$ for some universal $C$, and
\begin{align*}
C^{-1}r^{-3/2}\le\Omega|\chih|+\left|\Omega\tr\chi-\overline{\Omega\tr\chi}\right|\le Cr^{-3/2},
\end{align*}
Then, for a~$(0,s)$ type tengential tensorfield $\phi$, $2\le q\le+\infty$, and any real $\nu$, we have
\begin{align*}
&\|r^{s-\nu-2/q}\phi\|_{L^q(S_{\ub,u})}\\
\le& C_{q,\nu,s}\left(\|r^{s-\nu-2/q}\phi\|_{L^q(S_{\ub_0,u})}+\int_{\ub_0}^{\ub}\|r^{s-\nu-2/q}(D\phi-\frac{\nu}{2}\Omega\tr\chi\phi)\|_{L^q(S_{\ub',u})}\D\ub'\right).
\end{align*}
It is also true if we replace the first term of the right hand side by taking value on $S_{\ub_1,u}$ and the integral of the second term by integrating from $\ub_1$ to $\ub$.

Assume that on the incoming null cone $\Cb_{\ub}$ with the parameter $u\in[0,\varepsilon]$, we have
\begin{align*}
C^{-1}\le\Omega|\chibh|+\left|\Omega\tr\chib\right|\le C.
\end{align*}
Then for an arbitrary tangential tensorfield $\phi$, $2\le q\le+\infty$, we have
\begin{align*}
\|\phi\|_{L^q(S_{\ub,u})}
\le C_{q}\left(\|\phi\|_{L^q(S_{\ub,0})}+\int_{0}^{u}\|\Db\phi\|_{L^q(S_{\ub,u'})}\D u'\right).
\end{align*}
\end{lemma}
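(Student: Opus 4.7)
The strategy for both parts is to differentiate the relevant $L^q$ norm along the null direction and close via Gronwall. The pointwise ingredients are the area-form transport $D(d\mu_{\gs}) = \Omega\tr\chi\, d\mu_{\gs}$, the inverse-metric variation $D\gs^{AB} = -2(\Omega\chih)^{AB} - (\Omega\tr\chi)\gs^{AB}$, and the consequence $D\log r = \tfrac{1}{2}\overline{\Omega\tr\chi}$, obtained by differentiating $4\pi r^2 = \mathrm{Area}(S_{\ub,u})$.

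For the outgoing estimate, set $\lambda = s - \nu - 2/q$ and $\psi = r^\lambda\phi$. A direct computation using the above ingredients gives
\begin{align*}
D|\phi|^q = q\bigl(\phi,\, D\phi - \tfrac{\nu}{2}\Omega\tr\chi\,\phi\bigr)|\phi|^{q-2} + F[\Omega\chih,\phi]\,|\phi|^{q-2} + \tfrac{(\nu-s)q}{2}\Omega\tr\chi\,|\phi|^q,
\end{align*}
where $|F[\Omega\chih,\phi]| \le C_s\,|\Omega\chih|\,|\phi|^2$. Assembling $\tfrac{d}{d\ub}\int|\psi|^q d\mu_{\gs}$ from this together with $Dr^{\lambda q} = \tfrac{\lambda q}{2}\overline{\Omega\tr\chi}\,r^{\lambda q}$ and the area-form transport, the coefficient $\tfrac{(\nu-s)q}{2}+1$ in front of $\Omega\tr\chi$ and the coefficient $\tfrac{\lambda q}{2}$ in front of $\overline{\Omega\tr\chi}$ are exact negatives—the algebraic identity forced by the choice $\lambda = s-\nu-2/q$—so they recombine into a term proportional to $(\Omega\tr\chi - \overline{\Omega\tr\chi})|\psi|^q$. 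Applying H\"older's inequality and the hypothesis $\Omega|\chih|+|\Omega\tr\chi - \overline{\Omega\tr\chi}|\le Cr^{-3/2}$ then yields
\begin{align*}
\frac{d}{d\ub}\|\psi\|_{L^q(S_{\ub,u})} \le \|r^{\lambda}(D\phi - \tfrac{\nu}{2}\Omega\tr\chi\,\phi)\|_{L^q(S_{\ub,u})} + C_{q,\nu,s}\,r^{-3/2}\|\psi\|_{L^q(S_{\ub,u})}.
\end{align*}
Since $r\sim\ub+1$, the integral $\int r^{-3/2}d\ub'$ is uniformly bounded, so Gronwall delivers the stated inequality; integrating backward from $\ub_1$ handles the alternative direction, and $q=\infty$ follows either by passing to the limit or by a parallel pointwise argument along integral curves of $L$.

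The incoming estimate follows the same template but is structurally simpler: no weight is needed, the analogous bad terms $\Omega\chibh$ and $\Omega\tr\chib$ are bounded pointwise by $C$ rather than by a decaying factor, and because $u$ runs over the short interval $[0,\varepsilon]$, Gronwall yields a constant $C_q$ depending only on $q$ and the structural constant $C$—in particular independent of $\varepsilon$. The main obstacle is the cancellation indicated above in the outgoing case: without the precise match between $\tfrac{\lambda q}{2}$ and $-\bigl(\tfrac{(\nu-s)q}{2}+1\bigr)$, a residual $\Omega\tr\chi$ term of size $r^{-1}$ would survive, whose integral in $\ub$ is only logarithmically bounded and would destroy uniformity as $\ub_1\to\infty$. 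Beyond this observation, the argument is routine bookkeeping together with the scalar Gronwall inequality.
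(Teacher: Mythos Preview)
The paper does not supply its own proof of this lemma; it is stated with a citation to Chapter~4 of \cite{Chr} and \cite{K-N}, and later the text explicitly says ``We will not prove them here. They can be found in Chapter 4, 5 and 7 in \cite{Chr} or in \cite{Luk}.'' Your argument is correct and is precisely the standard one from those references: differentiate $\int r^{\lambda q}|\phi|^q\,d\mu_{\gs}$ in $\ub$, use the transport of the area form and inverse metric, and observe that the specific choice $\lambda = s-\nu-2/q$ forces the coefficients of $\Omega\tr\chi$ and $\overline{\Omega\tr\chi}$ to be exact negatives, leaving only the integrable $r^{-3/2}$ error before applying Gronwall. Your emphasis on this cancellation is exactly the point of the weighted estimate, and the incoming case is, as you say, the trivial unweighted version over a finite interval.
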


Finally, we need the following elliptic estimates:
\begin{lemma}[Elliptic estimates for Hodge systems, see Chapter 7 of \cite{Chr}]\label{elliptic}Assume that $\theta$ is a tangential symmetric trace-free $(0,2)$ type tensorfield with
$$\divs\theta=f,$$
where $f$ is a tangential one-form. Then for $i\ge1$, we have
\begin{align*}
\|(r\nablas)^{i}\theta\|_{L^2(S_{\ub,u})}\le C_K\left(\sum_{j=0}^{i-1}\|(r\nablas)^j(rf)\|_{L^2(S_{\ub,u})}+\|\theta\|_{L^2(S_{\ub,u})}\right).
\end{align*}
Given a tangential one-form $\xi$ with
\begin{equation*}
\divs\xi=f, \quad \curls\xi=g,
\end{equation*}
we have for $i\ge1$, 
\begin{align*}
\|(r\nablas)^i\xi\|_{L^2(S_{\ub,u})}\le C_K\left(\sum_{j=0}^{i-1}\|(r\nablas)^j(rf)\|_{L^2(S_{\ub,u})}+\|(r\nablas)^j(rg)\|_{L^2(S_{\ub,u})}+\|\xi\|_{L^2(S_{\ub,u})}\right).
\end{align*}
Here $C_K$ depends on $i$, $r\|(r\nablas)^{\le \max\{i-2,0\}} K\|_{L^2(S_{\ub,u})}$ and the  Sobolev constant (which depends on the isoperimetric constant $I(S_{\ub,u})$). 
\end{lemma}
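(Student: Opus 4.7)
The plan is to prove both estimates by induction on $i\ge 1$, with the base case handled by a Bochner-type identity obtained from integration by parts on the closed surface $S_{\ub,u}$, and the inductive step by commuting tangential derivatives into the Hodge system using Lemma~\ref{commutator}.

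For $i=1$, I pair the Hodge equation against itself and integrate by parts. Since on a 2-surface the Riemann tensor is entirely determined by the Gauss curvature, $R_{ABCD}=K(\gs_{AC}\gs_{BD}-\gs_{AD}\gs_{BC})$, the commutator of two covariant derivatives applied to any tensor produces exactly $K$ times an algebraic expression, and a direct computation yields
\begin{equation*}
2\int_{S_{\ub,u}}|\divs\theta|^2\,\D\mu_{\gs} = \int_{S_{\ub,u}}|\nablas\theta|^2\,\D\mu_{\gs}+2\int_{S_{\ub,u}}K|\theta|^2\,\D\mu_{\gs}
\end{equation*}
for trace-free symmetric 2-tensors, and the analogous identity
\begin{equation*}
\int_{S_{\ub,u}}\bigl(|\divs\xi|^2+|\curls\xi|^2\bigr)\,\D\mu_{\gs} = \int_{S_{\ub,u}}|\nablas\xi|^2\,\D\mu_{\gs}+\int_{S_{\ub,u}}K|\xi|^2\,\D\mu_{\gs}
\end{equation*}
for 1-forms. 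Multiplying by $r^2$ and controlling the curvature term via H\"older together with the Sobolev inequality of Lemma~\ref{Sobolev}, then absorbing a small fraction of $\|r\nablas\theta\|_{L^2}^2$ into the left side via Young's inequality, delivers the base estimate with $C_K$ depending on $r\|K\|_{L^2}$ and the Sobolev (hence isoperimetric) constant.

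For the inductive step, assume the estimates hold up to order $i-1$. Applying the second group of commutation formulas from Lemma~\ref{commutator},
\begin{equation*}
\divs(\nablas^{i-1}\theta)=\nablas^{i-1}f+\sum_{j=1}^{i-1}\nablas^{j-1}K\cdot\nablas^{i-1-j}\theta,
\end{equation*}
so that (after algebraically splitting $\nablas^{i-1}\theta$ into trace and trace-free parts on its symmetric indices) this tensor satisfies a first-order Hodge-type system with a controlled source. The base case applied to $\nablas^{i-1}\theta$ then yields the desired bound on $\|r\nablas^i\theta\|_{L^2}$ once the commutator products $\nablas^{j-1}K\cdot\nablas^{i-1-j}\theta$ are bounded in $L^2$. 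I split the sum according to the derivative count: whenever two Sobolev derivatives of $\nablas^{j-1}K$ remain within the allowed range $(r\nablas)^{\le i-2}K$, I place $K$ in $L^\infty$ and the $\theta$-factor in $L^2$ via the induction hypothesis; in the borderline cases I use the $L^4\times L^4$ version of H\"older and Sobolev; otherwise the roles are reversed, with the $\theta$-factor placed in $L^\infty$ through the inductive estimate at lower order. Iterating the induction collapses all intermediate $\|(r\nablas)^{<i}\theta\|_{L^2}$ terms into the single $\|\theta\|_{L^2}$ on the right.

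The main obstacle is the careful bookkeeping of the $K$-dependence: the splitting of the commutator sum must be arranged so that no derivative of $K$ of order greater than $i-2$ ever appears in $C_K$, and so that no derivative of $\theta$ of order $\ge i$ reappears on the right after Young's absorption, which would spoil the induction. A secondary subtlety is that $\nablas^{i-1}\theta$ is a tangential $(0,i+1)$-tensor rather than a trace-free symmetric 2-tensor, so the base Bochner identity must either be re-derived componentwise for a general tangential tensorfield (a routine computation with the same $K$-dependence) or, equivalently, the argument recast via the self-adjoint composition $(\mathcal{D}^*\mathcal{D})^k$; both routes lead to the same final inequality.
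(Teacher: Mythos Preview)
The paper does not supply its own proof of this lemma; it is stated with a reference to Chapter~7 of Christodoulou's monograph \cite{Chr}. Your outline---Bochner identity for the base case, then induction via the commutator formulas of Lemma~\ref{commutator}---is the standard route and matches what one finds in the cited source.

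One point in your base case deserves care. If you bound the curvature term by H\"older and then apply Lemma~\ref{Sobolev} in its stated (additive) form, you obtain
\[
r^2\Bigl|\int_{S_{\ub,u}} K|\theta|^2\Bigr| \lesssim r\|K\|_{L^2}\bigl(\|(r\nablas)\theta\|_{L^2}^2 + \|\theta\|_{L^2}^2\bigr),
\]
and the coefficient $r\|K\|_{L^2}$ in front of $\|(r\nablas)\theta\|_{L^2}^2$ cannot be made small (Gauss--Bonnet forces $r\|K\|_{L^2}\ge 2\sqrt{\pi}$), so there is no product for Young's inequality to split. What you actually need is the interpolation (Ladyzhenskaya) form
\[
\|\theta\|_{L^4}^2 \lesssim r^{-1}\|\theta\|_{L^2}\bigl(\|(r\nablas)\theta\|_{L^2}+\|\theta\|_{L^2}\bigr),
\]
which follows from the isoperimetric inequality applied to $|\theta|^2$ together with Kato's inequality $|\nablas|\theta||\le|\nablas\theta|$. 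This produces a genuine product $\|\theta\|_{L^2}\cdot\|(r\nablas)\theta\|_{L^2}$, to which Young's inequality applies with an $\epsilon$ of your choosing; the resulting constant then depends on $r\|K\|_{L^2}$ as claimed. With this adjustment the base case closes, and your inductive scheme (with the acknowledged need to extend the Bochner identity to higher-rank tangential tensors, or equivalently to iterate $\mathcal{D}^*\mathcal{D}$) goes through.
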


\section{Main Theorem and Structure of the Proof}

\subsection{Definition of Various Quantities and the Statement of Main Theorem}\label{maintheoremsection}

We first define several norms of the connection coefficients and curvature components adapted to our problem. In many cases, we use $\Gamma$ to denote one of the connection coefficients, say $\chih$, $\tr\chi$, $\chibh$, $\tr\chib$, $\eta$, $\etab$, $\omega$ and $\omegab$ up to a multiple by $\Omega$. In fact, $\Gamma$ will share all estimates with $\Omega\Gamma$ because $\Omega$ will be close to $1$ up to derivatives. We also use $\Gamma$ to denote $\Omega\tr\chi-\overline{\Omega\tr\chi}$ in some cases. We emphasize that $\Omega\tr\chi$ and $\Omega\tr\chi-\overline{\Omega\tr\chi}$ should be treated in a completely different way. We also use $R$ to denote one of the curvature components of $\beta$, $\rho$, $\sigma$, $\betab$ and $\underline{R}$ to denote one of the curvature components of $\rho$, $\sigma$, $\betab$ and $\alphab$. For the same component, that we use the underline or not depends on different roles the component playing in the equations. We assign a number $p$ to every connection and curvature components to denote their expected decay rate about $r$:
\begin{equation}\label{definitionp}
\begin{split}
&p(\chih, \Omega\tr\chi-\overline{\Omega\tr\chi}, \eta, \etab)=2,\ p(\chibh, \tr\chi, \tr\chib, \omegab)=1,\ p(\omega)=3,\\
&p(\beta)=4,\ p(\rho,\sigma)=3,\ p(\betab)=2,\ p(\alphab)=1.
\end{split}
\end{equation}
We will use $\Gamma_p$, $R_p$ and $\underline{R}_p$ to denote one of the quantities with  $p$ as the number assigned. The above definitions also valid up to a multimle by $\Omega$.  

Let $r$ be the area radius defined by $4\pi r(\ub,u)^2=Area(S_{\ub,u})$. For the null connection coefficients, we denote
\begin{align*}
\mathcal{O}^{0,\infty}[\Gamma_p]&=\sup_{\ub,u}\|r^p\Gamma_p\|_{L^\infty(S_{\ub,u})},\\
\mathcal{O}^{i,4}[\Gamma_p]&=\sup_{\ub,u}\|r^{p-1/2}(r\nablas)^i\Gamma_p\|_{L^4(S_{\ub,u})} \text{ for } i\le1,\\
\mathcal{O}^{i,2}[\Gamma_p]&=\sup_{\ub,u}\|r^{p-1}(r\nablas)^i\Gamma_p\|_{L^2(S_{\ub,u})} \text{ for } i\le2.
\end{align*}
We denote $\mathcal{O}[\Gamma]$ be the sum over all norms of $\Gamma$. We also use $\mathcal{O}^{0,\infty}$, $\mathcal{O}^{i,4}$, $\mathcal{O}^{i,2}$ and $\mathcal{O}$ to denote the corresponding sum over all $\Gamma$. For curvature components, we denote
\begin{align*}
\mathcal{R}[R_p]=\sup_{u}\sum_{i=0}^2\|r^{p-2}(r\nablas)^iR_p\|_{L^2(C_{u})},\\
\underline{\mathcal{R}}[\underline{R}_p]=\sup_{\ub}\sum_{i=0}^2\|r^{p-1}(r\nablas)^i\underline{R}_p\|_{L^2(\Cb_{\ub})}
\end{align*}
and $\mathcal{R}$, $\underline{\mathcal{R}}$ denote the sum over all components $R$ or $\underline{R}$.

As discussed in the introduction, we consider double characteristic initial data problem of vacuum Einstein equations, where the initial data are given on two null cones $C_0$ and $\Cb_0$ intersecting at a sphere $S_{0,0}$, where $C_0$ is complete towards to the future. Precisely, We will prove the following main theorem in this paper:
\begin{theorem}[Main Theorem]\label{maintheorem}
Suppose that we have two intersection null cones $C_0$ and $\Cb_0$ where $C_0$ is an outgoing null cone extended to infinity and $\Cb_0$ an incoming null cone, and $S_0=\Cb_0\bigcap C_0$ is a two-sphere. Suppose also $C_0$ and $\Cb_0$ are foliated by affine sections which are labelled by two functions $s$ and $\underline{s}$. Let $\Lambda(s)$ and $\lambda(s)$ be the larger and smaller eigenvalue of $r(s)^2\gs|_{S_{s,0}}$ with respect to $r(0)^2\gs|_{S_{0,0}}$, and $r(s)$ be the area radius defined by $4\pi r(s)^2=Area(S_{s,0})$. If the initial data given on $\Cb_0\bigcup C_0$ satisfy the following:
\begin{align*}
&C_{\lambda,\Lambda}^{-1}\le\lambda(s)\le\Lambda(s)\le C_{\lambda,\Lambda},\quad C_r^{-1}(1+s)\le r(s)\le C_r(1+s),\\
\mathcal{O}_0\triangleq&\sup_s\sum_{\Gamma_p\neq\omegab}\left(\|r^p\Gamma_p\|_{L^\infty(S_{s,0})}+\sum_{i=0}^1\|r^{p-1/2}(r\nablas)^i\Gamma_p\|_{L^4(S_{s,0})}+\sum_{i=0}^2\|r^{p-1}(r\nablas)^i\Gamma_p\|_{L^2(S_{s,0})}\right)\\
&+\|(r\nablas)^3(\etab,\chih)\|_{L^2(C_0)}+\sup_s\|(r\nablas)^3\tr\chib\|_{L^2(S_{s,0})}+C_{\lambda,\Lambda}+C_r<\infty\\
\mathcal{R}_0\triangleq&\sum_{R_p}\sum_{i=0}^2\|r^{p-2}(r\nablas)^iR_p\|_{L^2(C_0)}+\|\Db^2\betab\|_{L^2(C_0)}\\&+
\sum_{i=0}^1\left(\|(r\nablas)^i\Db\betab\|_{L^2(C_0)}+\sup_s\|r^{p-1}(r\nablas)^iR_p\|_{L^2(S_{s,0})}\right)\\
&+\sum_{i=0}^1\left(\sup_s\|r^{1/2}\Db\alphab, (r\nablas)^i(r^{1/2}\alphab,r^{3/2}\betab)\|_{L^4(S_{s,0})}\right)<\infty,\\ \underline{\mathcal{R}}_0\triangleq&\sum_{\underline{R}_p}\sum_{i=0}^2\|r^{p-1}(r\nablas)^i\underline{R}_p\|_{L^2(\Cb_0)}+\sum_{i=0}^1\|(r\nablas)^i\Db\alphab\|_{L^2(\Cb_0)}+\|\Db^2\alphab\|_{L^2(\Cb_0)}<\infty.
\end{align*} 
Then there exists an $\varepsilon>0$ depends on $\mathcal{O}_0$, $\mathcal{R}_0$, $\underline{\mathcal{R}}_0$ and a global optical function $u$ such that the solution of vacuum Einstein equations exists in a global double null foliation
$0\le\ub<+\infty$, $0\le u\le\varepsilon$. In addition, $\Omega\to1$ uniformly on every $C_u$ when $\ub\to+\infty$, which means that the function $u$ on $\Cb_{\ub}$ tends to the affine parameter of the null generators of $\Cb_{\ub}$ when $\ub\to+\infty$.
\end{theorem}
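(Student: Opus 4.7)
The plan is to prove the Main Theorem by a bootstrap argument on the norms $\mathcal{O}$, $\mathcal{R}$, $\underline{\mathcal{R}}$ defined in Section \ref{maintheoremsection}. First I would fix a bootstrap region $\{0\le\ub<\infty,\ 0\le u\le\varepsilon^*\}$ for some $\varepsilon^*\le\varepsilon$ on which a smooth solution exists, and assume a priori bounds
\begin{equation*}
\mathcal{O}\le A_0,\qquad \mathcal{R}+\underline{\mathcal{R}}\le A_1
\end{equation*}
for constants $A_0,A_1$ depending only on the initial quantities $\mathcal{O}_0,\mathcal{R}_0,\underline{\mathcal{R}}_0$. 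The goal is to improve these bounds (say by a factor of $2$) provided $\varepsilon$ is chosen small depending on $A_0,A_1$; a standard continuity argument then opens the region to the full $0\le u\le\varepsilon$. The argument splits cleanly into connection-coefficient estimates from the null structure equations, curvature estimates from the null Bianchi equations, and the construction of the global optical function.

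Stage one is to improve $\mathcal{O}$ from the structure equations of Lemma \ref{hodgesystems}. Coefficients governed by a $D$-equation ($\chih$, $\Omega\tr\chi-\overline{\Omega\tr\chi}$, $\eta$, $\omega$) are controlled by the Gronwall-type estimate of Lemma \ref{evolution} integrated along the outgoing direction from the initial data on $C_0$, with the $r$-weights in $\mathcal{O}$ chosen to match the expected decay \eqref{definitionp} under $D\phi-\tfrac{\nu}{2}\Omega\tr\chi\phi$. Coefficients governed by a $\Db$-equation ($\chibh$, $\tr\chib$, $\etab$, $\omegab$) are controlled by integrating along the incoming direction from $\Cb_0$ through $u\in[0,\varepsilon]$: crucially, every nonlinear source term in the $\Db$ structure equations contains at least one factor in the first group, so the $u$-integration produces an $\varepsilon$-factor which absorbs arbitrary polynomial dependence on $A_0,A_1$. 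The top-order $(r\nablas)^2$ derivatives are recovered elliptically from Lemma \ref{elliptic} applied to the Hodge systems $(\divs,\curls)\eta$, $(\divs,\curls)\etab$, $\divs\Omega\chih'$, $\divs\Omega\chibh'$ using the mass aspect functions $\mu,\mub,\kappa,\kappab$ whose transport equations are the second-order equations of Lemma \ref{hodgesystems}.

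Stage two is the curvature energy estimate, which is the heart of the argument. I would contract the null Bianchi identities against appropriate $r$-weighted curvature multiplets and integrate by parts on a parallelogram bounded by two outgoing and two incoming cones. The pairings are the standard ones $(\alpha,\beta)$, $(\beta,\check\rho,\check\sigma)$, $(\check\rho,\check\sigma,\betab)$, $(\betab,\alphab)$, but with $\rho,\sigma$ \emph{replaced throughout by their renormalized versions}
\begin{equation*}
\check\rho=\rho-\tfrac{1}{2}(\chih,\chibh),\qquad \check\sigma=\sigma+\tfrac{1}{2}\chih\wedge\chibh,
\end{equation*}
following \cite{L-R1}. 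This renormalization removes the $(\chibh,\alpha)$ and $\chibh\wedge\alpha$ source terms from the $D\rho,D\sigma$ Bianchi equations, so that $\alpha$ never enters the $L^2(C_u)$ flux of the triple $(\beta,\check\rho,\check\sigma)$ on outgoing cones; this is essential because $\alpha$ on the complete cone $C_0$ is not controlled by $\mathcal{R}_0$. Commutation with $\nablas^i$ via Lemma \ref{commutator} and two more stages of integration by parts give the full $\mathcal{R},\underline{\mathcal{R}}$ bounds, with every error term either of ``borderline'' type (controlled by Gronwall in $\ub$ with the $r$-weights of \eqref{definitionp} matching) or of ``cubic'' type (absorbed by $\varepsilon$).

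Stage three is the construction of the global optical function $u$ by the last-slice/canonical-foliation argument of \cite{K-N}: on a family of finite outgoing cones $C_{u_*}$ with $u_*\to\varepsilon$ prescribe the sections $S_{\ub,u_*}$ as solutions of the Hawking-type elliptic equation fixing $\overline{\Omega\tr\chi}$, solve $\Lb u=0$ backward to build $u$ on the bootstrap region, then pass to the limit using the uniform estimates from the first two stages; the statement $\Omega\to 1$ along $C_u$ as $\ub\to\infty$ falls out of the $\Db$-equation for $\omega=D\log\Omega$ whose forcing is integrable in $\ub$. The main obstacle I anticipate is Stage two: balancing the $r$-weights so that (i) boundary terms at $\ub=\infty$ on $C_u$ vanish or are controlled, (ii) the large non-small data on $C_0$ do not appear undifferentiated-in-$u$ in any borderline term, and (iii) every appearance of $\alpha$ in the Bianchi system can be routed through the renormalization or paid for by a coefficient carrying a $\Db$-equation so that an $\varepsilon$-factor is available. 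This is exactly where the reductive structure of the Einstein equations, together with the $\check\rho,\check\sigma$ renormalization, must be exploited simultaneously, and verifying that it closes at the top order $(r\nablas)^2$ is the delicate point.
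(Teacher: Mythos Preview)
Your bootstrap architecture has a genuine gap. You propose to work on a region $\{0\le\ub<\infty,\ 0\le u\le\varepsilon^*\}$ and extend in $u$, but existence for all $\ub<\infty$ at \emph{any} positive $u$ is precisely the content of the theorem; local existence results (Rendall, Luk) only give a neighbourhood of $C_0\cup\Cb_0$ whose $u$-width may shrink to zero as $\ub\to\infty$. The paper instead bootstraps in $\ub$: one fixes $\varepsilon$ once and for all, works on the finite slab $0\le\ub\le\ub_*$, $0\le u\le\varepsilon$, and shows $\ub_*=\sup\mathcal{A}_{\varepsilon,\Delta}=+\infty$ by extending the last slice from $\Cb_{\ub_*}$ to $\Cb_{\ub_*+\delta}$.

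This difference is not cosmetic, because the canonical foliation must be woven into the bootstrap rather than constructed afterwards. The coefficient $\omegab$ satisfies a $D$-equation, $D\omegab=\Omega^2(2(\eta,\etab)-|\eta|^2-\rho)$, not a $\Db$-equation as you wrote, so it cannot be integrated from $C_0$; to integrate it in $\ub$ you need data on some incoming cone. The paper imposes the canonical condition \eqref{lastslice} on the current last slice $\Cb_{\ub_*}$, derives from it the elliptic equation \eqref{lastomegab} for $\omegab$ on $\Cb_{\ub_*}$, and then integrates the $D$-equation backwards from $\Cb_{\ub_*}$. Without this mechanism there is no way to control $\mathcal{O}[\omegab]$ with the correct $r$-decay. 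Relatedly, your Stage~1 assignment is partly inverted: in the paper \emph{all} of $\chib',\etab,\omega$ \emph{and} $\Omega\chi,\eta,\Omega\tr\chi-\overline{\Omega\tr\chi}$ are estimated via their $\Db$-equations integrated over the short interval $u\in[0,\varepsilon]$ from $C_0$ (that is where the $\varepsilon$-factor comes from), and only $\omegab$ uses a $D$-equation from the last slice. Finally, your Stage~3 has the last-slice construction on outgoing cones $C_{u_*}$ with $u_*\to\varepsilon$; it should be on incoming cones $\Cb_{\ub_*}$ with $\ub_*\to\infty$, and the global optical function is obtained as the limit of the functions $u$ canonical on $\Cb_{\ub_*}$.

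Your Stage~2 description (renormalized $\check\rho,\check\sigma$ to eliminate $\alpha$, weighted energy estimates on the Bianchi pairs, Gronwall in $\ub$ for the borderline terms) is correct and matches the paper's Proposition~\ref{curvaturecompletenullcone}.
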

\begin{remark}
The last statement of the above theorem shows that the global optical function $u$ can be viewed as the retarted time defined on the future null infinity.
\end{remark}
\begin{remark}
Notice that the initial quantities are written in the sections $s$ and $\underline{s}$. Because we are solving Einstein equations to the future null infinity, the global optical function $u$ is not coincide with the function $\underline{s}$ on the initial incoming null cone $\Cb_0$ in general. In constructing the space-time, the difference between two different foliations on $\Cb_0$ can be controlled. This implies that, although the optical function $\ub$ is constructed such that it coincides with $s$ on $C_0$, but the lapse $\Omega$ may change. In other words, the vectorfield $L$ restricted on $C_0$ and $\Lb'$ restricted on $\Cb_0$ is invariant in our whole construction, but $L'$ restricted on $C_0$ and $\Lb$ restricted on $\Cb_0$ is not.
\end{remark}

\subsection{The Characteristic Initial Data}

Recall from \cite{Ch-K} and \cite{K-N} that for any strongly asymptotically flat Cauchy data, there exists a region $\Omega_0$ with compact closure such that the boundary of the causal future of $\Omega_0$ in the maximal development consists of complete null generators. The boundary of the causal future of $\Omega_0$ satisfies the assumptions on the outgoing null cone of our Theorem \ref{maintheorem}. 

In this subsection, we will also give a brief review on how to specify the characteristic initial data on two intersected null cones and show a trivial extension of the initial data in \cite{Chr} satisfies the assumption of Theorem \ref{maintheorem} and serves as a simple example to our setting. One can see the full details of specifying arbitrary characteristic initial data in for example \cite{Chr}, \cite{Ren}, \cite{Luk}, or \cite{C-N05} for the infinity case.

Geometrically, the initial data on the null cone $C_0$ consist of the conformal geometry of $C_0$, which means that given a family of spherical sections on $C_0$ (usually the affine sections) which are parameterized by a function $s$, one needs to specify a family of metrics $\widehat{\gs}(s)$ on the sections and then the actual metrics on the sections of $C_0$ are given by $\gs(s)=\phi^2(s)\widehat{\gs}(s)$ where the conformal factor $\phi(s)$ is determined by $\widehat{\gs}(s)$, see for example \cite{Chr}. In practice, we will usually impose initial condition on the shear $\chih(s)$, which is the derivative of the conformal geometry, see for example \cite{K-R-09}. Similarly, the initial data on the null cone $\Cb_0$ also consist of the conformal geometry of $\Cb_0$ and we usually specify the shear $\chibh$. To ensure the well-posedness, we also need to specify the ``full geometry'' on the intersection of two null cones $S_{0,0}=C_0\bigcap\Cb_0$. The full geometry consists of the metric $\gs$ induced on $S_{0,0}$ (but not only the conformal metric), the torsion $\zeta$, and both expansions $\tr\chi$ and $\tr\chib$.
\begin{remark}
Specifying initial data in such a way will cause the lost of derivatives. This is the nature of characteristic problem. For example, we need to impose the condition on more than third derivatives of the shear $\chih$ or $\chibh$, which are not maintained along the evolution. The assumptions of Theorem \ref{maintheorem} will maintain along the evolution. 
\end{remark}

Now we turn to \cite{Chr}. Remember the initial data are given on a null cone $C_{u_0}$ from a point $o$ where $u_0<-1$ is a fixed number. Let $\ub$ be the affine parameter on $C_{u_0}$ with its value being $u_0$ at the point $o$. We assume that for $u_0\le\ub\le0$, the initial data given on $C_{u_0}$ is trivial, which means that the geometry is precisely the geometry of a null cone from a point in Minkowski space-time. Let $\delta>0$ be a small parameter, the initial data given in $0\le\ub\le\delta$ are the so-called short pulse data, of which the shear satisfies
$$|\delta^{n}|u_0|^m\nablas^m D^n\chih|\le \delta^{-1/2}|u_0|^{-2}C_{m,n}.$$
The norm is taken by $\gs$ in $L^{\infty}$.

We drop out the weight $u_0$ because we are not considering the problem from past null infinity. Remember the relation derived in Chapter 2 in \cite{Chr} as follows:
\begin{align*}
&\chih\sim\delta^{-1/2}, \tr\chi\sim1,\ \zeta\sim\delta^{1/2},\ \chibh\sim\delta^{1/2},\ \tr\chib\sim1,\\
&\alpha\sim\delta^{-3/2},\ \beta\sim\delta^{-1/2},\ \rho,\sigma\sim1,\ \betab\sim\delta,\ \alphab, \Db\alphab\sim\delta^{3/2}
\end{align*}
The above notation $\psi\sim\delta^r$ means that for $0\le\ub\le\delta$,
$$|\delta^{n}\nablas^m D^n\psi|\le \delta^{r}C_{m,n}.$$
In particular, the above relation holds for $\ub=\delta$. We can also choose $\chih$ suitably such that $\tr\chi>0$ at $\ub=\delta$. We also remember the induced metrics on the spherical sections $\gs$ are expressed in the form
$$\gs(\ub)=\phi^2(\ub)\widehat{\gs}(\ub)$$
where $\widehat{\gs}(\ub)$ has the same volumn form for all $\ub$. We know that for $0\le\ub\le\delta$,
$$\phi-1\sim\delta,\ |u_0|^{-2}\widehat{\gs}-\cir{\gs}\sim\delta^{1/2},\ |u_0|^{-2}\gs-\cir{\gs}\sim\delta^{1/2}$$
where $\cir{\gs}$ is the standard metric on unit sphere. The above relation in particular holds for $\ub=\delta$. 

We may assume that the shear imposed on $0\le\ub<\delta$ has compact support such that we can smoothly extend $\chih$ to $\ub>\delta$ trivially, i.e., $\chih\equiv0$\footnote{One can also do as in the last several pages in \cite{Chr} without the compactly supported condition, but which is essentially the same.}. We then prove the following:
\begin{proposition}
Extending the initial data on $C_{u_0}$ as described above. Then the assumption of Theorem \ref{maintheorem} on $\mathcal{O}_0$, $\mathcal{R}$ holds on the truncated cone $C_{u_0}$ for $\ub\ge\delta$.
\end{proposition}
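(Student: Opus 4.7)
The plan is to regard the truncated cone $C_{u_0}|_{\ub\ge\delta}$ as the outgoing null cone $C_0$ of Theorem \ref{maintheorem} and to verify that Christodoulou's short-pulse estimates at the sphere $\ub=\delta$ propagate outward to the weighted bounds required by $\mathcal{O}_0$ and $\mathcal{R}_0$. The crucial simplification is that the trivial extension $\chih\equiv 0$, combined with the structure equation $\Dh\chih'=-\alpha$, forces $\alpha\equiv 0$ throughout the extension, so the structure and null Bianchi equations on $C_{u_0}$ reduce to transport equations along the generators driven only by lower-order quantities already bounded at $\ub=\delta$.

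We first fix the gauge $\Omega=1$ on $C_{u_0}$, so $\omega\equiv 0$ there, and the equation $D\tr\chi=-\frac{1}{2}(\tr\chi)^2$ integrates to $\tr\chi\sim 2/r$ with $r\sim\ub$. The structure equations from Section 2 for $\eta,\etab,\tr\chib,\chibh,\omegab$ then become scalar ODEs in $\ub$ with coefficient $c/r$ (determined by tensor valence) and sources given by already-bounded quantities; integrating from $S_{\delta,u_0}$ yields $|\Gamma_p|\sim r^{-p}$ in the orthonormal frame, matching the weights in the definitions of $\mathcal{O}^{0,\infty}[\Gamma_p], \mathcal{O}^{i,4}[\Gamma_p], \mathcal{O}^{i,2}[\Gamma_p]$. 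Higher angular derivatives follow by commuting with $\nablas$ via Lemma \ref{commutator}; the third-order $\nablas$-derivatives of $\etab,\chih,\tr\chib$ required by $\mathcal{O}_0$ work analogously, with no top-order $\alpha$-coupling because $\alpha\equiv 0$. For curvature, the vanishing of $\alpha$ removes the source from $D\beta$, $D\rho$, $D\sigma$, giving $|\beta|\sim r^{-4}$ and $|\rho|,|\sigma|\sim r^{-3}$ upon integration; continuing with $D\betab$ and $\Dh\alphab$ then gives $|\betab|\sim r^{-2}$ and $|\alphab|\sim r^{-1}$. The mixed $\Db$-derivatives appearing in $\mathcal{R}_0$, namely $\Db\betab$, $\Db^2\betab$ on $C_0$ and $\Db\alphab$ on $S_{s,0}$, are obtained not by propagation along $C_0$ but by reading off the corresponding null Bianchi equations, e.g.
\begin{equation*}
\Db\betab=-\tfrac{3}{2}\Omega\tr\chib\,\betab+\Omega\chibh\cdot\betab+\omegab\betab-\Omega\bigl\{\divs\alphab+(\eta-2\zeta)\cdot\alphab\bigr\},
\end{equation*}
which express them algebraically in terms of quantities already bounded on $C_0$.

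The main obstacle is the bookkeeping of the large $\delta^{-1/2}$ factor carried by $\beta$ (and the $\delta$-enhanced derivatives) at $\ub=\delta$. Since $\alpha\equiv 0$ for $\ub\ge\delta$, only $\beta$ and its angular derivatives enter the weighted integrals in $\mathcal{R}_0$; the homogeneous transport preserves $r^4\beta$ in the orthonormal frame, so that
\begin{equation*}
\|r^2\beta\|_{L^2(C_0)}^2=\int_\delta^\infty\|r^2\beta\|_{L^2(S_{\ub,u_0})}^2\,\D\ub\lesssim \delta^{-1}|u_0|^8\int_\delta^\infty r^{-2}\,\D\ub<\infty,
\end{equation*}
and analogous finite bounds hold for the higher $\nablas$-derivatives and for all other components. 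The constants $\mathcal{O}_0$, $\mathcal{R}_0$ so obtained depend on $\delta$ and $|u_0|$ but are finite, which is all the hypothesis of Theorem \ref{maintheorem} requires.
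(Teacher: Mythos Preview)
Your overall strategy is right, and matches the paper's: use $\chih\equiv0\Rightarrow\alpha\equiv0$ to reduce everything on $C_{u_0}|_{\ub\ge\delta}$ to transport ODEs in $\ub$ with data at $\ub=\delta$ supplied by Christodoulou's bounds. Two points deserve correction or comment.

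First, there is a genuine gap in your treatment of the $\Db$-derivatives. You write that $\Db\betab$, $\Db^2\betab$ and $\Db\alphab$ are obtained ``by reading off the corresponding null Bianchi equations.'' This works for $\Db\betab$ (the equation you display), but there is \emph{no} null Bianchi equation for $\Db\alphab$: $\alphab$ sits at the bottom of the hierarchy and has only a $D$-equation. Consequently $\Db\alphab$ on $C_{u_0}$ cannot be expressed algebraically in terms of $C_{u_0}$-data; it must be propagated. The paper commutes $\Db$ with the equation $\Dh\alphab=\ldots$ to obtain a $D$-transport equation for $\Db\alphab$, and integrates from $\ub=\delta$ using the short-pulse estimate $\Db\alphab\sim\delta^{3/2}$ there. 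The same remark applies to $\Db^2\betab$, since differentiating the $\Db\betab$ equation produces $\Db\divs\alphab$, which again requires control of $\Db\alphab$.

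Second, two differences from the paper are worth knowing. (i) You carry $\beta\sim\delta^{-1/2}$ at $\ub=\delta$ and accept $\delta$-dependent constants. The paper observes that since $\chih=0$ at $\ub=\delta$, the null Codazzi equation $\divs\chih'=\tfrac12\ds\tr\chi'-\chih'\cdot\eta+\tfrac12\tr\chi'\eta-\Omega^{-1}\beta$ forces the improved bound $\beta\sim1$ there, yielding $\delta$-independent norms. (ii) Your ordering ``first $\Gamma$, then curvature'' is circular as written, because e.g.\ $D\eta=\Omega(\chi\cdot\etab-\beta)$ couples $\eta$ to $\beta$. The paper avoids this by first deriving curvature-free transport equations on $C_{u_0}$---for instance $D\zeta+\tr\chi\zeta=-\tfrac12\ds\tr\chi$ and $D\tr\chib+\tr\chi\tr\chib=-2K-2\divs\zeta+2|\zeta|^2$---by combining the raw structure equations with null Codazzi and Gauss, and only afterwards integrates the Bianchi equations. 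Finally, note that $\omegab$ is excluded from $\mathcal{O}_0$ (and indeed need not decay along $C_{u_0}$ in this gauge), and that the conditions on $C_r$ and $C_{\lambda,\Lambda}$ should also be verified; the paper does both explicitly from the closed-form solutions for $\tr\chi$ and $\phi$.
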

\begin{remark}
The incoming null cone $\Cb_\delta$ in the work of Christodoulou \cite{Chr} will serve as the incoming null cone $\Cb_0$ in our Theorem \ref{maintheorem}. We do not care what happens for $0\le\ub\le\delta$.
\end{remark}
\begin{proof}
The triviality of $\chih$ imples that $\widehat{\gs}(\ub)\equiv\widehat{\gs}(\delta)$ for $\ub>\delta$ and $\phi$ satisfies $D^2\phi=0$. Recall the relation $D\phi=\phi\tr\chi/2$ and $\tr\chi(\delta)>0$, we can solve $\phi$ as
$$\phi(\ub)=\frac{\phi(\delta)\tr\chi(\delta)}{2}(\ub-\delta)+\phi(\delta)$$
and $\phi>0$ for all $\ub>\delta$. Also, $\tr\chi$ satisfies $D\tr\chi=-\frac{1}{2}(\tr\chi)^2$ and can be solved as
$$\tr\chi(\ub)=\frac{2}{\ub-\delta+\frac{2}{\tr\chi(\delta)}}$$
which is positive for all $\ub>\delta$. The above analyse shows that $C_{u_0}$ is a complete null cone without conjugate points and cut points along the null generators.

We then estimate $r$ by using the equation $Dr=r\overline{\tr\chi}/2$. This equation follows from the variation of area:
$$D\int_{S_{\ub,u}}\D\mu_{\gs}=\int_{S_{\ub,u}}\Omega\tr\chi\D\mu_{\gs}.$$
We denote $a=\min_{S_{\delta,u_0}}\tr\chi(\delta)$, $A=\max_{S_{\delta,u_0}}\tr\chi(\delta)$, and $A\ge a>0$. By the formula of $\tr\chi$, we have $2/(\ub+2/a)\le \tr\chi(\ub)\le 2/(\ub+2/A)$ and this also holds for $\overline{\tr\chi}$. Therefore, we have
$$\log(\ub-\delta+2/a)-\log(2/a)\le\log r-\log r(\delta)\le\log(\ub+2/A)-\log(2/A)$$
which implies the condition on $r$ with the constant depending on $a$ and $A$.

It is not hard to estimate $\Lambda(\ub)$ and $\lambda(\ub)$ (the eigenvalues of $\gs(\ub)$ with respect to $\gs(\delta)$) because along the null generators, the conformal geometry $\widehat{\gs}$ does not change, then $\Lambda(\ub)=\lambda(\ub)=\phi(\ub)/\phi(\delta)$, whose bound depends only on $a$, $A$ and the maximal and minimal values of $\phi(\delta)$ on $S_{\delta,u_0}$.

We begin to consider the connection coefficients and curvature components. For $\tr\chi$, we have
$$|\tr\chi|\le Cr^{-1},\ |\tr\chi-\overline{\tr\chi}|\le Cr^{-2}$$
by the relation between $\ub$ and $r$, and $C$ depends on $a$ and $A$. For derivatives of $\tr\chi$, we argue by induction on the order $i$ of derivatives. For $i=1$, we consider the equation $D\ds\tr\chi=-\tr\chi\ds\tr\chi$ and apply Gronwall type estimates:
$$|r^3\ds\tr\chi|(\ub)\le C|r^3\ds\tr\chi|_{\ub=\delta}\le C.$$
Notice that the weight $r^3$ is related to the structure of the equation in an essential way. Now we assume $|(r\nablas)^{i-1}(r\tr\chi)|\le C_{i-1}$, and commute $\nablas^{i-1}$ with the equation $D\ds\tr\chi=-\tr\chi\ds\tr\chi$ to obtain
$$D\nablas^i\tr\chi=-\tr\chi\nablas^i\tr\chi+\sum_{j=1}^{i-1}\nablas^j\tr\chi\cdot\nablas^{i-j}\tr\chi$$
The second term on the right hand side comes from the commutator $[D,\nablas^{i-1}]$ (see Lemma \ref{commutator}, and remember $\chih\equiv0$).
Again by Gronwall type estimates, we can conclude
$$|(r\nablas)^{i}(r\tr\chi)|(\ub)\le |(r\nablas)^{i}(r\tr\chi)|_{S_{\delta,u_0}}\le C_i.$$

We can then turn to $\zeta=\eta=-\eta$ on $C_{u_0}$. Combining the null Codazzi equations $\divs\chih=\cdots$ and the equation for $D\eta$, we deduce that (see Chapter 2 of \cite{Chr})
$$D\zeta+\tr\chi\zeta=-\frac{1}{2}\ds\tr\chi.$$
Applying the Gronwall type estimates, we have
\begin{align*}|r^3\zeta|(\ub)\le &C|r^3\zeta|_{S_{\delta,u_0}}+C\int_\delta^{\ub}|r^3\ds\tr\chi|(\ub')\D\ub',\\
\le& C|r^3\zeta|_{S_{\delta,u_0}}+Cr(\ub)
\end{align*}
Dividing $r(\ub)$ on both sides leads to the desired estimate $|r^2\zeta|(\ub)\le C$. The derivatives of $\zeta$ are estimated in similar way.

We consider the equation for the Gauss curvature $K$, (see Chapter 5 of \cite{Chr})
$$DK+\tr\chi K=-\frac{1}{2}\Deltas\tr\chi.$$
We can deduce that $|(r\nablas)^i(r^2K)|\le C_i$ by commuting derivatives and applying Gronwall type estimates.

We turn to $\tr\chib$ and $\chibh$. We combine the Gauss equation and the equation for $D\tr\chib$ and obtain the following (see Chapter 2 of \cite{Chr})
$$D\tr\chib+\tr\chi\tr\chib=-2K-2\divs\zeta+2|\zeta|^2.$$
We then deduce that
\begin{align*}
|r^2\tr\chib|(\ub)\le& C|r^2\tr\chib|_{S_{\delta,u_0}}+C\int_{\delta}^{\ub}(1+r^{-1}+r^{-2})\D\ub'\\
\le& C|r^2\tr\chib|_{S_{\delta,u_0}}+Cr(\ub)
\end{align*}
and therefore $|r\tr\chib|(\ub)\le C$. A similar argument gives the desired estimate on derivatives of $\tr\chib$. For $\chibh$, the equation we consider is (see Chapter 5 of \cite{Chr})
$$D\chibh-\frac{1}{2}\tr\chi\chibh=-\nablas\tensor\zeta+\zeta\tensor\zeta.$$
We apply Gronwall type estimates to obtain $|(r\nablas)^i(r\chibh)|(\ub)\le C_i$.

We remark that we do not need $\omegab$ here. In fact, if we consider the equation for $D\omegab$, we find that $\omegab$ need not to decay. The decaying condition on $\omegab$ is ensured by our construction of canonical foliation. 

We finally turn the curvature components, which are easier to derive. We rely on the null Bianchi equations for $D\betab$, $D\rho$, $D\sigma$, $D\betab$, $D\alphab$ and $D\Db\alphab$. The last one comes from commuting $\Db$ with the equation for $D\alphab$. Remember $\chih\equiv0$ and then $\alpha\equiv0$. We can commute derivatives and apply Gronwall type estimate, to obtain
$$|(r\nablas)^i(r^4\beta,r^3\rho,r^3\sigma,r^2\betab,r\alphab,r\Db\alphab)|\le C_i.$$
We should remark here that because $\chih=0$ when $\ub=\delta$, then by null Codazzi equation we can obtain an improved estimate $\beta\sim1$ when $\ub=\delta$ (originally $\beta\sim\delta^{-1/2}$). We remark that the trivial extension of the initial data on $C_{u_0}$ to finite length is also used in \cite{L-Y} and such improvements are essential in the work \cite{L-Y}. In fact, we have obtained an even better improvement under an addition condition.

We can directly check that the assumptions in Theorem \ref{maintheorem} about $\mathcal{O}_0$ and $\mathcal{R}_0$.
We should also remark that the above estimates are done in $C^\infty$ and with stronger decay as compared to the assumptions in Theorem \ref{maintheorem}.

\end{proof}

\subsection{Structure of the Proof} 

We will prove the above Main Theorem in the remaining part of this paper. First of all, we give the strucutre of the proof.

We begin the proof by defining
$$\mathcal{A}_{\varepsilon,\Delta}=\{c\ge0: \text{$c$ satisfies the following two properties for small $\varepsilon>0$ and large $\Delta>0$}\},$$
where $\varepsilon$ is a small positive parameter and $\Delta$ is a positive large constant. They will be suitably chosen in the context depending only on $\mathcal{O}_0,\mathcal{R}_0,\underline{\mathcal{R}}_0$.
\begin{itemize}
\item[(1)]The solution of vacuum Einstein equations $g$ exists in a double null foliation given by $(\ub,u)$ for $0\le\ub\le c$, $0\le u\le\varepsilon$, where $\ub$ coincides with the affine function $s$ on $C_0$ and $u|_{\Cb_{\ub_*}}$ is canonical, which means that the following equation holds on $\Cb_{\ub_*}$:
\begin{align}\label{lastslice}
\overline{\log\Omega}=0,\quad \Deltas\log\Omega=\frac{1}{2}\divs\etab+\frac{1}{2}\left(\frac{1}{2}((\chih,\chibh)-\overline{(\chih,\chibh)})-(\rho-\overline{\rho})\right).
\end{align} 
\item[(2)]Written in the double null foliation given by $(\ub,u)$, $\mathcal{R}, \underline{\mathcal{R}}\le\Delta$.
\end{itemize}
We will prove, for $\varepsilon>0$ sufficiently small and $\Delta$ sufficiently large, $\ub_*=\sup\mathcal{A}_{\varepsilon,\Delta}=+\infty$. 

The proof is divided into following steps:
\begin{itemize}
\item[\bf Step 1.] We first construct on $\Cb_0$ a new function $u$ which is canonical and such that the section $u=0$ is the simply $S_{0,0}=C_0\bigcap\Cb_0$ and $u$ varies in $[0,\varepsilon]$ where $\varepsilon$ depends on $\mathcal{O}_0$ and $\underline{\mathcal{R}}_0$. This shows that $\mathcal{A}_{\varepsilon,\Delta}$ is not empty. We are going to argue by contradiction. We assume $\ub_*=\sup\mathcal{A}_{\varepsilon,\Delta}<+\infty$. 
\item[\bf Step 2.] In this and the next step, we work on the space-time region $M_{\ub_*,\varepsilon}$ which corresponding to $0\le\ub\le\ub_*$, $0\le u\le\varepsilon$. It is not hard to see $\ub_*\in\mathcal{A}_{\varepsilon,\Delta}$. We will prove that, if $\varepsilon>0$ is sufficiently small,
 \begin{align*}
\mathcal{O}[\chibh,\tr\chib,\etab,\omega,\chih,\Omega\tr\chi-\overline{\Omega\tr\chi},\eta]\le C(\mathcal{O}_0,\mathcal{R}_0), \ \mathcal{O}[\omegab]\le C(\mathcal{O}_0,\mathcal{R}).
\end{align*}
This is done in Section \ref{SectionConnection}, Proposition \ref{connection3}.
\item[\bf Step 3.]Under the conclusion of Step 2, we prove that, for $\varepsilon>0$ sufficiently small, 
$$\mathcal{R},\underline{\mathcal{R}}\le C(\mathcal{O}_0,\mathcal{R}_0,\underline{\mathcal{R}}_0).$$
This is done in Section \ref{SectionCurvature}, Proposition \ref{curvaturecompletenullcone}.
\item[\bf Step 4.]We extend the solution $g$ to $0\le\ub\le\ub_*+\delta$, $0\le u\le \varepsilon+\delta'$ for $\delta,\delta'$ sufficiently small, such that it holds again $\mathcal{O},\mathcal{R},\underline{\mathcal{R}}\le 2C(\mathcal{O}_0,\mathcal{R}_0,\underline{\mathcal{R}}_0)$. The extension of the space-time follows by \cite{Luk}. We then construct on $\Cb_{\ub_*+\delta}$ a new function $u_\delta$ for $0\le u_\delta\le\varepsilon$ such that $u_\delta|_{\Cb_{\ub_*+\delta}}$ is canonical, provided that $\varepsilon>0$ and $\delta>0$ are sufficiently small ($\delta$ may depend on $\varepsilon$ and $\delta'$). Then, by continuity, if $\delta>0$ is sufficiently small,  the new function $u_\delta$ can be extended inside up to $\Cb_0$ as an optical function, and we have the new double null foliation $0\le\ub\le\ub_*+\delta$, $0\le u_\delta\le\varepsilon$. In particular, the norms $\mathcal{R}_\delta,\underline{\mathcal{R}}_\delta$ expressed in the new foliation are bounded by some constant depending on $\mathcal{O}_0,\mathcal{R}_0,\underline{\mathcal{R}}_0$.
\end{itemize}
Now we can choose $\Delta$ sufficiently large such that $\ub_*+\delta\in\mathcal{A}_{\varepsilon,\Delta}$ which leads to a contradition to that $\ub_*=\sup\mathcal{A}_{\varepsilon,\Delta}<+\infty$. Finally, we will show that we can construct a global retarded time function $u$ and complete the proof.

\section{Proof of the Main Theorem}

\subsection{Canonical Foliation on Initial Null Cone}\label{sectioninitialslice}

We shall carry out Step 1 of the proof to construct canonical foliation on $\Cb_0$ in this subsection.

The construction of a canonical foliation on the last slice is carried out both in \cite{Ch-K} and \cite{K-N} (or \cite{C-N10}). The case for space-like hypersurface is considered in \cite{Ch-K} and the case for incoming null hypersurface is considered in \cite{K-N}. The full detail of the local existence of the canonical foliation on a null cone is given in \cite{N}. For the sake of completeness, and because the setting is not exactly the same in our case, we give a proof here but do not carry out some detail computations. The reader can also refer to Chapter 3 of \cite{Sau} for the comparsion between two different foliation and Chapter 7 for another argument.

We first define the following addition quantities which are needed in the construction of canonical foliation:
\begin{align*}
\mathcal{R}[\Db\betab]=\sup_{u}\sum_{i=0}^1\|(r\nablas)^i\Db\betab\|_{L^2(C_u)},&\\
\underline{\mathcal{R}}[\Db\alphab]=\sup_{\ub}\sum_{i=0}^1\|(r\nablas)^i\Db\alphab\|_{L^2(\Cb_{\ub})}&,\quad
\underline{\mathcal{R}}[\Db^2\alphab]=\sup_{\ub}\|\Db^2\alphab\|_{L^2(\Cb_{\ub})},\\
\mathcal{O}^2[\Db\omegab]=\sup_{\ub,u}\sum_{i=0}^1\|(r\nablas)^i\Db\omegab\|_{L^2(S_{\ub,u})}&,\quad \mathcal{O}[\Db^2\omegab]=\sup_{\ub}\|\Db^2\omegab\|_{L^2(\Cb_{\ub})}.
\end{align*}

Now we work on an arbitrary incoming null cone $\Cb_{\ub}$ with a background foliation given by a function $u$. We assume that on this null cone, we have:
\begin{align}\label{assumption0}
\mathcal{O},\underline{\mathcal{R}},\underline{\mathcal{R}}[\Db\alphab,\Db^2\alphab],\mathcal{O}^2[\Db\omegab],\mathcal{O}[\Db^2\omegab] \le C
\end{align}
for some $C$. The norms appear above should be understood as the norm taken only on one single null cone $\Cb_{\ub}$.

To define a new foliation, we use a function $W$ defined on $[0,\varepsilon]\times S_{\ub,0}$ to represent a new foliation in the way that, the new foliation function ${}^{(W)}u$  is defined by the relation ${}^{(W)}u(W(s,\theta),\theta)=s$\footnote{Here $s$ is the parameter varying in $[0,\varepsilon]$, but not the affine parameter on $C_0$.}. Under the new foliation, we have also the new ``lapse'' function ${}^{(W)}\Omega$. $W$ represents a foliation iff
\begin{align*}
{}^{(W)}a(s,\theta)\triangleq\frac{\partial W}{\partial s}(s,\theta)={}^{(W)}\Omega^2(W(s,\theta),\theta)\Omega^{-2}(W(s,\theta),\theta)>0.
\end{align*}

Under the new foliation, the following vectorfields
\begin{equation}\label{nullframerelation}
\begin{aligned}
{}^{(W)}\Lb&={}^{(W)}a\Lb,\\
{}^{(W)}L'&=\frac{1}{{}^{(W)}a}(L'+|{}^{(W)}\nablas W|^2\Lb+2{}^{(W)}\nablas^AWE_A),\\
{}^{(W)}E_A&=E_A+{}^{(W)}\nablas_AW\Lb
\end{aligned}
\end{equation}
also form a null frame with $g({}^{(W)}\Lb,{}^{(W)}\Lb)=g({}^{(W)}L',{}^{(W)}L')=0$, $g({}^{(W)}\Lb,{}^{(W)}L')=-2$, and $[{}^{(W)}\Lb,{}^{(W)}E_A]=0$ if $[\Lb,E_A]=0$. In addition,  ${}^{(W)}\Lb{}^{(W)}u=1$. We can also compute the relation of the connection coefficients and curvature components in different foliations:
\begin{align*}
{}^{(W)}\Omega{}^{(W)}\chi=&\Omega^2[\chi'+2\eta\otimes{}^{(W)}\nablas W+2{}^{(W)}\nablas W\otimes\eta\\
&-2\omegab{}^{(W)}\nablas W\otimes{}^{(W)}\nablas W-|{}^{(W)}\nablas W|^2\Omega\chib+2{}^{(W)}\nablas^2W],\\
{}^{(W)}\Omega^{-1}{}^{(W)}\chib=&\Omega^{-1}\chib,\\
{}^{(W)}\etab=&\etab+\Omega\chib\cdot{}^{(W)}\nablas W,\\
{}^{(W)}\rho=&\rho+2({}^{(W)}\nablas W,\betab)+\alphab({}^{(W)}\nablas W,{}^{(W)}\nablas W),\\
{}^{(W)}\betab=&{}^{(W)}a(\betab-\alphab\cdot{}^{(W)}\nablas W),\\
{}^{(W)}\alphab=&{}^{(W)}a^2\alphab.
\end{align*}

Suppose now that $W$ represents a foliation. We consider a map $\mathcal{A}$, which is defined by $$\mathcal{A}(W)(s,\theta)=\int_0^s{}^{(W_\mathcal{A})}\Omega^2(W(s',\theta),\theta)\Omega^{-2}(W(s',\theta),\theta)\D s'$$ where ${}^{(W_\mathcal{A})}\Omega$ is the solution of the equation
\begin{align*}
{}^{(W)}\Deltas\log{}^{(W_\mathcal{A})}\Omega(W(s,\theta),\theta)&={}^{(W)}G(W(s,\theta),\theta),\\
{}^{{}^{(W)}}\overline{\log{}^{(\mathcal{A}(W))}\Omega}(s)&=0,
\end{align*}
where
\begin{align*}
{}^{(W)}G\triangleq\frac{1}{2}{}^{(W)}\divs{}^{(W)}\etab+\frac{1}{2}\left(\frac{1}{2}(({}^{(W)}\chih,{}^{(W)}\chibh)-{}^{{}^{(W)}}\overline{({}^{(W)}\chih,{}^{(W)}\chibh)})-({}^{(W)}\rho-{}^{{}^{(W)}}\overline{{}^{(W)}\rho})\right).
\end{align*}

We can express $G$ in terms of geometric quantities related to background foliation and derivatives of $W$. We can compute (in a rough form):
\begin{align*}
{}^{(W)}\divs{}^{(W)}\etab=&\divs\etab+{}^{(W)}\nablas W\cdot\Omega\chib\cdot\etab+{}^{(W)}\nablas W\cdot\Db\etab\\
&+{}^{(W)}\nablas W\cdot\nablas(\Omega\chib)+\Omega\chib\cdot{}^{(W)}\nablas^2W+{}^{(W)}\nablas W\cdot\Omega\chib\cdot\Omega\chib\\
&+{}^{(W)}\nablas W\cdot{}^{(W)}\nablas W\cdot\Db(\Omega\chib),\\
({}^{(W)}\chih,{}^{(W)}\chibh)=&(\chih,\chibh)-\Omega^2|{}^{(W)}\nablas W|^2|\chibh|^2\\
&+\Omega\chibh\cdot(\eta\cdot{}^{(W)}\nablas W+\omegab{}^{(W)}\nablas W\cdot{}^{(W)}\nablas W+{}^{(W)}\nablas^2 W),\\
{}^{(W)}\rho=&\rho+2{}^{(W)}\nablas W\cdot\betab+\alphab\cdot{}^{(W)}\nablas W\cdot{}^{(W)}\nablas W.
\end{align*}

For $W$ which does not represent a foliation (but only a family of sections), the right hand sides of the above inequalities still make sense. Therefore, we can extend the map $\mathcal{A}$ to the case that $W$ does not represent a foliation. Notice that if $W$ is a fixed point of the map $\mathcal{A}$, then we can estimate $\log\Omega$ directly and conclude that $W$ represents a foliation, then the function ${}^{(W)}u$ given by $W$ is canonical (see the equation \eqref{lastslice}). 

We first carry out some calculation and then consider the behavior of $\mathcal{A}$ on suitable chosen function spaces.

For any $W$, we denote $S_{W(s)}$ or simply $S_W$ to be the sections. We first compare two different family of sections $W_1$ and $W_2$. For fixed $s$, We consider a family of sections $W_t\triangleq tW_1(s,\theta)+(1-t)W_2(s,\theta)$ where $t\in[0,1]$. We will compute
\begin{align*}
\frac{\D}{\D t}\left({}^{(W_t)}\Deltas\log{}^{((W_t)_\mathcal{A})}\Omega\right)=\frac{\D}{\D t}{}^{(W_t)}G.
\end{align*}

We first compute $\frac{\D}{\D t}{}^{(W_t)}G$. We denote $\widetilde{W}=W_1-W_2$, then by the expression of ${}^{(W_t)}\rho$,
\begin{align*}
\frac{\D}{\D t}{}^{(W_t)}\rho=&\widetilde{W}\Db\rho+2{}^{(W_t)}\nablas \widetilde{W}\cdot\betab+2\widetilde{W}{}^{(W_t)}\nablas W_t\cdot\Db\beta^\sharp\\
&+2\alphab\cdot{}^{(W_t)}\nablas\widetilde{W}\cdot{}^{(W_t)}\nablas W_t+\widetilde{W}\Db\alphab^\sharp\cdot{}^{(W_t)}\nablas W_t\cdot{}^{(W_t)}\nablas W_t.
\end{align*}
Notice that we use $\betab^\sharp$ and $\alphab^\sharp$ to denote the corresponding contravariant tensorfields to $\betab$ and $\alphab$. The derivative $\Db$ also applies to the metric $\gs$. The addition term $\Omega\chib$ arises, but since $\Omega\chib$ is in $L^\infty$, we do not need to care about it.

We also compute
\begin{align*}
\frac{\D}{\D t}{}^{{}^(W_t)}\overline{{}^{(W_t)}\rho}=-{}^{{}^(W_t)}\overline{\widetilde{W}\Omega\tr\chib}{}^{{}^(W_t)}\overline{{}^{(W_t)}\rho}+{}^{{}^(W_t)}\overline{\frac{\D}{\D t}{}^{(W_t)}\rho+\widetilde{W}\Omega\tr\chib{}^{(W_t)}\rho}.
\end{align*}
The other terms of $\frac{\D}{\D t}{}^{(W_t)}G$ are computed in a similar way.

We also compute
\begin{align*}
&\frac{\D}{\D t}\left({}^{(W_t)}\Deltas\log{}^{((W_t)_\mathcal{A})}\Omega\right)\\
=&{}^{(W_t)}\Deltas\left(\frac{\D}{\D t}\log{}^{((W_t)_\mathcal{A})}\Omega\right)-\widetilde{W}\Omega\tr\chib{}^{(W_t)}\Deltas\log{}^{((W_t)_\mathcal{A})}\Omega\\
&-2\widetilde{W}\Omega\chibh\cdot{}^{(W_t)}\nablas^2\log{}^{((W_t)_\mathcal{A})}\Omega-2{}^{(W_t)}\divs(\widetilde{W}\Omega\chibh)\cdot{}^{(W_t)}\nablas\log{}^{((W_t)_\mathcal{A})}\Omega\\
\triangleq&{}^{(W_t)}\Deltas\left(\frac{\D}{\D t}\log{}^{((W_t)_\mathcal{A})}\Omega\right)-{}^{(W_t)}G_1.
\end{align*}
In the above expression, ${}^{(W_t)}\Deltas\log{}^{((W_t)_\mathcal{A})}\Omega={}^{(W_t)}G$, and ${}^{(W_t)}\nablas^2\log{}^{((W_t)_\mathcal{A})}\Omega$ is controlled in $L^2(S_{W_t})$, by $\|{}^{(W_t)}G\|_{L^2(S_{W_t})}$, if we assume that for all $W$,
\begin{align}\label{assumption1}
{}^{(W)}r\|{}^{(W)}K\|_{L^2(S_{W_t})}\le C
\end{align}
and then the elliptic estimate applies. The expression ${}^{(W_t)}\divs(\widetilde{W}\Omega\chibh)$ is of the form ${}^{(W_t)}\nablas \widetilde{W}\cdot\Omega\chibh+\widetilde{W}\cdot(\nablas(\Omega\chibh)+{}^{(W_t)}\nablas W_t\cdot\Db(\Omega\chibh))$, and ${}^{(W_t)}\nablas\log{}^{((W_t)_\mathcal{A})}\Omega$ is also controlled in $L^p(S_{W_t})$ for all $p\in[2,+\infty)$, by $\|{}^{(W_t)}G\|_{L^2(S_{W_t})}$.

Also, we compute, by ${}^{{}^(W_t)}\overline{\log{}^{((W_t)_\mathcal{A})}\Omega}=0$,
\begin{align*}
0=\frac{\D}{\D t}{}^{{}^(W_t)}\overline{\log{}^{((W_t)_\mathcal{A})}\Omega}={}^{{}^(W_t)}\overline{\frac{\D}{\D t}\log{}^{((W_t)_\mathcal{A})}\Omega+\widetilde{W}\Omega\tr\chib\log{}^{((W_t)_\mathcal{A})}\Omega}
\end{align*}

Therefore,
\begin{align*}
{}^{(W_t)}\Deltas\left(\frac{\D}{\D t}\log{}^{((W_t)_\mathcal{A})}\Omega\right)=&\frac{\D}{\D t}{}^{(W_t)}G+{}^{(W_t)}G_1,\\ {}^{{}^(W_t)}\overline{\frac{\D}{\D t}\log{}^{((W_t)_\mathcal{A})}\Omega}=&-{}^{{}^(W_t)}\overline{\widetilde{W}\Omega\tr\chib\log{}^{((W_t)_\mathcal{A})}\Omega}
\end{align*}
and we conclude by elliptic estimate that
\begin{align*}
&\left\|{}^{(W_t)}(r\nablas)^{\le2}\left(\frac{\D}{\D t}\log{}^{((W_t)_\mathcal{A})}\Omega\right)\right\|_{L^2(S_{W_t})}\\
\lesssim&\|{}^{(W_t)}r^2(\frac{\D}{\D t}{}^{(W_t)}G+{}^{(W_t)}G_1)\|_{L^2(S_{W_t})}+\|{}^{(W_t)}r^2{}^{{}^{(W_t)}}\overline{\widetilde{W}\Omega\tr\chib\log{}^{((W_t)_\mathcal{A})}\Omega}\|_{L^2(S_{W_t})}
\end{align*}

The notation $A\lesssim B$ here and in this subsection is understood as $A\le C'B$ where $C'$ depends on the constants in the assumptions \eqref{assumption0}, \eqref{assumption1} and the following: for $i=0,1,2$, any $W$ and any function $f$, we assume
\begin{align}\label{assumption2}
C^{-1}\|(r\nablas)^if(W(t,\cdot),\cdot)\|_{L^2(S_{\ub,0})}\le \|{}^{(W)}(r\nablas)^i f\|_{L^2(S_{W_t})}\le C\|{}^{(W)}(r\nablas)^if(W(t,\cdot),\cdot)\|_{L^2(S_{\ub,0})}.
\end{align}
So, to compare two different foliations, 
\begin{align*}
&\|(r\nablas)^{\le2}(\log{}^{((W_1)_{\mathcal{A}})}\Omega(W_1(s,\cdot),\cdot)-\log{}^{((W_2)_{\mathcal{A}})}\Omega(W_2(s,\cdot),\cdot))\|_{L^2(S_{\ub,0})}\\
\lesssim&\left\|(r\nablas)^{\le2}\left(\int_0^1\frac{\D}{\D t}\log{}^{((W_t)_\mathcal{A})}\Omega(W_t(\cdot),\cdot)\D t\right)\right\|_{L^2(S_{\ub,0})}\\
\lesssim&\int_0^1\left\|{}^{(W_t)}(r\nablas)^{\le2}\left(\frac{\D}{\D t}\log{}^{((W_t)_\mathcal{A})}\Omega(W_t(\cdot),\cdot)\right)\right\|_{L^2(S_{W_t})}\D t\\
\lesssim&\int_0^1\left(\|{}^{(W_t)}r^2(\frac{\D}{\D t}{}^{(W_t)}G+{}^{(W_t)}G_1)\|_{L^2(S_{W_t})}+\|{}^{(W_t)}r^2{}^{{}^{(W_t)}}\overline{\widetilde{W}\Omega\tr\chib\log{}^{((W_t)_\mathcal{A})}\Omega}\|_{L^2(S_{W_t})}\right)\D t.
\end{align*}

We estimate
\begin{align}
&\nonumber\|{}^{(W_t)}r^2\frac{\D}{\D t}{}^{(W_t)}G\|_{L^2(S_{W_t})}\\\nonumber\lesssim&|{}^{(W_t)}r\widetilde{W}|\left(\|r\Db(\divs\etab,(\chih,\chibh),\rho)\|_{L^2(S_{(W_t)})}+\|r\Omega\tr\chib(\divs\etab,(\chih,\chibh),\rho)\|_{L^2(S_{(W_t)})}\right.\\\nonumber
&+\|{}^{(W_t)}r^{1/2}{}^{(W_t)}\nablas W_t\|_{L^4(S_{(W_t)})}\\\nonumber&\quad\times\|r^{3/2}(\Db(\Omega\chib\cdot\eta,\Db\etab,\nablas(\Omega\chib),\Omega\chib\cdot\Omega\chib,\Omega\chib\cdot\eta,\betab)^\sharp,r^{3/2}\nablas(\Omega\chib))\|_{L^4(S_{(W_t)})}\\\nonumber
&+\|{}^{(W_t)}r^{3/2}|{}^{(W_t)}\nablas W_t|^2\|_{L^4(S_{(W_t)})}\|r^{1/2}\Db(\Db(\Omega\chib),|\Omega\chibh|^2,\Omega\chib\omegab,\alphab,\Omega\chib)^\sharp\|_{L^4(S_{(W_t)})}\\\nonumber
&\left.+\|{}^{(W_t)}r{}^{(W_t)}\nablas^2 W_t\|_{L^2(S_{(W_t)})}\|r\Db(\Omega\chib)^\sharp\|_{L^\infty(S_{(W_t)})}\right)\\\nonumber
&+\|{}^{(W_t)}r^{3/2}{}^{(W_t)}\nablas \widetilde{W}\|_{L^4(S_{(W_t)})}\|r^{1/2}(\Omega\chib\cdot\eta,\Db\etab,\nablas(\Omega\chib),\Omega\chib\cdot\eta,\betab)\|_{L^4(S_{(W_t)})}\\\nonumber
&+\|{}^{(W_t)}r^{7/4}{}^{(W_t)}\nablas \widetilde{W}\|_{L^8(S_{(W_t)})}\|{}^{(W_t)}r^{3/4}{}^{(W_t)}\nablas W_t\|_{L^8(S_{(W_t)})}\\\nonumber
&\quad\times\|r^{1/2}(\Db(\Omega\chib),|\Omega\chibh|^2,\Omega\chib\omegab,\alphab,\Omega\chib)\|_{L^4(S_{(W_t)})}\\
&\label{dtG}+\|{}^{(W_t)}r^2{}^{(W_t)}\nablas^2 \widetilde{W}\|_{L^2(S_{(W_t)})}\|\Omega\chib\|_{L^\infty(S_{(W_t)})}.
\end{align}
We need to estimate the norms of the background quantities on the sphere $S_{(W_t)}$. For arbitrary section given by a function $Y=Y(\theta)\ge0$, we consider a family of sections $Y_t(\theta)=tY(\theta)$ which connects the sphere $S_Y$ and $S_{\ub,0}$. For arbitrary tensorfields $\psi$, we compute
\begin{align*}
\frac{\D}{\D t}\int_{S_{Y_t}}|\psi|^4\D\mu_t&=\int_{S_{Y_t}}Y(\Db(|\psi|^4)+\Omega\tr\chib|\psi|^4)\D\mu_t\\
&\lesssim \int_{S_{Y_t}}Y|\psi|^4\D\mu_t+\int_{S_{Y_t}}Y|\psi|^3|\Db\psi|\D\mu_t.
\end{align*}
We integrate the above inequality over $[0,1]$. Now, if we assume
\begin{align}\label{assumption3}
W(s,\theta)\le \varepsilon_W
\end{align}
then we can choose $\varepsilon_W$ small enough so that
\begin{align*}
\int_{S_Y}|\psi|^4\D\mu\lesssim&\int_{S_{\ub,0}}|\psi|^4\D\mu_0+\left(\int_0^1\int_{S_{Y_t}}Y|\psi|^6\D\mu_t\D t\right)^{1/2}\left(\int_0^1\int_{S_{Y_t}}Y|\Db\psi|^2\D\mu_t\D t\right)^{1/2}\\
\lesssim&\|\psi\|_{L^4(S_{\ub,0})}^4+\|\psi\|_{L^6(\Cb_{\ub})}^3\|\Db\psi\|_{L^2(\Cb_{\ub})}.
\end{align*}
The second inequality holds because the volumn element $Y\D\mu_t\D t=\D\mu_{S_{\ub,u}}\D u$. Therefore, by Sobolev inequality, we have
\begin{equation}\label{SobolevCb}
\begin{split}
\|\psi\|_{L^4(S_Y)}
\lesssim\|\psi\|_{L^4(S_{\ub,0})}+\left(\|r^{-1/2}(r\nablas)\psi\|_{L^2(\Cb_{\ub})}+\|r^{-1/2}\psi\|_{L^2(\Cb_{\ub})}\right)^{3/4}\|\Db\psi\|_{L^2(\Cb_{\ub})}^{1/4}
\end{split}
\end{equation}

Using the above inequality, we can relate the norms of the background quantities on the sphere $S_{(W_t)}$ to the norms on $S_{\ub,0}$ and the norms of $\Db$ derivative on $L^2(\Cb_{\ub})$. One can check that the norms in \eqref{assumption0} are enough to control the norms appearing in \eqref{dtG}. Again if $\varepsilon$ is small depending on $C_1$, we have
\begin{align*}
\int_0^1\|{}^{(W_t)}r^2\frac{\D}{\D t}{}^{(W_t)}G\|_{L^2(S_{W_t})}\D t
\le C' \|\nablas^{\le2}(W_1(s,\cdot)-W_2(s,\cdot))\|_{L^2(S_{\ub,0})}.
\end{align*}
Here $C'$ depends on the constants in the assumptions \eqref{assumption0}, \eqref{assumption1}, \eqref{assumption2} and the bound $\sup_s\|(r\nablas)^{\le2} W_t(s,\cdot)\|_{L^2(S_{\ub,0})}$. The other terms are estimated in a similar way and we have
\begin{equation}\label{contraction}
\begin{split}
&\|(r\nablas)^{\le2}(\log{}^{((W_1)_{\mathcal{A}})}\Omega(W_1(s,\cdot),\cdot)-\log{}^{((W_2)_{\mathcal{A}})}\Omega(W_2(s,\cdot),\cdot))\|_{L^2(S_{\ub,0})}\\
\le& C'\|(r\nablas)^{\le2}(W_1(s,\cdot)-W_2(s,\cdot))\|_{L^2(S_{\ub,0})}.
\end{split}
\end{equation}

Finally,  we have the following:
\begin{proposition}\label{canonicalslice}
Suppose that on incoming null cone $\Cb_{\ub}$ with background foliation given by $u$, we have the assumptions \eqref{assumption0}, \eqref{assumption1}, \eqref{assumption2}, \eqref{assumption3} hold. Then the inequality \eqref{contraction} holds for any two families of secions $W_1$ and $W_2$, if $\varepsilon_W$ is sufficiently small.
\end{proposition}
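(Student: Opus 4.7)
The natural strategy, already hinted at in the discussion preceding the statement, is to interpolate between $W_1$ and $W_2$ along the affine path $W_t = tW_1 + (1-t)W_2$, $t \in [0,1]$, and to derive a pointwise-in-$t$ Lipschitz bound on the map $t \mapsto \log{}^{((W_t)_\mathcal{A})}\Omega(W_t(s,\cdot),\cdot)$ with Lipschitz constant proportional to $\|(r\nablas)^{\le 2}\widetilde{W}\|_{L^2(S_{\ub,0})}$, where $\widetilde{W} = W_1 - W_2$. Integrating in $t$ then yields \eqref{contraction}.

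First I would differentiate the defining elliptic equation for $\log{}^{((W_t)_\mathcal{A})}\Omega$ in $t$. Because $\Deltas$, $\divs$, and the averaging operator all depend on the metric induced on the varying section $S_{W_t}$, the derivative produces not only $\partial_t {}^{(W_t)}G$ on the right but also a correction ${}^{(W_t)}G_1$ absorbing the $t$-derivatives of these geometric operators (involving $\widetilde{W}\,\Omega\chibh\cdot\nablas^2\log{}^{((W_t)_\mathcal{A})}\Omega$ and similar terms), together with a mean constraint coming from differentiating $\overline{\log{}^{((W_t)_\mathcal{A})}\Omega}=0$. Elliptic estimates for the Hodge system (Lemma~\ref{elliptic}), whose applicability is ensured by assumption \eqref{assumption1} on $K$, then control $\|{}^{(W_t)}(r\nablas)^{\le 2}(\partial_t\log{}^{((W_t)_\mathcal{A})}\Omega)\|_{L^2(S_{W_t})}$ by $L^2$ norms of $\partial_t {}^{(W_t)}G$, ${}^{(W_t)}G_1$, and the averaged correction term on $S_{W_t}$.

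Next I would expand $\partial_t{}^{(W_t)}G$ using the transformation formulas for connection coefficients and curvature components under a change of foliation, as written down just before the statement. Each resulting term is either linear in $\widetilde{W}$ paired with one $\Db$-derivative of a background quantity, or linear in $\nablas\widetilde{W}$ or $\nablas^2\widetilde{W}$ paired with lower-order background factors. Applying H\"older reduces matters to $L^4$ and $L^\infty$ bounds on quantities such as $\Db\etab$, $\nablas(\Omega\chib)$, $\Db(\Omega\chib)$, $\betab$, $\alphab$, and $\Db\alphab$ on the moving sections $S_{W_t}$.

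The delicate technical step is the transfer of these background norms from $S_{W_t}$ back to the fixed sphere $S_{\ub,0}$. For this I would invoke the inequality \eqref{SobolevCb}, obtained by integrating $\partial_t|\psi|^4$ along the path $Y_t = tY$; it trades one additional $\Db$-derivative along $\Cb_{\ub}$ (controlled by \eqref{assumption0}) for a bound on $S_Y$, and requires $\varepsilon_W$ small so that the associated deformation error is absorbable. Combined with assumption \eqref{assumption2}, which compares weighted norms on $S_{W_t}$ with those on $S_{\ub,0}$, this closes the argument. The main obstacle I anticipate is organisational rather than conceptual: each bilinear term in $\partial_t{}^{(W_t)}G$ and ${}^{(W_t)}G_1$ must be checked to admit a H\"older decomposition whose factors are all covered by \eqref{assumption0}, with no demand for $\Db$-regularity beyond what is available. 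Once this bookkeeping is complete, integration in $t$ over $[0,1]$ delivers \eqref{contraction} with a constant $C'$ depending only on the constants implicit in \eqref{assumption0}--\eqref{assumption2} and the $H^2(S_{\ub,0})$-norms of $W_1,W_2$.
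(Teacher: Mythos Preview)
Your proposal is correct and follows essentially the same approach as the paper: the argument given in the text preceding the proposition is precisely the interpolation along $W_t = tW_1 + (1-t)W_2$, differentiation of the elliptic equation to obtain $\frac{\D}{\D t}{}^{(W_t)}G + {}^{(W_t)}G_1$ on the right, elliptic estimates under \eqref{assumption1}, transfer of background norms to $S_{W_t}$ via \eqref{SobolevCb} (requiring \eqref{assumption3} with $\varepsilon_W$ small), comparison of $S_{W_t}$ and $S_{\ub,0}$ norms via \eqref{assumption2}, and integration in $t$. Your identification of the roles of each assumption and of the bookkeeping as the only real difficulty matches the paper exactly.
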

\begin{remark}
It is not hard to see  \eqref{assumption1}, \eqref{assumption2} hold for some $C$ depending on the bound in the assumption \eqref{assumption0} and $\sup_s\|(r\nablas)^{\le2} W_t(s,\cdot)\|_{L^2(S_{\ub,0})}$ if $\varepsilon$ is sufficiently small.
\end{remark}

We apply the above proposition on $\Cb_0$ to prove the existence of the canonical foliation on $\Cb_0$ if $\varepsilon$ is sufficiently small. Given $\varepsilon>0$, we define the function space $\mathcal{K}_{0,\varepsilon}\subset C([0,\varepsilon],H^2(S_{0,0}))$ to be the collection of functions $W$ which satisfy
\begin{align*}
\begin{cases}W(0,\theta)=0,W\ge0\\\sup_s\|(r\nablas)^{\le2}W(s,\cdot)\|_{L^2(S_{0,0})}\le C_{\mathcal{K}_{0,\varepsilon}}.\end{cases}
\end{align*}

We will choose $C_{\mathcal{K}_{0,\varepsilon}}$ sufficiently small such that the assumption \eqref{assumption3} holds for suitable $\varepsilon_W$. Notice that now the constant $C'$ above depends on $C(\mathcal{O}_0,\mathcal{R}_0,\underline{\mathcal{R}}_0)$. We simply take first $W_1=W$ and $W_2=0$. Then by \eqref{contraction},
\begin{align*}
&\|(r\nablas)^{\le2}W(s,\cdot)\|_{L^2(S_{0,0})}\\
\lesssim&\int_0^s\|(r\nablas)^{\le2}{}^{((W)_{\mathcal{A}})}({}^{(W_\mathcal{A})}\Omega^2(W(s',\cdot),\cdot)\Omega^{-2}(W(s',\cdot),\cdot))\|_{L^2(S_{0,0})}\D s\\
\lesssim& \varepsilon C(\mathcal{O}_0,\mathcal{R}_0,\underline{\mathcal{R}}_0,C_{\mathcal{K}_{0,\varepsilon}})\sup_s\|(r\nablas)^{\le2}W(s,\cdot)\|_{L^2(S_{0,0})}\\
\le&\frac{1}{2}C_{\mathcal{K}_{0,\varepsilon}}.
\end{align*}
We remark that we should use \eqref{SobolevCb} again here because we compare the bound of background lapse $\Omega$ on the sphere $S_{0,s}$ to the bound on $S_{W(s)}$. Now we have verified $\mathcal{A}(\mathcal{K}_{0,\varepsilon})\subset \mathcal{K}_{0,\varepsilon}$. To see $\mathcal{A}:\mathcal{K}_{0,\varepsilon}\to\mathcal{K}_{0,\varepsilon}$ is a contraction, we only need to apply \eqref{contraction} again for any two $W_1$ and $W_2$.

We then complete the Step 1 of the proof.


\subsection{Estimates for Connection}\label{SectionConnection}

In this subsection, we derive estimates for connection coefficients in terms of curvature components and complete the Step 2 of the proof. Remamber that we work on the space-time with the double null coordinate $(\ub,u)$ with $(\ub,u)\in[0,\ub_*]\times[0,\varepsilon]$, and $u$ is canonical on $\Cb_{\ub_*}$. Precisely, we establish the following proposition. 
\begin{proposition}\label{connection3}
Assume that $\mathcal{R},\underline{\mathcal{R}}<\infty$. Then we have
\begin{align*}
\mathcal{O}[\chibh,\tr\chib,\etab,\omega,\chih,\tr\chi,\Omega\tr\chi-\overline{\Omega\tr\chi},\eta]\le C(\mathcal{O}_0,\mathcal{R}_0), \ \mathcal{O}[\omegab]\le C(\mathcal{O}_0,\mathcal{R}).
\end{align*}
\end{proposition}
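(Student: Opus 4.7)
The plan is a bootstrap argument. Adopt the bootstrap assumption $\mathcal{O} \le \Delta'$ on $M_{\ub_*, \varepsilon}$ for a large parameter $\Delta'$ depending only on the initial data, to be improved to the stated bounds by choosing $\varepsilon$ sufficiently small. First I would establish the required estimates on the initial cone $C_0$: $\chih$ and $\tr\chi$ are contained in $\mathcal{O}_0$, while $\chibh, \tr\chib, \etab, \eta, \omega$ on $C_0$ are recovered by integrating their $D$-transport equations along the outgoing generators from $S_{0,0}$, with $\mathcal{R}_0$ supplying the curvature source terms. The first part of Lemma \ref{evolution}, applied with the $r$-weights prescribed by \eqref{definitionp}, yields uniform-in-$r$ bounds at every $S_{\ub, 0}$ depending only on $\mathcal{O}_0$ and $\mathcal{R}_0$.

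Next, propagate each $\Gamma \in \{\chibh, \tr\chib, \etab, \omega, \chih, \tr\chi, \eta\}$ into the interior using its $\Db$-transport equation (or a renormalized variant such as $\Dbh(\Omega\chih)$ or $\Db(\Omega\tr\chi)$). The essential point is that $u$ varies only over $[0, \varepsilon]$, and the second part of Lemma \ref{evolution}, which applies once the bootstrap gives $|\Omega\chibh| + |\Omega\tr\chib| \le C$ on each $\Cb_{\ub}$, then produces estimates of the schematic form
\begin{align*}
\mathcal{O}[\Gamma] \le C(\mathcal{O}_0, \mathcal{R}_0) + C \sqrt{\varepsilon}\,(\Delta' + \mathcal{R} + \underline{\mathcal{R}}),
\end{align*}
where the $\sqrt{\varepsilon}$ factor arises from H\"older in $u$. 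Taking $\varepsilon$ small enough absorbs the bootstrap and full-curvature contributions, closing the bootstrap with constant $C(\mathcal{O}_0, \mathcal{R}_0)$.

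For the higher-derivative norms $\mathcal{O}^{i,4}$ and $\mathcal{O}^{i,2}$ with $i \le 2$, I would commute $\nablas^i$ through the transport equations using Lemma \ref{commutator}; whenever a derivative loss threatens, replace the transport step for top-order derivatives by the Hodge systems of Lemma \ref{hodgesystems} and apply the elliptic estimates of Lemma \ref{elliptic}. This trades top-order derivatives of connection for top-order curvature norms on $S_{\ub,u}$, which are again absorbed by $\sqrt{\varepsilon}$ after $\Db$-integration or by the initial contribution $\mathcal{R}_0$ on $C_0$.

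Finally, the estimate for $\omegab$ is structurally different since $\omegab \notin \mathcal{O}_0$. I would obtain $\omegab|_{\Cb_{\ub_*}}$ from the canonical foliation condition \eqref{lastslice}, namely by differentiating that relation in the $\Db$-direction and solving the resulting elliptic equation for $\omegab$ on $\Cb_{\ub_*}$ via Lemma \ref{elliptic} (using that all other connection coefficients and $\rho$ on $\Cb_{\ub_*}$ are already controlled). I would then propagate backward in $\ub$ along each outgoing cone using $D\omegab = \Omega^2(2(\eta, \etab) - |\eta|^2 - \rho)$. Since this integration runs over the full range $[0, \ub_*]$ with no $\varepsilon$-gain, the $\rho$ contribution must be bounded by the full $\mathcal{R}$-norm on each $C_u$, which accounts for the weaker dependence $C(\mathcal{O}_0, \mathcal{R})$. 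The main obstacle throughout is sequencing the bootstrap so that each $\Gamma$ is bounded before it appears on the right-hand side of subsequent equations; the natural order $\chibh, \tr\chib \to \etab, \eta \to \chih, \tr\chi \to \omega \to \omegab$ works, but the elliptic determination of $\omegab$ on $\Cb_{\ub_*}$ presumes that essentially all the previous estimates have been established there.
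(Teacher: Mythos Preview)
Your broad outline is right — bootstrap on $\mathcal{O}$, $\Db$-transport from $C_0$ with $\varepsilon$-smallness, and the special treatment of $\omegab$ via the last slice — but there is a genuine gap in the handling of the top-order derivatives.

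The problem is that the $\Db$-equations for $\Omega\chih$, $\Omega\tr\chi$, and $\eta$ each contain a first-order derivative on the right-hand side: $\Dbh(\Omega\chih)$ and $\Db(\Omega\tr\chi)$ involve $\nablas\eta$, while $\Db\eta$ involves $\ds\omegab$. After commuting $\nablas^2$, you therefore need $\nablas^3\eta$ and $\nablas^3\omegab$, which lie outside $\mathcal{O}$ and outside your single bootstrap assumption. The paper deals with this by carrying a \emph{second} bootstrap assumption on third-order norms such as $\sup_{\ub}\|(r\nablas)^3(r^{1/2}\eta,\omegab)\|_{L^2(\Cb_{\ub})}$, and then closing that bootstrap separately (the paper's Proposition~\ref{connection2}).

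Your one-line fix — invoke the Hodge systems of Lemma~\ref{hodgesystems} — is the right idea, but the systems do not trade connection derivatives directly for curvature. They introduce the auxiliaries $\mu,\mub,\kappa,\kappab$, each with its own transport equation. Crucially, $\mu$ and $\kappab$ (and also $\nablas^3(\Omega\tr\chi)$) satisfy $D$-transport equations, so they must be integrated along the \emph{outgoing} direction from an incoming cone. The paper integrates from the last slice $\Cb_{\ub_*}$, where the canonical foliation condition \eqref{lastslice} kills the dangerous terms (e.g.\ $\nablas^2\mu=0$ there, and $\divs\eta$ drops out of $\Db(\Omega\tr\chi)$). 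Your proposal uses the last slice only for $\omegab$ itself; in fact it is needed for the whole third-order hierarchy behind $\chih,\tr\chi,\eta,\omegab$. Your proposed ordering also breaks: $\Db\eta$ contains $\ds\omegab$, so you cannot close $\eta$ at top order before $\omegab$ via $\Db$-transport alone.

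A minor point: your first step, recovering $\chibh,\tr\chib,\etab,\eta,\omega$ on $C_0$ by $D$-transport from $S_{0,0}$, is unnecessary. These are already contained in $\mathcal{O}_0$ by hypothesis (only $\omegab$ is excluded).
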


We will prove this proposition in this subsection. We first introduce: 

{\bf Bootstrap assumption (1)}: $\mathcal{O}\le\Delta_1$,

{\bf Bootstrap assumption (2)}:\\
\begin{align*}& \sup_{u}\|(r\nablas)^3(\etab,\chih,r\omega),(r\nablas)^2(r\mub),(r\nablas)(r^3\kappa))\|_{L^2(C_u)}\le\Delta_2,\\
&\sup_{\ub}\|(r\nablas)^{3}(r^{1/2}\eta,\chibh,\omegab),(r\nablas)^2(r^{3/2}\mu),(r\nablas)(r^2\kappab)\|_{L^2(\Cb_{\ub})}\le\Delta_2,\\
&\sup_{\ub,u}\|(r\nablas)^3(r\tr\chi,\tr\chib)\|_{L^2(S_{\ub,u})}\le\Delta_2.
\end{align*}

We will prove the following two propositions. Then Proposition \ref{connection3} follows directly by a bootstrap argument.

\begin{proposition}\label{connection1}
Assume that $\mathcal{R}, \underline{\mathcal{R}}<\infty$ and bootstrap assumptions (1)(2) hold (in fact, we use only the bootstrap assumption (1) and $\sup_{\ub}\|(r\nablas)^3(r^{1/2}\eta,\omegab)\|_{L^2(\Cb_{\ub})}\le\Delta_2$). Then for $\varepsilon$ sufficiently small, we have
\begin{align*}
\mathcal{O}[\chibh,\tr\chib,\etab,\omega,\chih,\tr\chi,\Omega\tr\chi-\overline{\Omega\tr\chi},\eta]\le C(\mathcal{O}_0,\mathcal{R}_0), \ \mathcal{O}[\omegab]\le C(\mathcal{O}_0,\mathcal{R})
\end{align*}
In particular, we can choose $\Delta_1$ sufficiently large such that $\mathcal{O}\le\frac{1}{2}\Delta_1$.
\end{proposition}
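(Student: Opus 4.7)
My plan is to run a hierarchy of transport and elliptic estimates in the double null foliation, exploiting that initial data is prescribed only on $C_0$ (the $u=0$ slice) and that $u$ ranges over the small interval $[0,\varepsilon]$. Since every connection coefficient other than $\omegab$ is controlled on $C_0$ by $\mathcal{O}_0$, I would propagate each $\Gamma \in \{\chibh,\tr\chib,\etab,\omega,\chih,\tr\chi,\Omega\tr\chi-\overline{\Omega\tr\chi},\eta\}$ using its $\Db$-equation (from the structure equations in Section 2.2 and the Hodge systems of Lemma \ref{hodgesystems}), invoking the incoming Gronwall inequality of Lemma \ref{evolution} from $u=0$ to $u\le\varepsilon$. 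Under the bootstrap $\mathcal{O}\le\Delta_1$, nonlinear products of connection coefficients on the right-hand side are bounded by polynomials in $\Delta_1$, while the curvature terms that appear are $L^2_u L^2(S)$-controlled by $\underline{\mathcal{R}}$. Integrating over $u\in[0,\varepsilon]$ produces an $\varepsilon$- (or $\sqrt{\varepsilon}$-, via Cauchy--Schwarz) factor in front of these error terms, so for $\varepsilon$ small enough (depending on $\Delta_1,\mathcal{R},\underline{\mathcal{R}}$) all contributions other than the initial data will be absorbed, yielding $\mathcal{O}[\Gamma]\le C(\mathcal{O}_0,\mathcal{R}_0)$. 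The $L^\infty$, $L^4$ and $L^2$ levels will be handled in order using the Sobolev and isoperimetric inequalities of Section \ref{basicinequalities}.

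\textbf{Top-order estimates.} Commuting a transport equation three times with $\nablas$ via Lemma \ref{commutator} produces an uncontrolled derivative of $\chi$ or $\chib$, so the naive route loses a derivative. I would resolve this via the Hodge--transport structure of Lemma \ref{hodgesystems}: the mass-aspect functions $\mu,\mub$ and the lapse-mass functions $\kappa,\kappab$ satisfy transport equations that do not lose regularity, and the pairs $(\eta,\mu)$, $(\etab,\mub)$, $(\Omega\chih,\Omega\tr\chi)$, $(\Omega\chibh,\Omega\tr\chib)$, $(\omega,\kappa)$, $(\omegab,\kappab)$ form first-order elliptic Hodge systems whose right-hand sides involve at most two derivatives of $\Gamma$'s plus a curvature term. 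Accordingly, I would propagate $\mu,\mub,\kappa,\kappab$ along $\Db$ using their transport equations, use the top-order bootstrap (2) to handle the associated nonlinear terms, and recover $\nablas^3\eta,\nablas^3\etab,\nablas^3\chih,\nablas^3\chibh,\nablas^3\omega$ from the elliptic estimates of Lemma \ref{elliptic}. The same $\sqrt{\varepsilon}$-gain continues to apply because all of this top-order transport is still along $\Db$ issued from $C_0$.

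\textbf{The role of $\omegab$ and the main obstacle.} The coefficient $\omegab$ is excluded from the initial data norm $\mathcal{O}_0$, so its estimate cannot start from $C_0$. I would instead use the outgoing transport $D\omegab=\Omega^2(2(\eta,\etab)-|\eta|^2-\rho)$, integrated from $\Cb_{\ub_*}$ (where the canonical foliation condition \eqref{lastslice} fixes $\omegab$ through an elliptic problem whose source involves $\etab,\chih,\chibh,\rho$) backward in $\ub$ over the full slab $[0,\ub_*]$. Since $\ub$ is not small, the $\rho$-term cannot be absorbed and must be bounded by the full $\|\rho\|_{L^2(C_u)}$; this is the source of the $\mathcal{R}$-dependence of $\mathcal{O}[\omegab]$, and is why the statement separates $\mathcal{O}[\omegab]\le C(\mathcal{O}_0,\mathcal{R})$ from the rest. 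The hardest part will be the back-reaction of $\omegab$ onto the $\eta$-estimate through $\Db\eta=-\Omega(\chib\cdot\eta+\betab)+2\ds\omegab$ (and analogously for higher derivatives and for $\kappa,\kappab$), which I plan to handle by arranging the order of estimates so that the $\ds\omegab$-term enters only after $u$-integration, producing a $\sqrt{\varepsilon}\cdot\mathcal{O}[\omegab]$-factor that is absorbed into the left-hand side once $\varepsilon$ is taken sufficiently small depending on $\mathcal{R}$.
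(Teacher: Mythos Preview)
Your overall strategy—$\Db$-transport from $C_0$ for all coefficients except $\omegab$, and $D$-transport from the last slice $\Cb_{\ub_*}$ for $\omegab$ after solving an elliptic problem there via the canonical condition \eqref{lastslice}—is exactly the paper's approach, and the $\sqrt{\varepsilon}$ absorption mechanism you describe is the right one.

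Two points of confusion, however, should be cleaned up. First, your ``Top-order estimates'' paragraph describes the machinery of Proposition~\ref{connection2}, not Proposition~\ref{connection1}. The norm $\mathcal{O}$ only involves derivatives up to order two, so commuting $\nablas^i$ with the $\Db$-equations for $i\le 2$ does \emph{not} lose a derivative: the commutators $[\Db,\nablas^i]$ contain at most $\nablas^2(\Omega\chib)$, which is controlled by $\mathcal{O}$ via bootstrap (1). The Hodge systems for $(\eta,\mu)$, $(\etab,\mub)$, $(\omega,\kappa)$, $(\omegab,\kappab)$ are not needed here; the paper simply integrates $\Db\Gamma_p=\underline{R}_{p'}+\nablas\Gamma''_{p''}+\sum\Gamma_{p_1}\Gamma_{p_2}$ directly (the $\nablas\Gamma''$ term is present only for $\Gamma\in\{\Omega\chih,\Omega\tr\chi,\Omega\tr\chi-\overline{\Omega\tr\chi},\eta\}$, with $\Gamma''\in\{\eta,\omegab\}$).

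Second, and relatedly, your proposed handling of the $\ds\omegab$ feedback into $\eta$ via ``$\sqrt{\varepsilon}\cdot\mathcal{O}[\omegab]$'' does not close at the top level: estimating $\mathcal{O}^{2,2}[\eta]$ requires $\nablas^2(\ds\omegab)=\nablas^3\omegab$, which is \emph{not} contained in $\mathcal{O}[\omegab]$. This is precisely why the statement of the proposition singles out the piece $\sup_{\ub}\|(r\nablas)^3(r^{1/2}\eta,\omegab)\|_{L^2(\Cb_{\ub})}\le\Delta_2$ of bootstrap (2): that bound, after Cauchy--Schwarz in $u$, gives $\varepsilon^{1/2}\Delta_2$ and is absorbed. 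Once you replace $\mathcal{O}[\omegab]$ by this third-derivative bootstrap input, your argument goes through and coincides with the paper's.
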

\begin{proposition}\label{connection2}
Assume that $\mathcal{R},\underline{\mathcal{R}}<\infty$ and bootstrap assumptions (1)(2) hold. Then for $\varepsilon$ sufficiently small, and $\Delta_2$ sufficiently large, the norms in bootstrap assumption (2) are bounded by $\frac{1}{2}\Delta_2$. 
\end{proposition}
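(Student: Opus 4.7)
The plan is to close a top-order bootstrap. By Proposition~\ref{connection1}, every quantity in $\mathcal{O}$ is already bounded by $C(\mathcal{O}_0,\mathcal{R}_0,\mathcal{R})$, so when I commute $\nablas^3$ through the null structure equations of Lemma~\ref{hodgesystems}, the source terms produced by the commutators of Lemma~\ref{commutator} will each contain at most one factor carrying the bootstrap norm $\Delta_2$, multiplied by quantities controlled through $\Delta_1$ and the curvature norms $\mathcal{R},\underline{\mathcal{R}}$. All the smallness needed to absorb those products will come either from the length $\varepsilon$ of the $u$-interval or from the $r$-decay of the connection coefficients guaranteed by $\mathcal{O}$.

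I would carry out the estimates in the order dictated by the coupling of the equations. First, on $\Cb_{\ub}$: commute $\nablas^3$ with $\Db(\Omega\tr\chib)$, apply the incoming Gronwall of Lemma~\ref{evolution} on $[0,u]\subset[0,\varepsilon]$, then use Lemma~\ref{elliptic} on the Codazzi equation to trade $\nablas^3\chibh$ for $\nablas^3\tr\chib$ and $\nablas^2\betab$, the latter controlled through $\underline{\mathcal{R}}$. Next, commute $\nablas^3$ with $\Db\etab=\Omega(\chib\cdot\eta+\betab)$ to estimate $\nablas^3\etab$, and close using the $(\etab,\mub)$ Hodge system: its transport part yields $\nablas^2\mub$ and its elliptic part sharpens $\nablas^3\etab$. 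On $C_u$, the symmetric procedure with $\Dbh(\Omega\chih)$, $\Db(\Omega\tr\chi)$, $D\eta$ and the $(\eta,\mu)$-system controls $\chih,\tr\chi,\eta,\mu$; here Lemma~\ref{evolution} is applied with appropriate weight $\nu$ so that the $r$-decay compensates for the fact that $C_u$ reaches all the way to $\ub_*$. For $\omega$ and $\omegab$, the quasi-Laplace equations $\Deltas\omega=\kappa-\divs(\Omega\beta)$ and $\Deltas\omegab=\kappab+\divs(\Omega\betab)$ reduce the control of $\nablas^3\omega,\nablas^3\omegab$ to that of $\nablas\kappa,\nablas\kappab$ via Lemma~\ref{elliptic}, and the transport equations for $\kappa,\kappab$ in Lemma~\ref{hodgesystems}, commuted with one $\nablas$, close the loop; the inhomogeneities $m,\underline{m}$ contain $\ds\rho,\ds\sigma$ (absorbed by $\mathcal{R}$) together with products of connection coefficients each carrying either an $\varepsilon$-factor from $u$-integration or a decay factor from $\mathcal{O}$.

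After assembling these bounds, each norm in bootstrap assumption~(2) is dominated by an expression of the shape $C(\mathcal{O}_0,\mathcal{R}_0,\underline{\mathcal{R}}_0,\mathcal{R},\underline{\mathcal{R}})+\varepsilon\,P(\Delta_1,\Delta_2)$ for some polynomial $P$; choosing $\Delta_2$ large enough that the first summand is at most $\Delta_2/4$, and then $\varepsilon$ small enough that $\varepsilon P(\Delta_1,\Delta_2)\le\Delta_2/4$, yields the required improvement to $\Delta_2/2$. The hard part will be the top-order coupling $-2(\Omega\chih,\nablas^2\omegab)$ in the $D\kappab$ equation: it is genuinely of highest order and sits on outgoing cones extending to $\ub_*$, so its smallness cannot be drawn from $\varepsilon$ and must instead be supplied by the $r^{-2}$ pointwise decay of $\Omega\chih$ coming from $\mathcal{O}$. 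Matching the weight $\nu$ in Lemma~\ref{evolution} to the $r$-scalings of $\kappab$ and $\omegab$ so that this cancellation is preserved is the delicate point of the closure, and it is where the reductive structure highlighted in the introduction is essential.
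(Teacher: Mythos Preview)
Your overall architecture is right, but the step for $\chih$ and $\tr\chi$ does not close as written. You propose to commute $\nablas^3$ with $\Dbh(\Omega\chih)$ and $\Db(\Omega\tr\chi)$; both of these equations carry $\nablas\hat\otimes\eta$ (respectively $2\divs\eta$) on the right, so after three commutations you face $\nablas^4\eta$, one derivative beyond anything in the bootstrap or the curvature norms. Substituting the div--curl system for $\eta$ does not help: it trades $\nablas^4\eta$ for $\nablas^3\rho,\nablas^3\sigma,\nablas^3\mu$, and $\nablas^3\rho$ is outside $\mathcal{R}$. The paper avoids this loss by using the \emph{outgoing} equation $D(\Omega\tr\chi)=-\tfrac12(\Omega\tr\chi)^2-|\Omega\chih|^2+2\omega(\Omega\tr\chi)$, whose right-hand side carries no derivatives at all, and then recovers $\nablas^3\chih$ from the Codazzi equation.

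That choice forces one to integrate $D(\Omega\tr\chi)$ backward from the last slice $\Cb_{\ub_*}$, and the key input you are missing is that the \emph{canonical foliation} supplies the top-order initial data there: on $\Cb_{\ub_*}$ the equation for $\Db(\Omega\tr\chi)$ simplifies (the $\divs\eta$ term drops out via \eqref{lasteta}), so $\nablas^3(\Omega\tr\chi)$ on the last slice is controlled by lower-order quantities only. The same mechanism is what makes the $(\eta,\mu)$- and $(\omegab,\kappab)$-systems close: $\mu=\overline{\mu}$ on $\Cb_{\ub_*}$ gives $\nablas^2\mu|_{\Cb_{\ub_*}}=0$, and \eqref{lastomegab} gives $\nablas\kappab|_{\Cb_{\ub_*}}$ in terms of already-estimated quantities, so the $D$-transport of $\nablas^2\mu$ and $\nablas\kappab$ can start from controlled data. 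Your proposal never invokes the canonical condition on $\Cb_{\ub_*}$, and without it the $D$-transport equations for $\tr\chi$, $\mu$, $\kappab$ have no usable initial values at top order.
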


\begin{proof}[Proof of Proposition \ref{connection1}]
We first establish the following lemma, which says that the geometric quantities share the same estimates up to a multiple by $\Omega$ and we need not to distinguish $\ub$ and $r$:
\begin{lemma}If $\varepsilon>0$ is sufficiently small depending on $\Delta_1,\Delta_2$, then 
\begin{align*}
 &\|(r\nablas)^{\le1}\log\Omega\|_{L^\infty(S_{\ub,u})}+ \|r^{-1/2}(r\nablas)^2\log\Omega\|_{L^4(S_{\ub,u})}+\|r^{-1}(r\nablas)^{3}\log\Omega\|_{L^2(S_{\ub,u})}\\
\le& r^{-1}C(\mathcal{O}_0,\mathcal{R}_0).
\end{align*}
In particular, we have $C(\mathcal{O}_0,\mathcal{R}_0)^{-1}\le r(\Omega-1)\le C(\mathcal{O}_0,\mathcal{R}_0)$. Also, we have $$C(\mathcal{O}_0)^{-1}(1+\ub)\le r\le C(\mathcal{O}_0)(1+\ub).$$
\end{lemma}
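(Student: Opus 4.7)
I would split the estimate on $\log\Omega$ into two independent pieces: the angular derivatives, handled by the structural identity $\ds\log\Omega = \tfrac12(\eta+\etab)$; and the pointwise value of $\log\Omega$ itself, recovered from a transport equation plus a boundary condition on the canonical slice $\Cb_{\ub_*}$.

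\textbf{Step A (angular derivatives).} From $\nablas\log\Omega = \tfrac12(\eta+\etab)$ I obtain inductively $\nablas^k\log\Omega = \tfrac12\nablas^{k-1}(\eta+\etab)$ for $k\ge 1$. Since $p(\eta)=p(\etab)=2$, plugging in the bootstrap bounds from $\mathcal{O}^{0,\infty}[\eta,\etab]$, $\mathcal{O}^{1,4}[\eta,\etab]$, and $\mathcal{O}^{2,2}[\eta,\etab]$ immediately gives
\[
\|(r\nablas)\log\Omega\|_{L^\infty(S_{\ub,u})}+\|r^{-1/2}(r\nablas)^2\log\Omega\|_{L^4(S_{\ub,u})}+\|r^{-1}(r\nablas)^3\log\Omega\|_{L^2(S_{\ub,u})}\lesssim r^{-1}.
\]

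\textbf{Step B (value on $\Cb_{\ub_*}$).} Membership of $\ub_*$ in $\mathcal{A}_{\varepsilon,\Delta}$ gives
\[
\Deltas\log\Omega = \tfrac12\divs\etab + \tfrac14\bigl((\chih,\chibh) - \overline{(\chih,\chibh)}\bigr) - \tfrac12(\rho-\bar\rho),\qquad \overline{\log\Omega}=0.
\]
I apply the Poincar\'e inequality (Lemma~\ref{poincare}) together with the elliptic estimate from Lemma~\ref{elliptic} (or standard $\Deltas^{-1}$ elliptic theory on $S_{\ub_*,u}$), using the decay of $\etab,\chih,\chibh$ and the $L^2$ bound on $\rho-\bar\rho$ inherited from $\underline{\mathcal{R}}$. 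This yields $\|\log\Omega\|_{L^\infty(S_{\ub_*,u})}\lesssim r^{-1}$ on every section of $\Cb_{\ub_*}$.

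\textbf{Step C (propagation in $\ub$).} Integrating $D\log\Omega=\omega$ backwards from $\ub_*$,
\[
\log\Omega(\ub,u,\theta)=\log\Omega(\ub_*,u,\theta)-\int_{\ub}^{\ub_*}\omega(\ub',u,\theta)\,d\ub'.
\]
Because $p(\omega)=3$ gives $\|r^3\omega\|_{L^\infty}\lesssim \Delta_1$ and $r\sim\ub'+1$, the integral is $O(r^{-2})$, hence subleading to the $r^{-1}$ bound from Step B. Combined with Step A this closes the pointwise estimate on every $S_{\ub,u}$, from which $|r(\Omega-1)|\le C(\mathcal{O}_0,\mathcal{R}_0)$ follows by Taylor expansion.

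\textbf{Step D (size of $r$).} On $C_0$ the initial data assumption gives $r(\ub,0)\sim 1+\ub$. For $u>0$ I integrate $\Db r=\tfrac12\overline{\Omega\tr\chib}\,r$ in $u$ and use the bootstrap bound on $\|r\,\Omega\tr\chib\|_{L^\infty}$: for $u\in[0,\varepsilon]$ with $\varepsilon$ small, this forces $\log\bigl(r(\ub,u)/r(\ub,0)\bigr)=O(\varepsilon)$, giving the claimed relation with a constant depending only on $\mathcal{O}_0$.

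\textbf{Main obstacle.} The most delicate point is Step B: tracking that the elliptic and Sobolev constants appearing in Lemma~\ref{elliptic} on $\Cb_{\ub_*}$ are controlled by quantities depending on $\mathcal{O}_0,\mathcal{R}_0$ rather than on the bootstrap $\Delta_i$'s. This requires a priori control of the Gauss curvature $K$ and the isoperimetric constant of $S_{\ub_*,u}$, which in the bootstrap framework is arranged by choosing $\varepsilon$ small enough that the induced metric remains close to its value on $C_0$, so that $C_K$ and $I(S_{\ub_*,u})$ are dominated by the corresponding initial quantities. Modulo this consistency check, Steps A--D are routine weighted transport and elliptic estimates.
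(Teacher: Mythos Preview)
Your argument has a genuine gap: the constants you produce depend on the bootstrap quantities $\Delta_1$ (and implicitly $\Delta$), not on $\mathcal{O}_0,\mathcal{R}_0$ alone, which is precisely what the lemma asserts.

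In Step A you invoke $\mathcal{O}^{0,\infty}[\eta,\etab]$, $\mathcal{O}^{1,4}[\eta,\etab]$, $\mathcal{O}^{2,2}[\eta,\etab]$ directly. At this stage of the bootstrap these are only known to be $\le\Delta_1$, so your bound is $r^{-1}\Delta_1$, with no small factor of $\varepsilon$ to absorb $\Delta_1$. The improved bound $\mathcal{O}[\eta,\etab]\le C(\mathcal{O}_0,\mathcal{R}_0)$ is the content of Proposition~\ref{connection1}, whose proof \emph{uses} this lemma; invoking it here is circular. The same issue appears in Step B: on $S_{\ub_*,u}$ with $u>0$ the right-hand side of the canonical equation involves $\etab,\chih,\chibh$, which are only controlled by $\Delta_1$. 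And in Step C, integrating $|\omega|\lesssim\Delta_1 r^{-3}$ in $\ub$ gives $\Delta_1 r^{-2}$, which for $\ub$ bounded (so $r\sim 1$) is of size $\Delta_1$ and again cannot be made $\le C(\mathcal{O}_0,\mathcal{R}_0)$.

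The paper's route is designed exactly to avoid this. It uses the canonical equation only on the single sphere $S_{\ub_*,0}\subset C_0$, where the right-hand side is expressed in terms of quantities invariant under change of $u$-foliation on $C_0$ (namely $\etab,\Omega\chih,\chibh'$ and $\rho$), hence controlled by $\mathcal{O}_0,\mathcal{R}_0$. It then observes that $\Omega$ is constant along the null generators of $C_0$ (since the $\ub$-foliation on $C_0$ is fixed), extending the bound to all of $C_0$. Finally it propagates in the \emph{short} $u$-direction via $\Db\log\Omega=\omegab$: integrating over $[0,\varepsilon]$ produces $\varepsilon\Delta_1 r^{-1}$, and now the factor $\varepsilon$ absorbs $\Delta_1$. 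The angular derivatives are obtained by commuting $\nablas^k$ with this same $\Db$-equation, again gaining $\varepsilon$. Your identity $\ds\log\Omega=\tfrac12(\eta+\etab)$ is correct but, used raw, forfeits the $\varepsilon$-smallness that makes the bootstrap close.
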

\begin{proof}Recall that $u|_{\Cb_{\ub_*}}$ is canonical. That is, on $S_{\ub_*,0}$, $\Omega$ satisfies the equation
\begin{align*}
\overline{\log\Omega}=0,\quad \Deltas\log\Omega=\frac{1}{2}\divs\etab+\frac{1}{2}\left(\frac{1}{2}((\chih,\chibh)-\overline{(\chih,\chibh)})-(\rho-\overline{\rho})\right).
\end{align*}
The expression on the right hand side is invariant, in particular, does not depend on $\Omega$. Therefore, by $L^2$ elliptic estimate and Sobolev inequalities (which are valid because we are working on initial null cone $C_0$), we have
\begin{align*}
 &\|(r\nablas)^{\le1}\log\Omega\|_{L^\infty(S_{\ub_*,0})}+ \|r^{-1/2}(r\nablas)^2\log\Omega\|_{L^4(S_{\ub_*,0})}+\|r^{-1}(r\nablas)^{3}\log\Omega\|_{L^2(S_{\ub_*,0})}\\
\le& r^{-1}|_{S_{\ub_*,0}}C(\mathcal{O}_0,\mathcal{R}_0).
\end{align*}
Because we do not change the foliation $\ub$ on $C_0$, $\Omega$ is extended as a constant along every null generator of $C_0$. The above estimates then hold along the whole $C_0$. 

We introduce an auxiliary bootstrap assumption: $(4C_r)^{-1}(1+\ub)\le r\le 4C_r(1+\ub)$. Then the conclusion about  $\|\log\Omega\|_{L^\infty(S_{\ub,u})}$ by the equation $\Db\log\Omega=\omegab$:
\begin{align*}
|\log\Omega|\le|\log\Omega|_{S_{\ub,0}}|+\int_0^u|\omegab|\D u'\le r^{-1}|_{S_{\ub_*,0}}C(\mathcal{O}_0,\mathcal{R}_0)+\varepsilon(\min_{\Cb_{\ub}}r)^{-1}\Delta_1,
\end{align*}
and then multiply both sides by $r$ and choose $\varepsilon$ sufficiently small.

We now go to the estimate on $r$. This follows directly from the equation $\Db r=r\overline{\Omega\tr\chib}/2$ and the estimate on $\Omega$ derived above. Therefore the auxillary bootstrap assumption above can be improved and actually holds. The equation $\Db r=r\overline{\Omega\tr\chib}/2$ follows from the variation of area:
$$\Db\int_{S_{\ub,u}}\D\mu_{\gs}=\int_{S_{\ub,u}}\Omega\tr\chib\D\mu_{\gs}.$$

We then go back to $\log\Omega$ and its derivatives. It is obvious now the conclusion about $\log\Omega$ is true because the auxillary bootstrap assumption is in fact true. We commute $\nablas$ three times to the equation $\Db\log\Omega=\omegab$ and then apply Gronwall type inequality to obtain the conclusion.
\end{proof}

The bootstrap assumption (1), the above lemma together with the bounds of $\Lambda(s)$, $\lambda(s)$ ensure the validity in the space-time of the basic inequalities, Lemma \ref{Sobolev}, Lemma \ref{poincare} and Lemma \ref{evolution} in Section \ref{basicinequalities}, which are used frequently. The constants will depend in addition on $\mathcal{O}_0$ if $\varepsilon$ is sufficiently small. We will not prove them here. They can be found in Chapter 4, 5 and 7 in \cite{Chr} or in \cite{Luk} (under a slightly stronger assumption).

Now we turn to the estimates for the connection coefficients. As the first step, we consider the structure equations for $\Db\Gamma$ where $\Gamma\in\{\chibh', \tr\chib', \etab, \omega\}$, which can be written in the following form:
\begin{align*}
\Db\Gamma_p=\underline{R}_{p'}+\sum\Gamma_{p_1}\Gamma_{p_2}
\end{align*}
where $\underline{R}\in\{\rho,\betab,\alphab\}$. Notice that in the above equations, $p\le\min\{p',p_1+p_2\}$.Then applying Gronwall type estimates, we have
\begin{align*}
\|r^p\Gamma_p\|_{L^\infty(S_{\ub,u})}&\lesssim\|r^p\Gamma_p\|_{L^\infty(S_{\ub,0})}+\int_0^u\|r^p \underline{R}_{p'}\|_{L^\infty(S_{\ub,u'})}\D u'+\int_0^u\|r^p \Gamma_{p_1}\Gamma_{p_2}\|_{L^\infty(S_{\ub,u'})}\D u'.
\end{align*}
Therefore, by applying Sobolev inequality for the curvature term,
\begin{align*}
\mathcal{O}^{0,\infty}[\chibh,\tr\chib,\etab,\omega]\lesssim\mathcal{O}_0+\varepsilon^{1/2}\underline{\mathcal{R}}+\varepsilon(\Delta_1)^2
\end{align*}
which implies that for $\varepsilon$ sufficiently small, $\mathcal{O}^{0,\infty}[\chibh,\tr\chib,\etab,\omega]\le C(\mathcal{O}_0)$.

We then commute $\nablas$ to the structure equations, for $i\le2$
\begin{align*}
\Db\nablas^i\Gamma_p=\nablas^i\underline{R}_{p'}+\sum\nablas^i(\Gamma_{p_1}\Gamma_{p_2})+\sum_{j=1}^i\nablas^j(\Omega\chib)\nablas^{i-j}\Gamma_p.
\end{align*}
The last term comes from the commutator $[\Db,\nablas^i]$, see Lemma \ref{commutator}. We estimate the nonlinear terms on the right in the following way: 
\begin{align*}
&\|r^p(r\nablas)(\Gamma_{p_1}\Gamma_{p_2})\|_{L^4(S_{\ub,u})}\\
\lesssim&\|r^{p_1}\Gamma_{p_1}\|_{L^\infty(S_{\ub,u})}\|r^{p_2}(r\nablas)\Gamma_{p_2}\|_{L^4(S_{\ub,u})}+\|r^{p_2}(r\nablas)\Gamma_{p_1}\|_{L^4(S_{\ub,u})}\|r^{p_2}\Gamma_{p_2}\|_{L^\infty(S_{\ub,u})}
\end{align*}
and
\begin{align*}
&\|r^p(r\nablas)^2(\Gamma_{p_1}\Gamma_{p_2})\|_{L^2(S_{\ub,u})}\\
\lesssim&\|r^{p_1}\Gamma_{p_1}\|_{L^\infty(S_{\ub,u})}\|r^{p_2}(r\nablas)^2\Gamma_{p_2}\|_{L^2(S_{\ub,u})}+\|r^{p_2}(r\nablas)^2\Gamma_{p_1}\|_{L^2(S_{\ub,u})}\|r^{p_2}\Gamma_{p_2}\|_{L^\infty(S_{\ub,u})}\\
&+\|r^{p_1}(r\nablas)\Gamma_{p_1}\|_{L^4(S_{\ub,u})}\|r^{p_2}(r\nablas)\Gamma_{p_2}\|_{L^4(S_{\ub,u})}.
\end{align*}
The term comes from commutator is treated in the same way. Therefore by applying evolution lemma, 
\begin{align*}
\mathcal{O}^{1,4}, \mathcal{O}^{2,2}[\chibh,\tr\chib,\etab,\omega]\lesssim\mathcal{O}_0+\varepsilon^{1/2}\underline{\mathcal{R}}
+\varepsilon(\mathcal{O}^{0,\infty}\mathcal{O}^{2,2}+(\mathcal{O}^{1,4})^2)\le\mathcal{O}_0+\varepsilon^{1/2}\underline{\mathcal{R}}+\varepsilon(\Delta_1)^2,
\end{align*}
choosing $\varepsilon$ sufficiently small yields $\mathcal{O}^{1,4}, \mathcal{O}^{2,2}[\chibh,\tr\chib,\etab,\omega]\le C(\mathcal{O}_0)$.

We then consider the structure equation for $\Db\Gamma$ for $\Gamma\in\{\Omega\tr\chi-\overline{\Omega\tr\chi},\Omega\chih,\Omega\tr\chi,\eta\}$, which can be written in the following form:
\begin{align*}
\Db\Gamma_p=R_{p'}+\nablas\Gamma''_{p''}+\sum\Gamma_{p_1}\Gamma_{p_2}
\end{align*}
where $\underline{R}\in\{\rho,\betab\}$, $\Gamma''\in\{\eta,\omegab\}$ and $p\le\min\{p',p'',p_1+p_2\}$. There are no essential differences in the estimate procedure. But notice that in order to estimate bounds of correct regularity, we need the bound $\|(r\nablas)^3(r^{1/2}\eta,\omegab)\|_{L^2(\Cb_{\ub})}$. Therefore, $\mathcal{O}[\Omega\tr\chi-\overline{\Omega\tr\chi},\Omega\chih,\Omega\tr\chi,\eta]\le C(\mathcal{O}_0)$.

\begin{remark}
We should make a remark here that the initial norms of the connection coefficients on $C_0$ are not exactly the same as in the assumptions of Theorem \ref{maintheorem}, because the optical function $u$ depends on which is the last slice and then the vectorfield $L'$ is not invariant on $C_0$ in the bootstrap argument. However, the vectorfield $L$ is invariant because the foliation on $C_0$ does not change. This ensures that the connection coefficients $\chib',\etab,\Omega\chi$ and the covariant derivative $\nablas$ do not change and $\omega$ keeps zero on $C_0$. And the difference between the second order derivatives of $\eta=-\etab+2\ds\log\Omega$ on $C_0$ and the one appears in the assumptions of Theorem \ref{maintheorem} is controlled by up to third order derivatives of $\log\Omega$ on $C_0$.
\end{remark}

Finally, we consider the structure equations for $D\omegab$:
\begin{align*}
D\omegab&=\Omega^2(2(\eta,\etab)-|\eta|^2-\rho).
\end{align*}

This equation should be integrated initiated from $\Cb_{\ub_*}$. Recall the canonical foliation equation on $\Cb_{\ub_*}$ \eqref{lastslice}. In view of the relation $\ds\log\Omega=\frac{1}{2}(\eta+\etab)$, it can be written in the following form:
\begin{align}\label{lasteta}
\divs\eta=\frac{1}{2}((\chih,\chibh)-\overline{(\chih,\chibh)})-(\rho-\overline{\rho}).
\end{align}
We commute $\Db$ with the equation for $\divs\eta$ above,
\begin{align*}
\divs(-\Omega(\chib\cdot\eta+\betab)+2\ds\omegab)-2\divs(\Omega\chibh\cdot\eta)-\Omega\tr\chib\divs\eta
=\Db(\frac{1}{2}((\chih,\chibh)-\overline{(\chih,\chibh)})-(\rho-\overline{\rho})).
\end{align*}
We plug in the equations for $\Db(\Omega\chih)$, $\Db\chibh'$ and $\Db\rho$, and denote $\check{\rho}=\rho-\frac{1}{2}(\chih,\chibh)$. We deduce the equation for $\Deltas\omegab$:
\begin{align}\label{lastomegab}
2\Deltas\omegab=2\divs(\Omega\betab)+\divs(3\Omega\chibh\cdot\eta+\frac{1}{2}\Omega\tr\chib\eta)+\Omega\tr\chib\divs\eta+(F-\overline{F}+\overline{\Omega\tr\chib\check\rho}-\overline{\Omega\tr\chib}\cdot\overline{\check{\rho}})
\end{align}
where
\begin{align*}
F=\frac{3}{2}\Omega\tr\chib\check{\rho}-(\ds\Omega,\betab)+\{(2\eta-\zeta,\betab)-\frac{1}{2}(\chibh,\nablas\tensor\eta+\eta\tensor\eta)+\frac{1}{4}\tr\chi|\chibh|^2\}.
\end{align*}
In addition, $\overline{\omegab}=-\overline{\Omega\tr\chib\log\Omega}$ by $\overline{\log\Omega}=0$.

We are going to estimate the norm $\|\nablas^2\omegab\|_{L^2(S_{\ub_*,u})}$ on the last slice in in terms of the norms on $C_0$ provided that $\varepsilon$ is sufficiently small. This is done by considering the equation \eqref{lastomegab}. We need the following lemma. 
\begin{lemma}\label{curvatureonS} For $i\le1$ and $\varepsilon$ sufficiently small,
$$\|(r\nablas)^i(r^2\rho, r^2\sigma, r\betab,rK)\|_{L^2(S_{\ub,u})}\le C(\mathcal{R}_0,\mathcal{O}_0).$$
\end{lemma}
\begin{proof}[Proof of Lemma \ref{curvatureonS}]
We consider the null Bianchi equations for $\Db\rho$, $\Db\sigma$ and $\Db\betab$ and commute $\nablas$ with them. We can estimate by evolution lemma and Sobolev inequalities
\begin{align*}
&\|(r\nablas)^i(r^2\rho, r^2\sigma, r\betab)\|_{L^2(S_{\ub_*,u})}\\\lesssim&\|(r\nablas)^i(r^2\rho,r^2\sigma, r\betab)\|_{L^2(S_{\ub_*,0})}+\mathcal{O}\int_0^u\sum_{i=0}^2\|(r\nablas)^i(r^2\rho,r^2\sigma, r\betab,\alphab)\|_{L^2(S_{\ub_*,u'})}\D u'\\
\lesssim&\mathcal{R}_0+\varepsilon^{1/2}\Delta_1\underline{\mathcal{R}}\lesssim C(\mathcal{R}_0).
\end{align*}
The estimate for $K$ comes from the Gauss equation $K+\frac{1}{4}\tr\chi\tr\chib-\frac{1}{2}(\chih,\chibh)=-\rho$.
\end{proof}

We compute
\begin{align*}
\int_{S_{\ub,u}}|\nablas\omegab|^2=-\int_{S_{\ub,u}}\Deltas\omegab(\omegab-\overline{\omegab})\lesssim&\|\Deltas\omegab\|_{L^2(S_{\ub,u})}\|\omegab-\overline{\omegab}\|_{L^2(S_{\ub,u})}\\
\lesssim&\|\Deltas\omegab\|_{L^2(S_{\ub,u})}\|\nablas\omegab\|_{L^2(S_{\ub,u})}.
\end{align*}
The last inequality holds by Poincar\'e ineqaulity. Therefore, by equation \eqref{lastomegab}
\begin{align*}
\|(r\nablas)\omegab\|_{L^2(S_{\ub_*,u})}+\|\omegab\|_{L^2(S_{\ub_*,u})}\lesssim\|r^2\Deltas\omegab\|_{L^2(S_{\ub_*,u})}+\|\overline{\omegab}\|_{L^2(S_{\ub_*,u})}\lesssim C(\mathcal{O}_0,\mathcal{R}_0).
\end{align*}

To derive estimate for $\nablas^2\omegab$, by elliptic estimate (Lemma \ref{curvatureonS} and \ref{elliptic}), we have
\begin{align*}
\|(r\nablas)^2\omegab\|_{L^2(S_{\ub_*,u})}\lesssim C(\mathcal{O}_0,\mathcal{R}_0).
\end{align*}
Then $\|r\omegab\|_{L^\infty(S_{\ub_*,u})}$ and $\|r^{1/2}(r\nablas)^2\omegab\|_{L^4(S_{\ub_*,u})}$ are bounded by Sobolev inequalities.

We apply Gronwall type estimates for $D\omegab$, we have
\begin{align*}
\|\omegab\|_{L^\infty(S_{\ub,u})}&\lesssim\|\omegab\|_{L^\infty(S_{\ub_*,u})}+\int_{0}^{\ub}\|2(\eta,\etab)-|\eta|^2-\rho\|_{L^\infty(S_{\ub',u})}\D \ub'\\
&\lesssim r^{-1}(\ub,u)C(\mathcal{O}_0,\mathcal{R}_0)+r^{-3/2}(\ub,u)C(\mathcal{O}_0,\mathcal{R}_0,\mathcal{R}).
\end{align*}
We will also commute $\nablas$ with the equation for $D\omegab$. By a similar argument, we finally obtain $\mathcal{O}[\omegab]\le C(\mathcal{O}_0,\mathcal{R})$. 
\end{proof}

\begin{proof}[Proof of Proposition \ref{connection2}]
In this proof we will appeal to the Hodge systems introduced in Lemma \ref{hodgesystems}. Notice that under the assumption of Proposition \ref{connection2}, the conclusion of Proposition \ref{connection1} holds and we will make use of it.

Now we proceed by considering each component.

\paragraph{{\bf Estimate for $\nablas^3\etab$}}We commute $\nablas$ twice with the equation for $\Db\mub$ and estimate the right hand side in suitable norm. The nonlinear term on the right hand side involving only connection coefficients (before commuting $\nablas$) are simply bounded by $\mathcal{O}$. The nonlinear term involving derivatives of connection coefficients or curvature components (before commuting $\nablas$) can be estimated by H\"older and Sobolev inequalities as
\begin{align*}
\sum_{i=0}^2\|\nablas^{2-i}(\Gamma\cdot\Psi)\|_{L^2(S_{\ub,u})}\lesssim\sum_{i\le2,j\le2}\|\nablas^i\Gamma\|_{L^2(S_{\ub,u})}\|\nablas^j\Psi\|_{L^2(S_{\ub,u})}
\end{align*}
where $\Psi$ refers to $\nablas\Gamma'$ or $R$, $\underline{R}$. The first factor is bounded by $\mathcal{O}$, and the second factor is bounded by $\mathcal{R}$, $\underline{\mathcal{R}}$ or bootstrap assumption (2) after integrating which is explained as follow.

Precisely, the right hand side of the equation $\Db\nablas^2\mub$ can be bounded with the help of the following three terms: we write
\begin{align*}
&I=r^3\|\nablas^2(-\Omega\tr\chib\mub-\frac{1}{2}\Omega\tr\chib\mu)\|_{L^2(S_{\ub,u})}\\
\lesssim&r^{-1}\mathcal{O}[\tr\chib]\sum_{i=0}^2\left(\|(r\nablas)^i(r\mub)\|_{L^2(S_{\ub,u})}+r^{-1/2}\|(r\nablas)^i(r^{3/2}\mu)\|_{L^2(S_{\ub,u})}\right)\\
&II=r^3\|\nablas^2(-\frac{1}{4}\Omega\tr\chi|\chibh|^2+\frac{1}{2}\Omega\tr\chib|\eta|^2)\|_{L^2(S_{\ub,u})}\\
\lesssim& r^{-1}C(\mathcal{O})\\
&III=r^3\|\nablas^2(\divs(2\Omega\chibh\cdot\etab-\Omega\tr\chib\eta))\|_{L^2(S_{\ub,u})}\\
\lesssim&r^{-2}\mathcal{O}[\etab]\|(r\nablas)^3(\Omega\chibh)\|_{L^2(S_{\ub,u})}+r^{-1}\mathcal{O}[\chibh]\|(r\nablas)^3\etab\|_{L^2(S_{\ub,u})}\\
&+r^{-3/2}\mathcal{O}[\tr\chib]\|(r\nablas)^3(r^{1/2}\eta)\|_{L^2(S_{\ub,u})}+r^{-2}\mathcal{O}[\eta]\|(r\nablas)^3(\Omega\tr\chib)\|_{L^2(S_{\ub,u})}+r^{-2}C(\mathcal{O}).
\end{align*}
Similarly, the commutator can be estimated
\begin{align*}
r^3\|[\Db,\nablas^2]\mub\|_{L^2(S_{\ub,u})}\lesssim r^{-1}\mathcal{O}[\chibh,\tr\chib]\sum_{i=1}^2\|(r\nablas)^i(r\mub)\|_{L^2(S_{\ub,u})}.
\end{align*}
Then we apply Gronwall type estimates to the equation for $\Db\nablas^2\mub$, 
\begin{align*}
&\|(r\nablas)^2(r\mub)\|_{L^2(S_{\ub,u})}\lesssim\|(r\nablas)^2(r\mub)\|_{L^2(S_{\ub,0})}\\
+&\int_0^u(I+II+III+r^3\|[\Db,\nablas^2]\mub\|_{L^2(S_{\ub,u'})})\D u'.
\end{align*}
We take the square of the above inequality and integrate along $\ub$ over $0$ to $\ub$. Therefore, we estimate, by H\"older inequality, 
\begin{align*}
&\int_0^{\ub}\left[\int_0^u (I+II+r^3\|[\Db,\nablas^2]\mub\|_{L^2(S_{\ub,u'})})\D u'\right]^2\D\ub'\\\lesssim&\mathcal{O}[\tr\chib,\chibh]^2\varepsilon\sum_{i=0}^2\left[\int_0^ur^{-2}\|(r\nablas)^i(r\mub)\|_{L^2(C_u')}^2\D u'+\int_0^{\ub}r^{-3}\|(r\nablas)^i(r^{3/2}\mu)\|_{L^2(\Cb_{\ub'})}^2\D\ub'\right]+\varepsilon^2r^{-1}C(\mathcal{O})\\
\lesssim&\varepsilon r^{-1}C(\mathcal{O},\mathcal{R}, \underline{\mathcal{R}},\Delta_2).
\end{align*}
$III$ is estimated in a similar way
\begin{align*}
&\int_0^{\ub}\left[\int_0^u III\D u'\right]^2\D\ub'\\\lesssim&\mathcal{O}[\etab]^2\varepsilon\int_{0}^{\ub}r^{-4}\|(r\nablas)^3(\Omega\chibh)\|_{L^2(\Cb_{\ub'})}^2\D\ub'+\mathcal{O}[\tr\chib]^2\varepsilon\int_{0}^{\ub}r^{-3}\|(r\nablas)^3(r^{1/2}\eta)\|_{L^2(\Cb_{\ub'})}^2\D\ub'\\
&+\mathcal{O}[\chibh]^2\varepsilon\int_0^ur^{-2}\|(r\nablas)^3\etab\|_{L^2(C_{u'})}^2\D u'+\mathcal{O}[\eta]^2\varepsilon\int_0^{\ub}\int_0^ur^{-4}\|(r\nablas)^3(\Omega\tr\chib)\|_{L^2(S_{\ub',u'})}^2\D u'\D\ub'\\&+\varepsilon r^{-3}C(\mathcal{O})\\
\lesssim&\varepsilon r^{-2}C(\mathcal{O},\mathcal{R}, \underline{\mathcal{R}},\Delta_2).
\end{align*}
Therefore, for $\varepsilon$ sufficiently small,
\begin{align*}
\|(r\nablas)^2(r\mub)\|_{L^2(C_u)}\lesssim C(\mathcal{O}_0, \mathcal{R}_0).
\end{align*}
By the div-curl system for $\etab$ we have
\begin{align*}
\|(r\nablas)^3\etab\|_{L^2(S_{\ub,u})}\lesssim\|(r\nablas)^2(r\rho)\|_{L^2(S_{\ub,u})}+\|(r\nablas)^2(r\mub)\|_{L^2(S_{\ub,u})}+r^{-1}C(\mathcal{O}).
\end{align*}
The last term comes from estimating the lower order terms by Proposition \ref{connection1}. Consequently
\begin{align*}
\|(r\nablas)^3\etab\|_{L^2(C_u)}\lesssim C(\mathcal{O}_0, \mathcal{R}, \mathcal{R}_0).
\end{align*}

The estimate for $\nablas^3\chib$ and $\nablas^3\omega$ are almost the same and we only sketch it. In fact, the estimates for $\nablas^3\etab$, $\nablas^3\chib$ and $\nablas^3\omegab$ can be done simultaneously.

\paragraph{{\bf Estimate for $\nablas^3\chibh$ and $\nablas^3\tr\chib$}}We commute $\nablas^3$ with the equations for $\Db(\Omega\tr\chib)$ and we estimate the right hand side as follows,
\begin{align*}
&\int_0^u \|(r\nablas)^3(-\frac{1}{2}(\Omega\tr\chib)^2-\frac{1}{2}|\Omega\chibh|^2+2\omegab(\Omega\tr\chib))+[\Db,\nablas^3](\Omega\tr\chib)\|_{L^2(S_{\ub,u'})}\D u'\\\lesssim&\mathcal{O}[\chibh,\tr\chib,\omegab]\left[\int_0^u\|(r\nablas)^3(\Omega\tr\chib)\|_{L^2(S_{\ub,u})}\D u'+\varepsilon^{1/2}(\|(r\nablas)^3(\Omega\chibh)\|_{L^2(\Cb_{\ub})}+\|(r\nablas)^3\omegab\|_{L^2(\Cb_{\ub})})\right]+\varepsilon C(\mathcal{O})\\
\lesssim&\varepsilon^{1/2} C(\mathcal{O}, \underline{\mathcal{R}},\Delta_2).
\end{align*}
Then for $\varepsilon$ sufficiently small, by Gronwall type estimates for the equation for $\Db\nablas^3(\Omega\tr\chib)$,
\begin{align*}\|(r\nablas)^3\tr\chib\|_{L^2(S_{\ub,u})}\lesssim C(\mathcal{O}_0).\end{align*}
By the $\divs$ equation for $\chibh$, we have
\begin{align*}
\|(r\nablas)^3(\Omega\chibh)\|_{L^2(S_{\ub,u})}\lesssim\|(r\nablas)^2(r\Omega\betab)\|_{L^2(S_{\ub,u})}+\|(r\nablas)^3(\Omega\tr\chib)\|_{L^2(S_{\ub,u})}+r^{-1}C(\mathcal{O}).
\end{align*}
Consequently,
\begin{align*}\|(r\nablas)^3(\Omega\chibh)\|_{L^2(\Cb_{\ub})}\lesssim C(\mathcal{O}_0,\underline{\mathcal{R}}).\end{align*}

\paragraph{{\bf Estimate for $\nablas^3\omega$}}We commute $\nablas$ with the equation for $\Db\kappa$, and estimate
\begin{align*}
&\int_0^{\ub}\left[\int_0^u r^3\|\nablas(-\Omega\tr\chib\kappa-2(\Omega\chibh,\nablas^2\omega)+m)\|_{L^2(S_{\ub',u'})}\D u'\right]^2\D\ub'\\
\lesssim&\mathcal{O}^2\varepsilon\left[\int_0^ur^{-2}[\sum_{i=0}^1\|(r\nablas)^i(r^3\kappa)\|_{L^2(C_u')}^2+r^2\|(r\nablas)^3(r\omega)\|_{L^2(C_u')}^2+r^2\|(r\nablas)^3\etab\|_{L^2(C_u')}^2+\mathcal{R}[\beta]^2]\D u'\right.\\
&\left.+\int_0^{\ub}r^{-2}[\|(r\nablas)^3(\Omega\chibh)\|_{L^2(\Cb_{\ub'})}^2+r\|(r\nablas)^3(r^{1/2}\eta)\|_{L^2(\Cb_{\ub'})}^2+\underline{\mathcal{R}}[\rho,\sigma,\betab]^2]\D\ub'\right]\\
\lesssim&\varepsilon C(\mathcal{O}, \mathcal{R},\underline{\mathcal{R}},\Delta_2)
\end{align*}
Notice that there are no terms arising from commutator $[\Db,\nablas]$ because we only commute $\nablas$ once and $\kappa$ is a function. Applying Gronwall type estimates, we have
\begin{align*}
\|(r\nablas)(r^3\kappa)\|_{L^2(C_u)}\lesssim C(\mathcal{O}_0,\mathcal{R})
\end{align*}
By the Laplacian equation for $\omega$, we have
\begin{align*}
\|(r\nablas)^3(r\omega)\|_{L^2(S_{\ub,u})}\lesssim\|(r\nablas)(r^3\kappa)\|_{L^2(S_{\ub,u})}+\|(r\nablas)^2(r^2\Omega\beta)\|_{L^2(S_{\ub,u})},
\end{align*}
Consequently, for $\varepsilon$ sufficiently small,
\begin{align}
\|(r\nablas)^3(r\omega)\|_{L^2(C_u)}\lesssim  C(\mathcal{O}_0,\mathcal{R}).
\end{align}

\begin{remark}
We also remark here that, $\chib'$, $\mub$ and $\kappa$ are the same as in the assumptions of Theorem \ref{maintheorem} in the construction, and then the initial norm of $\Omega\chib$ differs from the one in the assumptions of Theorem \ref{maintheorem} by up to third order derivatives of $\log\Omega$.
\end{remark}

\paragraph{{\bf Estimate for $\nablas^3\chih$ and $\nablas^3\tr\chi$}} The difference here to the above estimates is, we rely on Hodge systems coupled with propogation equations along $D$ direction instead of $\Db$ direction above. As in estimating $\mathcal{O}$, integrating along $L$ direction should start from last slice. Therefore, we should consider first the value of $\nablas(\Omega\tr\chi)$ on the last slice.

By the definition of canonical foliation, the equation for $\Db(\Omega\tr\chi)$ is reduced to
\begin{align*}
\Db(\Omega\tr\chi)+\frac{1}{2}\Omega\tr\chib\Omega\tr\chi&=\Omega^2(2|\eta|^2-\overline{(\chih,\chibh)}+2\overline{\rho}).
\end{align*}
The term $\divs\eta$ disappears and $\overline{\rho}$ will disappear if commuting $\nablas$ on the equation, that is,
\begin{align*}
\Db(\nablas^3(\Omega\tr\chi))=-\frac{1}{2}\nablas^3(\Omega\tr\chib\Omega\tr\chi)+2\nablas^3(\Omega^2|\eta|^2).
\end{align*}
The right hand side involves only third order derivatives of connection coefficients, and therefore can be controlled by $\mathcal{O}$ and $\Delta_2$ after integrating along $\Lb$. We have, by Gronwall type estimates,
\begin{align*}
&\|(r\nablas)^3(r\Omega\tr\chi)\|_{L^2(S_{\ub_*,u})}\\
\lesssim&\|(r\nablas)^3(r\Omega\tr\chi)\|_{L^2(S_{\ub_*,0})}\\
&+\int_0^u[\mathcal{O}[\tr\chi]\|(r\nablas)^3(\Omega\tr\chib)\|_{L^2(S_{\ub_*,u'})}+r^{-1}\mathcal{O}[\tr\chib]\|(r\nablas)^3(r\Omega\tr\chi)\|_{L^2(S_{\ub_*,u'})}\\&\quad +r^{-3/2}\mathcal{O}[\eta]\|(r\nablas)^3(r^{1/2}\eta)\|_{L^2(S_{\ub_*,u'})}]\D u'\\
\lesssim&C(\mathcal{O}_0)+\varepsilon^{1/2}\mathcal{O}(\Delta_2+C(\mathcal{O}_0))\lesssim C(\mathcal{O}_0)
\end{align*}
for $\varepsilon$ sufficiently small.

Now we commute $\nablas^3$ with the equation for $D(\Omega\tr\chi)$ and estimate the terms on the right as follows,
\begin{align*}
&\int_{\ub}^{\ub_*}r^4\|\nablas^3[-\frac{1}{2}(\Omega\tr\chi)^2-\frac{1}{2}|\Omega\chih|^2+2\omega(\Omega\tr\chi)]+\Omega\tr\chi\nablas^3(\Omega\tr\chi)+[D,\nablas^3](\Omega\tr\chi)\|_{L^2(S_{\ub',u})}\D\ub'\\
\lesssim&\mathcal{O}[\Omega\tr\chi-\overline{\Omega\tr\chi},\chih,\omega]\\&\times\int_{\ub}^{\ub_*}r^{-2}[\|(r\nablas)^3(r\Omega\tr\chi)\|_{L^2(S_{\ub',u})}+\|(r\nablas)^3(r\Omega\chih)\|_{L^2(S_{\ub',u})}+\|(r\nablas)^3(r\omega)\|_{L^2(S_{\ub',u})}+C(\mathcal{O})]\D\ub'\\
\lesssim&\mathcal{O}\int_{\ub}^{\ub_*}r^{-2}[\|(r\nablas)^3(r\Omega\tr\chi)\|_{L^2(S_{\ub',u})}+\|(r\nablas)^3(r\Omega\chih)\|_{L^2(S_{\ub',u})}]\D\ub'\\
&+\left(\int_{\ub}^{\ub_*}r^{-4}\D\ub'\right)^{1/2}\|(r\nablas)^3(r\omega)\|_{L^2(C_u)}+r^{-1}C(\mathcal{O}).
\end{align*}
By the $\divs$ equation for $\Omega\chih$, we have
\begin{align*}
\|(r\nablas)^3(\Omega\chih)\|_{L^2(S_{\ub,u})}\lesssim\|(r\nablas)^3(\Omega\tr\chi)\|_{L^2(S_{\ub,u})}+\|(r\nablas)^2(r\Omega\beta)\|_{L^2(S_{\ub,u})}+r^{-1}C(\mathcal{O}).
\end{align*}
Therefore, by applying Gronwall type estimates, combining the above inequality multiplied by $r$, we have
\begin{align*}
\|(r\nablas)^3(r\Omega\tr\chi)\|_{L^2(S_{\ub,u})}\lesssim&\|(r\nablas)^3(r\Omega\tr\chi)\|_{L^2(S_{\ub_*,u})}\\
\lesssim&\mathcal{O}\int_{\ub}^{\ub_*}r^{-2}\|(r\nablas)^3(r\Omega\tr\chi)\|_{L^2(S_{\ub',u})}\D\ub'\\
&+r^{-3/2}(\|(r\nablas)^3(r\omega)\|_{L^2(C_u)}+\|(r\nablas)^3(r\beta)\|_{L^2(C_u)})+r^{-1}C(\mathcal{O}).
\end{align*}
The first term on the right hand side can be absorbed by Gronwall's inequality, thank's to the factor $r^{-2}$. Finally, we have the estimate
\begin{align*}
\|(r\nablas)^3(r\Omega\tr\chi)\|_{L^2(S_{\ub,u})}\lesssim C(\mathcal{O}_0,\mathcal{R}),
\end{align*}
and consequently,
\begin{align*}
\|(r\nablas)^3(\Omega\chih)\|_{L^2(C_u)}\lesssim C(\mathcal{O}_0,\mathcal{R}).
\end{align*}

\paragraph{{\bf Estimate for $\nablas^3\eta$}} Notice that by definition of canonical foliation, $\mu=\overline{\mu}$ and hence $\nablas^2\mu=0$. Then we commute $\nablas^2$ with the equation for $D\mu$, and estimate the right hand side as follows,
\begin{align*}
&\int_0^u\left[\int_{\ub}^{\ub_*}r^3\|\nablas^2(-\Omega\tr\chi\mu-\frac{1}{2}\Omega\tr\chi\mub)+\Omega\tr\chi\nablas^2\mu+[D,\nablas^2]\mu\|_{L^2(S_{\ub',u'})}\D\ub'\right]^2\D u'\\
\lesssim&\int_0^u r^{-1}\mathcal{O}[\tr\chi]^2\sum_{i=0}^2\|(r\nablas)^i(r\mub)\|_{L^2(C_u')}^2\D u'\\
&+r^{-1}\mathcal{O}[\Omega\tr\chi-\overline{\Omega\tr\chi},\chih]^2\int_{\ub}^{\ub_*}r^{-1}\sum_{i=0}^1\|(r\nablas)^i(r\mu)\|_{L^2(\Cb_{\ub'})}^2\D \ub'\\
\lesssim& r^{-1}C(\mathcal{O},\mathcal{O}_0,\mathcal{R},\underline{\mathcal{R}}),\\
&\int_0^u\left[\int_{\ub}^{\ub_*}r^3\|\nablas^2(-\frac{1}{4}\Omega\tr\chib|\chih|^2+\frac{1}{2}\Omega\tr\chi|\etab|^2)\|_{L^2(S_{\ub',u'})}\D\ub'\right]^2\D u'\lesssim\varepsilon r^{-4}C(\mathcal{O}),\\
&\int_0^u\left[\int_{\ub}^{\ub_*}r^3\|\nablas^2(\divs(2\Omega\chih\cdot\eta-\Omega\tr\chi\etab))\|_{L^2(S_{\ub',u'})}\D\ub'\right]^2\D u'\\\lesssim&\mathcal{O}[\eta]^2\int_{0}^{u}r^{-3}\|(r\nablas)^3(\Omega\chih)\|_{L^2(C_{u'})}^2\D u'+\mathcal{O}[\tr\chi]^2\int_{0}^{u}r^{-1}\|(r\nablas)^3\etab\|_{L^2(C_{u'})}^2\D u'\\
&+\mathcal{O}[\chih]^2r^{-1}\int_{\ub}^{\ub_*}r^{-2}\|(r\nablas)^3\eta\|_{L^2(\Cb_{\ub'})}^2\D \ub'+\mathcal{O}[\etab]^2\int_0^{u}\left[\int_{\ub}^{\ub_*}r^{-3}\|(r\nablas)^3(r\Omega\tr\chi)\|_{L^2(S_{\ub',u'})}^2\D \ub'\right]^2\D u'\\&+\varepsilon r^{-4}C(\mathcal{O})\\
\lesssim& \mathcal{O}[\chih]^2r^{-1}\int_{\ub}^{\ub_*}r^{-2}\|(r\nablas)^3\eta\|_{L^2(\Cb_{\ub'})}^2\D \ub'+r^{-1} C(\mathcal{O}, \mathcal{O}_0, \mathcal{R}).
\end{align*}
We estimate $\|(r\nablas)^3\eta\|_{L^2(\Cb_{\ub'})}^2$ by the $\divs$-$\curls$ systems for $\eta$, 
\begin{align*}
\|(r\nablas)^3\eta\|_{L^2(\Cb_{\ub})}\lesssim\|(r\nablas)^2(r\rho)\|_{L^2(\Cb_{\ub})}+\|(r\nablas)^2(r\mu)\|_{L^2(\Cb_{\ub})}+\varepsilon^{1/2} r^{-1}C(\mathcal{O})
\end{align*}
Therefore, by applying Gronwall type estimates and integrating the square of the resulting inequality along $\Db$ direction,
\begin{align*}
\|(r\nablas)^2(r\mu)\|_{L^2(\Cb_{\ub})}^2\lesssim\mathcal{O}[\chih]^2r^{-1}\int_{\ub}^{\ub_*}r^{-2}\|(r\nablas)^3(r\mu)\|_{L^2(\Cb_{\ub'})}^2\D \ub'+r^{-1} C(\mathcal{O}, \mathcal{O}_*, \mathcal{O}_0, \mathcal{R},\underline{\mathcal{R}}).
\end{align*}
Then, by Gronwall's inequality, we have
\begin{align*}
\|(r\nablas)^2(r\mu)\|_{L^2(\Cb_{\ub})}^2\lesssim r^{-1}C(\mathcal{O}, \mathcal{O}_0, \mathcal{R},\underline{\mathcal{R}})
\end{align*}
and finally,
\begin{align*}
\|(r\nablas)^2(r^{1/2}\eta)\|_{L^2(\Cb_{\ub})}\lesssim C(\mathcal{O},  \mathcal{O}_0, \mathcal{R},\underline{\mathcal{R}}).
\end{align*} 

\paragraph{{\bf Estimate for $\nablas^3\omegab$}}At the last step, we consider $\nablas^3\omegab$. As above, we should consider first the value of $\nablas\kappa$ on the last slice. This can be directly seen from the equation \eqref{lastomegab}, which is, in terms of $\kappab$, on the last slice,
\begin{align*}
2\kappab=\divs(3\Omega\chibh\cdot\eta+\frac{1}{2}\Omega\tr\chib\eta)+\Omega\tr\chib\divs\eta+(F-\overline{F}+\overline{\Omega\tr\chib\check\rho}-\overline{\Omega\tr\chib}\cdot\overline{\check{\rho}}).
\end{align*}
It is easy to see that the right hand side involves only curvature components and first order derivatives of connection coefficients. By lemma \ref{curvatureonS} and Proposition \ref{connection1}, $\|(r\nablas)(r^2\kappab)\|_{L^2(S_{\ub_*,u})}\lesssim C(\mathcal{O}_0,\mathcal{R}_0)$.

We estimate the right hand side of the equation for $D\nablas\kappab$ which obtained by commute $\nablas$ with the equation for $D\kappab$ as follows.
\begin{align*}
&\int_0^u\left[\int_{\ub}^{\ub_*}r^2\|-\nablas(\Omega\tr\chi)\kappab-\nablas(2(\Omega\chih,\nablas^2\omegab)+\underline{m})\|_{L^2(S_{\ub',u'})}\D\ub'\right]^2\D u'\\
\lesssim&\mathcal{O}r^{-1}\int_{\ub}^{\ub^*}\left[r^{-2}(\sum_{i=0}^1\|(r\nablas)^i(r^2\kappab)\|_{L^2(\Cb_{\ub'})}^2+\|(r\nablas)^3\omegab\|_{L^2(\Cb_{\ub'})}^2\right.\\&\left.+\|(r\nablas^3)(r^{1/2})\eta\|_{L^2(S_{\ub,u})}+\underline{\mathcal{R}}[\rho,\sigma,\betab]^2)\right]\D u'\\
&+\int_0^ur^{-6}\|(r\nablas)^3\etab\|_{L^2(C_{u'})}^2\\
\lesssim&\mathcal{O}r^{-1}\int_{\ub}^{\ub^*}r^{-2}\|(r\nablas)^3\omegab\|_{L^2(\Cb_{\ub'})}^2\D\ub'+r^{-2}C(\mathcal{O},\mathcal{O}_0,\mathcal{R},\underline{\mathcal{R}}).
\end{align*}
By the Laplacian equation for $\omegab$, we have
\begin{align*}
\|(r\nablas)^3\omegab\|_{L^2(\Cb_{\ub})}\lesssim\|(r\nablas)(r^2\kappab)\|_{L^2(\Cb_{\ub})}+\|(r\nablas)^2(r\Omega\betab)\|_{L^2(\Cb_{\ub})}
\end{align*}
Applying Gronwall type estimates to the equation for $\Db\nablas\kappa$ leads to
\begin{align*}
\|r^2\nablas\kappab\|_{L^2(\Cb_{\ub})}^2\lesssim& \|r^2\nablas\kappab\|_{L^2(\Cb_{\ub_*})}^2\\&+r^{-1}\mathcal{O}\int_{\ub}^{\ub^*}r^{-2}\|(r\nablas)(r^2\kappab)\|_{L^2(\Cb_{\ub'})}^2\D\ub'+r^{-2}C(\mathcal{O},\mathcal{O}_0,\mathcal{R},\underline{\mathcal{R}}).
\end{align*}
We multiply the above inequality by $r$ and apply Gronwall's inequality for $\|r^{5/2}\nablas\kappab\|_{L^2(\Cb_{\ub})}$. Notice that the factor in the integral becomes $r^{-3/2}$ and is integrable, therefore
\begin{align*}
\|r^{5/2}\nablas\kappab\|_{L^2(\Cb_{\ub})}^2 \lesssim r^{-1}C(\mathcal{O},\mathcal{O}_0,\mathcal{R},\underline{\mathcal{R}})
\end{align*}
and therefore,
\begin{align*}
\|(r\nablas)^3\omegab\|_{L^2(\Cb_{\ub})}\lesssim C(\mathcal{O},\mathcal{O}_0,\mathcal{R},\underline{\mathcal{R}}).
\end{align*}

At last, we choose $\Delta_2$ sufficiently large to complete the proof.\end{proof}

\subsection{Estimates for Curvature}\label{SectionCurvature}

We are going to complete Step 3 of the proof in this subsection. We will prove in this subsection the following proposition:
\begin{proposition}\label{curvaturecompletenullcone}
If $\varepsilon$ is sufficiently small depending on $\mathcal{O}_0$, $\mathcal{R}_0$, $\underline{\mathcal{R}}_0$, then we have
\begin{align*}
\mathcal{R}+\underline{\mathcal{R}}\le C(\mathcal{O}_0,\mathcal{R}_0,\underline{\mathcal{R}}_0).
\end{align*}
\end{proposition}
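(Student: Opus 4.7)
The plan is to derive weighted $L^2$ energy estimates for the curvature components directly from the null Bianchi equations, using integration by parts in the null frame rather than a Bel--Robinson tensor. The central observation, borrowed from \cite{L-R1}, is that the renormalized quantities $\rhoc=\rho-\tfrac{1}{2}(\chih,\chibh)$ and $\check\sigma=\sigma+\tfrac{1}{2}\chih\wedge\chibh$ satisfy $D$-equations in which the curvature component $\alpha$ is completely absent. This is essential, since $\alpha$ is not controlled by our assumptions.

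First I would install a bootstrap assumption $\mathcal{R}+\underline{\mathcal{R}}\le\Delta_3$ for a sufficiently large constant $\Delta_3$, and invoke Proposition \ref{connection3} to conclude $\mathcal{O}\le C(\mathcal{O}_0,\mathcal{R}_0)$ (the estimate for $\omegab$ depends additionally on $\mathcal{R}\le\Delta_3$, but $\omegab$ enters only as a lower-order coefficient and can be absorbed by a power of $\varepsilon$). The basic inequalities of Section \ref{basicinequalities} and the elliptic estimates of Lemma \ref{elliptic} are then available uniformly on the slab $[0,\ub_*]\times[0,\varepsilon]$.

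The heart of the argument is three null Bianchi pairings. \textbf{Pair A} for $(\beta,\rhoc,\check\sigma)$: combine $\Db\beta=-\Omega\ds\rho-\Omega{}^*\ds\sigma+\cdots$ with the renormalized, $\alpha$-free equations for $D\rhoc$ and $D\check\sigma$; contract with $r^{2p-4}\beta$, $r^{2p-2}\rhoc$, $r^{2p-2}\check\sigma$ respectively (weights dictated by \eqref{definitionp}), and integrate by parts so that the leading $\nablas$-terms cancel as a total angular divergence. \textbf{Pair B} for $(\rhoc,\check\sigma,\betab)$: use $\Db\rhoc$ and $\Db\check\sigma$ (which do contain $\alphab$, but $\alphab$ is controlled by $\underline{\mathcal{R}}$ and thus enters as an inhomogeneous term) together with $D\betab$. \textbf{Pair C} for $(\betab,\alphab)$: use $\Db\betab$ together with the $D\alphab$ Bianchi equation. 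Each pairing yields a weighted energy identity that equates flux terms on $C_u$ and $\Cb_{\ub}$ to the initial data plus a spacetime integral of nonlinear corrections. After commuting $\nablas$ up to twice (using Lemma \ref{commutator}) and applying Sobolev inequalities and the connection bounds, every bulk term carries either a factor $\varepsilon^{1/2}$ from integration in the short $u$-direction, or an integrable $r$-weight from integration in the long $\ub$-direction.

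The main obstacle I anticipate is the bookkeeping of $r$-weights, ensuring that the principal $\nablas$-terms in each pair really do cancel after integration by parts, and that every nonlinear bulk integral decays either through $\varepsilon^{1/2}$ or through an integrable $r^{-1-\delta}$ weight. In particular, one must verify that the lower-order corrections introduced by the renormalization, namely the $\nablas^{\le 2}(\chih,\chibh)$ and $\nablas^{\le 2}(\chih\wedge\chibh)$ pieces relating $\rho,\sigma$ to $\rhoc,\check\sigma$, are compatible with the $p$-weights of \eqref{definitionp}, and that the commutator terms of the form $\sum\nablas^j(\Omega\chi)\cdot\nablas^{i-j}\Psi$ from Lemma \ref{commutator} are absorbed by Proposition \ref{connection3}. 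Once the scheme closes at the $\nablas^2$ level, choosing $\varepsilon$ sufficiently small eliminates the bootstrap constant $\Delta_3$ and delivers $\mathcal{R}+\underline{\mathcal{R}}\le C(\mathcal{O}_0,\mathcal{R}_0,\underline{\mathcal{R}}_0)$.
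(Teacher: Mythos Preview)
Your proposal is correct and follows essentially the same route as the paper: the same renormalization $(\rhoc,\check\sigma)$ to eliminate $\alpha$, the same three Bianchi pairings, the same weighted energy identities via integration by parts, and the same mechanism for closing---$\varepsilon^{1/2}$ smallness from the short $u$-integration together with a Gronwall argument in $\ub$ using the integrable $r^{-2}$ weight, where the Gronwall coefficient depends only on $\mathcal{O}_0$ (since only $\chih$, $\Omega\tr\chi-\overline{\Omega\tr\chi}$, $\omega$ appear in that term, not $\omegab$). The only cosmetic difference is that in your Pair~B the paper uses the unrenormalized $\Db\rho,\Db\sigma$ rather than $\Db\rhoc,\Db\check\sigma$; since the correction $(\chih,\chibh)$ is a lower-order connection product controlled by $\mathcal{O}$, this makes no difference to the estimates.
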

\begin{proof}
As discussed in the introduction, we adapt the renormalization in \cite{L-R1} in the current work. We write $\check{\rho}=\rho-\frac{1}{2}(\chih,\chibh)$ and $\check{\sigma}=\sigma+\frac{1}{2}\chih\wedge\chibh$. By direct computation, they satisfy the following renormalized Bianchi equations
\begin{align*}
D\check{\rho}+\frac{3}{2}\Omega\tr\chi\check{\rho}&=\Omega\{\divs\beta+(2\etab+\zeta,\beta)-\frac{1}{2}(\chih,\nablas\tensor\etab+\etab\tensor\etab)+\frac{1}{4}\tr\chib|\chih|^2\}\\
D\check{\sigma}+\frac{3}{2}\Omega\tr\chi\check{\sigma}&=\Omega\{\curls\beta+(2\etab+\zeta,^*\beta)+\frac{1}{2}\chih\wedge(\nablas\tensor\etab+\etab\tensor\etab)\}.
\end{align*}
 This caculation is similar to that in deriving the equation \eqref{lastomegab}. To couple with these two equations, the equation for $\Db\beta$ should also rewrite as
\begin{align*}
&\Db\beta+\frac{1}{2}\Omega\tr\chib\beta\\=&\Omega\chibh \cdot \beta-\omegab \beta+\Omega\{\ds \check{\rho}+{}^*\ds \check{\sigma}+3\eta\rho+3{}^*\eta\sigma+2\chih\cdot\betab+\frac{1}{2}(\ds(\chih,\chibh)-^*\ds(\chih\wedge\chibh))\}.
\end{align*}

To make the expression in a uniform way, in this subsection, we will denote
$$\underline{R}_1=\frac{1}{\sqrt{2}}\alphab,\ \underline{R}_2=-\betab,\ \underline{R}_3=(-\check{\rho},\check{\sigma})$$
and denote
$$R_2=-\betab,\ R_3=(\rho,\sigma),\ R_4=\beta.$$
Comparing to our usual convention in previous sections, the definition is essentially the same up to a constant multiple.

Now define $\mathfrak{D}$ acting on $\underline{R}_j$, say $(-\check{\rho},\check{\sigma}),-\betab,\alphab$ and their derivatives respectively to be
\begin{align*}
\mathfrak{D}\nablas^i(-\check{\rho},\check{\sigma})&=D\nablas^i(-\check{\rho},\check{\sigma})+\frac{1}{2}\Omega\tr\chi\nablas^i(-\check{\rho},\check{\sigma}),\\
\mathfrak{D}\nablas^i(-\betab)&=D\nablas^i(-\betab)-\Omega\chih\cdot\nablas^i(-\betab),\\
\mathfrak{D}\nablas^i\alphab&=\Dh\nablas^i\alphab-\frac{1}{2}\Omega\tr\chi\nablas^i\alphab.
\end{align*}
We also denote $\mathcal{D}$ to be one of the Hodge operators $\mathcal{D}_1$, $\mathcal{D}_2$, ${}^*\mathcal{D}_1$ and ${}^*\mathcal{D}_2$ (see the definition before Lemma \ref{commutator}). We also define $\mathcal{D}$ to be acting on derivatives of the corresponding tensorfields in a obvious way. Then we denote $^*\mathcal{D}$ to be the corresponding $L^2(S_{\ub,u})$ formal adjoint. \footnote{We abuse the notation here as compared to Lemma \ref{commutator}.}

We collect three groups of null Bianchi equations, say $\Db\beta$-$D\check{\rho}$-$D\check{\sigma}$, $\Db\rho$-$\Db\sigma$-$D\beta$ and $\Db\betab$-$\Dh\alphab$. By the commutation formula for $[D,\nablas^i]$, $[\Db,\nablas^i]$, $[\mathcal{D},\nablas^i]$ and $[^*\mathcal{D},\nablas^i]$ (see Lemma \ref{commutator}), we can write them in the following form. For $j=1,2,3$, denoting $\nablas^{-1}K=0$, 
\begin{align*}
E^3_{j,i}\triangleq&\Db \nablas^iR_{j+1}-\Omega\mathcal{D}\nablas^i\underline{R}_j\\
&=\sum\sum_{k=0}^i\left(\nablas^k\Gamma_p\nablas^{i-k}R_{j+1}+\nablas^k(\Omega\Gamma_{p_1})\nablas^{i-k}\underline{R}_{p_2}+\nablas^{k-1}K\nablas^{i-k}\underline{R}_j\right),\\
E^4_{j,i}\triangleq&\mathfrak{D}\nablas^i \underline{R}_j+\frac{j-1}{2}\Omega\tr\chi \nablas^i\underline{R}_j+\Omega^*\mathcal{D}\nablas^iR_{j+1}\\
&=\sum\sum_{k=0}^i\left(\nablas^k\Gamma_p\nablas^{i-k}\underline{R}_j+\nablas^k(\Omega\Gamma_{p_1})\nablas^{i-k}R_{p_2}+\nablas^i(\Omega\Gamma_{p'_1})\nablas^{i-k}\nablas\Gamma_{p'_2}\right.\\
&\qquad\left.+\nablas^k(\Omega\Gamma_{p''_1})\nablas^{i-k}(\Gamma_{p''_2}\Gamma_{p''_3})+\nablas^{k-1}K\nablas^{i-k}R_{j+1}\right).
\end{align*}

With the null Bianchi equations written in the above form, we can refer to the exact form of the null Bianchi equations by direct counting to obtain the following lemma:
\begin{lemma}\label{counting}
In the first equation, $j+1\le p_1+p_2$, and by definition of the numbers assigned, $p\ge1$ automatically. In the second equation, we have $p\ge2$, $j+2\le\min\{p_1+p_2,p'_1+p'_2+1,p''_1+p''_2+p''_3\}$.
\end{lemma}

We then compute
\begin{align*}
&\Db(r^{2(i+j-1)}|\nablas^iR_{j+1}|^2\D\mu_{\gs})\\
=&r^{2(i+j-1)}(\Gamma[\tr\chib,\chibh]|\nablas^iR_{j+1}|^2+2(\nablas^iR_{j+1},\Db\nablas^i R_{j+1}))\\
=&r^{2(i+j-1)}(\Gamma[\tr\chib,\chibh]|\nablas^iR_{j+1}|^2+2(\nablas^iR_{j+1},\Omega\mathcal{D}\nablas^i\underline{R}_j+E_{j,i}^3)).
\end{align*}
We also compute carefully
\begin{align*}
&D(r^{2(i+j-1)}|\nablas^i\underline{R}_j|^2\D\mu_{\gs})\\
=&(2(i+j-1)r^{2(i+j)-2}Dr|\nablas^i\underline{R}_j|^2-r^{2(i+j-1)}i\Omega\tr\chi|\nablas^i\underline{R}_j|^2\\
&+2r^{2(i+j-1)}(\nablas^i\underline{R}_j,\mathfrak{D}\underline{R}_i)+r^{2(i+j-1)}\Omega\chih\cdot\nablas^i\underline{R}_j\cdot\nablas^i\underline{R}_j)\D\mu_{\gs}\\
=&r^{2(i+j-1)}\left((i+j-1)(\Omega\tr\chi-\overline{\Omega\tr\chi})|\nablas^i\underline{R}_j|^2-2(\nablas^i\underline{R}_j,\Omega^*\mathcal{D}\nablas^iR_{j+1}+\Omega\chih\cdot\nablas^i R_j+E^4_{j,i})\right).
\end{align*}
Suming up the above two identities, then integrating on the spacetime manifold $M_{\ub,u}$ which is enclosed by $C_0$, $\Cb_0$, $\Cb_{\ub}$ and $C_{u}$, we have
\begin{align*}
&\int_{C_u}r^{2(i+j-1)}|\nablas^iR_{j+1}|^2+\int_{\Cb_{\ub}}r^{2(i+j-1)}|\nablas^i\underline{R}_j|^2\\
=&\int_{C_0}r^{2(i+j-1)}|\nablas^iR_{j+1}|^2+\int_{\Cb_0}r^{2(i+j-1)}|\nablas^i\underline{R}_j|^2+\iint_{M_{\ub,u}}2\Omega^{-2}\tau_j^{(i)}\\
\lesssim&\int_{C_0}r^{2(i+j-1)}|\nablas^iR_{j+1}|^2+\int_{\Cb_0}r^{2(i+j-1)}|\nablas^i\underline{R}_j|^2+\iint_M|\tau_j^{(i)}|
\end{align*}
where $M=M_{\ub_*,\varepsilon}$ and
\begin{align*}
\tau_{j}^{(i)}=&r^{2(i+j-1)}\left(\sum\sum_{k=0}^i\left(\nablas^k\Gamma_p\nablas^{i-k}R_{j+1}+\nablas^k(\Omega\Gamma_{p_1})\nablas^{i-k}\underline{R}_{p_2}+\nablas^{k-1}K\nablas^{i-k}\underline{R}_j\right),\nablas^i R_{j+1}\right)\\
&+r^{2(i+j-1)}\left(\sum\sum_{k=0}^i\left(\nablas^k\Gamma_p\nablas^{i-k}\underline{R}_j+\nablas^k(\Omega\Gamma_{p_1})\nablas^{i-k}R_{p_2}+\nablas^i(\Omega\Gamma_{p'_1})\nablas^{i-k}\nablas\Gamma_{p'_2}\right.\right.\\
&\qquad\left.\left.+\nablas^k(\Omega\Gamma_{p''_1})\nablas^{i-k}(\Gamma_{p''_2}\Gamma_{p''_3})+\nablas^{k-1}K\nablas^{i-k}R_{j+1}\right),\nablas^i\underline{R}_j\right)
\end{align*}
with the numbers assigned to every component satisfy Lemma \ref{counting}. Here we abuse the notations, that the $p$, $p_1$, $p_2$ in the first line are not the same as in the second line.

We divide terms contained in $\tau_j^{(i)}$ in different types.

\begin{itemize}

\item[\textbf{ Type 1}] Terms like $r^{2(i+j-1)}(\nablas^iR_{j+1}\nablas^{i_1}\Gamma_{p_1}\nablas^{i_2}\underline{R}_{p_2})$ where $i_1+i_2= i\le2$. We have $j+1\le p_1+p_2$, therefore we estimate by H\"older inequality,
\begin{align*}
&\iint_Mr^{2(i+j-1)}|\nablas^iR_{j+1}\nablas^{i_1}\Gamma_{p_1}\nablas^{i_2}\underline{R}_{p_2}|\\
\lesssim&\left(\int_0^u\int_{C_{u'}}r^{2(i+j-1)}|\nablas^iR_{j+1}|^2\D u'\right)^{1/2}\\
&\times\left(\int_0^{\ub}r^{-2}\int_{\Cb_{\ub'}}r^{2(i_1+i_2+p_1+p_2-1)}|\nablas^{i_1}\Gamma_{p_1}\nablas^{i_2}\underline{R}_{p_2}|^2\D \ub'\right)^{1/2}\\
\triangleq& I\cdot II\lesssim\varepsilon^{1/2}\mathcal{R}\cdot II.
\end{align*}

For $i_1=0$,
\begin{align*}
II^2\lesssim\sup_{\ub,u}(\|r^{i_1+p_1}\nablas^{i_1}\Gamma_{p_1}\|^2_{L^\infty(S_{\ub,u})})\int_0^{\ub}r^{-2}\int_{\Cb_{\ub'}}r^{2(i_2+p_2-1)}|\nablas^{i_2}\underline{R}_{p_2}|^2\lesssim C(\mathcal{O}_0)\underline{\mathcal{R}}^2.
\end{align*}

For $i_1=1$, then $i_2\le 1$, applying Sobolev inequality,
\begin{align*}
II^2\lesssim&\sup_{\ub,u}(r^{-1/2}\|r^{i_1+p_1}\nablas^{i_1}\Gamma_{p_1}\|^2_{L^4(S_{\ub,u})})\\&\times\int_0^{\ub}r^{-2}\D\ub'\int_0^u\D u'\|r^{(i_2+p_2-1)}\nablas^{i_2}\underline{R}_{p_2}\|^2_{L^4(S_{\ub',u'})}\\
\lesssim&\sup_{\ub,u}(r^{-1/2}\|r^{i_1+p_1}\nablas^{i_1}\Gamma_{p_1}\|^2_{L^4(S_{\ub,u})})\\&\times\int_0^{\ub}r^{-2}\D\ub'\sum_{k=i_2}^{i_2+1}\int_{\Cb_{\ub'}}|r^{2(k+p_2-1)}\nablas^{k}\underline{R}_{p_2}|^2\lesssim C(\mathcal{O}_0)\underline{\mathcal{R}}^2.
\end{align*}

For $i_1=2$, then $i_2=0$, we also apply H\"older and Sobolev inequalities, 
\begin{align*}
II^2\lesssim&\sup_{\ub,u}(r^{-1}\|r^{i_1+p_1}\nablas^{i_1}\Gamma_{p_1}\|^2_{L^2(S_{\ub,u})})\\
&\times\int_0^{\ub}r^{-2}\D\ub'\sum_{k=i_2}^{i_2+2}\int_{\Cb_{\ub'}}|r^{2(k+p_2-1)}\nablas^{k}\underline{R}_{p_2}|^2\lesssim C(\mathcal{O}_0)\underline{\mathcal{R}}^2.
\end{align*}

There are also terms like $r^{2(i+j-1)}(\nablas^{i_1}\Gamma_{p_1}\nablas^{i_2}R_{p_2}\nablas^i\underline{R}_j)$. We estimate
\begin{align*}
&\iint_Mr^{2(i+j-1)}|\nablas^{i_1}\Gamma_{p_1}\nablas^{i_2}R_{p_2}\nablas^i\underline{R}_j|\\
\lesssim&\left(\int_0^{\ub}r^{-2}\int_{\Cb_{\ub'}}r^{2(i+j-1)}|\nablas^i\underline{R}_{j}|^2\D \ub'\right)^{1/2}\\
&\times\left(\int_0^{u}\int_{C_{u'}}r^{2(i_1+i_2+p_1+p_2-2)}|\nablas^{i_1}\Gamma_{p_1}\nablas^{i_2}R_{p_2}|^2\D u'\right)^{1/2}\\
\lesssim&\varepsilon^{1/2}C(\mathcal{O}_0)\mathcal{R}\underline{\mathcal{R}}.
\end{align*}
because we have $j+2\le p_1+p_2$.

\item[\textbf{Type 2}] Terms like $r^{2(i+j-1)}(\nablas^{i_1}\Gamma_p\nablas^{i_2}R_{j+1}\nablas^iR_{j+1})$ and $r^{2(i+j-1)}(\nablas^{i_1}\Gamma_p\nablas^{i_2}\underline{R}_{j}\nablas^i\underline{R}_{j})$ with $i_1+i_2=i\le2$. Because $p\ge0$ in the first case, we estimate
\begin{align*}
&\iint_Mr^{2(i+j-1)}|\nablas^{i_1}\Gamma_p\nablas^{i_2}R_{j+1}\nablas^iR_{j+1}|\\
\lesssim&\sup_{\ub,u}(r^{-2/q}\|(r\nablas)^{i_q}\Gamma\|_{L^q(S_{\ub,u})})\sum_{k=0}^2\int_0^u\int_{C_{u'}}r^{2(k+j-1)}|\nablas^k\beta|^2\D u'\\
\lesssim&\varepsilon C(\mathcal{O}_0,\mathcal{R})\mathcal{R}^2
\end{align*}
where $i_q=2,1,0$ for $q=2,4,\infty$ respectively. Notice that $\omegab$ involves in this case and $\mathcal{O}$ depends on $\mathcal{R}$.

For the second case, only $\chih$, $\Omega\tr\chi-\overline{\Omega\tr\chi}$ and $\omega$ involve as a $\Gamma$, Therefore, we estimate
\begin{align*}
&\iint_Mr^{2(i+j-1)}|(\nablas^{i_1}\Gamma_p\nablas^{i_2}\underline{R}_{j}\nablas^i\underline{R}_{j})|\\
\lesssim&\sup_{\ub,u}(r^{-2/q}\|r^2(r\nablas)^{i_q}\Gamma\|_{L^q(S_{\ub,u})})\sum_{k=0}^2\int_0^{\ub'}r^{-2}\int_{\Cb_{\ub'}}r^{2(k+j-1)}|\nablas^k\underline{R}_j|^2\D\ub'\\
\lesssim&C(\mathcal{O}_0)\sum_{k=0}^2\int_0^{\ub'}r^{-2}\int_{\Cb_{\ub'}}r^{2(k+j-1)}|\nablas^k\underline{R}_j|^2\D\ub'.
\end{align*}

\item[\textbf{Type 3}]

Terms like $r^{2(i+j-1)} (\nablas^{i_1}(\Gamma_{p'_1})\nablas^{i_2}\nablas\Gamma_{p'_2}\nablas^i\underline{R}_j)$ and $r^{2(i+j-1)} (\nablas^{i_1}(\Gamma_{p''_1})\nablas^{i_2}(\Gamma_{p''_2}\Gamma_{p''_3})\nablas^i\underline{R}_j)$ for $i_1+i_2=i\le2$. This type which arises from renormalization only appears when $j=3$. For the first case, we have $j+2\le p'_1+p'_2+1$ and $\Gamma_{p'_2}$ contains only $\etab$. Therefore the estimate can be done as the second case of Type 1. For the second case, notice that the connection coefficients $\Gamma$ and $\Gamma^2$ have the same regularity. That is, $\Gamma^2$ can be estimated in $L^\infty(S)$, $\nablas(\Gamma^2)=\Gamma\nablas\Gamma$ can be estimated in $L^4(S)$, and $\nablas^2(\Gamma^2)=\Gamma\nablas^2\Gamma+(\nablas\Gamma)^2$ can be estimated in $L^2(S)$. Therefore, $\nablas^{i_2}(\Gamma_{p''_2}\Gamma_{p''_3})$ can always estimated in different norms on $S_{\ub,u}$ and get addition $\varepsilon^{1/2}$ factor after integrating on $M$.

\item[\textbf{Type 4}] Terms like $r^{2(i+j-1)}(\nablas^{i_1}K\nablas^{i_2}\underline{R}_j\nablas^iR_{j+1})$ and $r^{2(i+j-1)}(\nablas^{i_1}K\nablas^{i_2}R_{j+1}\nablas^i\underline{R}_{j})$ for $i_1+i_2=i-1\le1$. By Lemma \ref{curvatureonS}, $\|(r\nablas)^{i_1}(rK)\|_{L^2(S_{\ub,u})}\le C(\mathcal{R}_0,\mathcal{O}_0)$ and hence $\|r^{3/2}K\|_{L^4(S_{\ub,u})}\le C(\mathcal{R}_0,\mathcal{O}_0)$. Therefore,
\begin{align*}
&\iint_Mr^{2(i+j-1)}|\nablas^{i_1}K\nablas^{i_2}\underline{R}_j\nablas^iR_{j+1}|\\
\lesssim&\left(\int_0^u\int_{C_{u'}}r^{2(i+j-1)}|\nablas^iR_{j+1}|^2\D u'\right)^{1/2}\\
&\times\left(\int_0^{\ub}r^{-2}\int_{\Cb_{\ub'}}r^{2(i_1+i_2+j+1)}|\nablas^{i_1}K\nablas^{i_2}\underline{R}_{j}|^2\D \ub'\right)^{1/2}\\
\lesssim&\varepsilon^{1/2}C(\mathcal{R}_0,\mathcal{O}_0)\underline{\mathcal{R}}\mathcal{R}
\end{align*}
and the second case can be treated similarly.

\end{itemize}

Then, by choosing $\varepsilon$ sufficiently small,
\begin{align*}
&\sum_{i=0}^2\left(\int_{C_u}r^{2(i+j-1)}|\nablas^iR_{j+1}|^2+\int_{\Cb_{\ub}}r^{2(i+j-1)}|\nablas^i\underline{R}_j|^2\right)\\
\le&\sum_{i=0}^2\left(\int_{C_0}r^{2(i+j-1)}|\nablas^iR_{j+1}|^2+\int_{\Cb_0}r^{2(i+j-1)}|\nablas^i\underline{R}_j|^2\right)\\&+C(\mathcal{O}_0)\sum_{k=0}^2\int_0^{\ub'}r^{-2}\int_{\Cb_{\ub'}}r^{2(k+j-1)}|\nablas^k\underline{R}_j|^2\D\ub'+1.
\end{align*}
Because the factor $r^{-2}$ is integrable and $\mathcal{O}_0$ is independent of $\underline{\mathcal{R}}$, then we can apply Gronwall inequality to prove the proposition.
\end{proof}
\begin{remark}
The function $u$ restricted on $\Cb_0$ does not coincide with $\underline{s}$. Therefore, in the above estimate, the terms
\begin{align*}
\int_{\Cb_{0}}r^{2(i+j-1)}|\nablas^i\underline{R}_j|^2
\end{align*}
are not exactly the same as in $\underline{\mathcal{R}}_0$. But the differences are controlled by $\mathcal{O}_0,\mathcal{R}_0$ for $\varepsilon$ sufficiently small. Also, the terms
$$\int_{C_0}r^{2(i+j-1)}|\nablas^iR_{j+1}|^2$$
differ from the norms in $\mathcal{R}_0$ by up to second order derivatives of $\Omega$.
\end{remark}

\subsection{Canonical Last Slice}\label{SectionLastSlice}

We will carry out the last step of the proof in this subsection. As introduced above, we first extend the solution to a larger region $M_\delta$ which corresponds to $0\le\ub\le\ub_*+\delta$, $0\le u\le\varepsilon+\delta'$. First of all, $\delta$ and $\delta'$ should be chosen such that all bounds derived in Section \ref{SectionConnection} and Section \ref{SectionCurvature} are bounded by twice of their bounds on $M$. This is ensured simply by continuity because $\delta$ and $\delta'$ are allowed to depend on $\varepsilon$ and $\ub_*$. In the remaining part of this subsection, we will fix $\delta'$ and then choose $\delta$ sufficiently small such that we can construct a new canonical function $u_\delta$ varying from $0$ to $\varepsilon$.

At this point, we hope to apply Proposition \ref{canonicalslice} to the new last slice $\Cb_{\ub_*+\delta}$. Different from the initial slice $\Cb_0$, we should notice that, first, we must estimate the bounds for  $\underline{\mathcal{R}}[\Db\alphab,\Db^2\alphab],\mathcal{O}^2[\Db\omegab],\mathcal{O}[\Db^2\omegab]$ which are not estimated yet, second, on the null cone  $\Cb_{\ub_*+\delta}$, we have less room as compared to $\Cb_0$, to solve the canonical slice equation, because the background foliation $u$ which is canonical on $\Cb_{\ub_*}$ only varies from $[0,\varepsilon+\delta']$, and this is exactly the reason why we need the $\delta'$ to be positive.

So, we will prove the following as the first step:
\begin{proposition}\label{DbDbalphab}If $\varepsilon$ is sufficiently small depending on $\mathcal{O}_0$, $\mathcal{R}_0$, $\underline{\mathcal{R}}_0$, then we have
\begin{align*}
\underline{\mathcal{R}}[\Db\alphab,\Db^2\alphab],\mathcal{O}^2[\Db\omegab],\mathcal{O}[\Db^2\omegab]\le C(\mathcal{O}_0,\mathcal{R}_0,\underline{\mathcal{R}}_0).
\end{align*}
\end{proposition}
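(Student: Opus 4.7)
The plan is to handle $\{\Db\alphab, \Db^2\alphab\}$ and $\{\Db\omegab, \Db^2\omegab\}$ separately, but by the same underlying mechanism: commute $\Db$ (or $\Db^2$) with an available evolution equation, then integrate via Lemma \ref{evolution}, estimating the right-hand side by substituting the Bianchi and first structure equations to replace every $\Db$-derivative of a curvature or connection quantity by an expression involving only spherical derivatives of quantities already controlled by Propositions \ref{connection3} and \ref{curvaturecompletenullcone}. The key structural observation is that the $\Db$-Bianchi equations are one-derivative relations, e.g. $\Db\betab = -\Omega\divs\alphab + (\text{lower})$, while $\Db\rho, \Db\sigma$ involve $\divs\betab$ and $\divs\alphab$; substituting these reduces $\Db$-derivatives of curvature to spherical derivatives of $\alphab, \betab, \rho, \sigma$, whose $L^2(\Cb_\ub)$-norms up to order two are bounded by $\underline{\mathcal{R}}$.

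For $\underline{\mathcal{R}}[\Db\alphab]$ the plan is to commute $\Db$ with the Bianchi equation for $\Dh\alphab$. Writing $\Db\Dh = \Dh\Db + [\Db, \Dh]$ and using the commutator identity for symmetric traceless 2-tensors gives a transport equation of schematic form $\Dh(\Db\alphab) + \tfrac{1}{2}\Omega\tr\chi\,\Db\alphab = F$, whose right-hand side $F$ contains $\nablas\tensor\Db\betab \sim \Omega\nablas^2\alphab + (\text{lower})$, as well as $\Db\rho, \Db\sigma$, the $\Db$-derivatives of connection coefficients (each replaced by its Bianchi or first structure expression) and quadratic terms in $\mathcal{O}$. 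Every term of $F$ has sphere-integrated norm on $\Cb_\ub$ bounded by $C(\mathcal{O}_0, \mathcal{R}_0, \underline{\mathcal{R}}_0)$, in particular through $\|\nablas^{\le 2}\alphab\|_{L^2(\Cb_\ub)} \le \underline{\mathcal{R}}$. Lemma \ref{evolution} integrated forward from $\Cb_0$, where $\|(r\nablas)^{\le 1}\Db\alphab\|_{L^2(\Cb_0)}$ and $\|\Db^2\alphab\|_{L^2(\Cb_0)}$ are part of $\underline{\mathcal{R}}_0$, yields the sphere-wise bound on $\|\Db\alphab\|_{L^2(S_{\ub, u})}$; squaring and integrating in $u \in [0, \varepsilon]$ gives the desired bound on $\|\Db\alphab\|_{L^2(\Cb_\ub)}$ uniformly in $\ub$. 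The $(r\nablas)\Db\alphab$ bound follows by commuting one further $\nablas$ with the transport equation, and $\Db^2\alphab$ follows by commuting $\Db$ a second time and feeding the just-proved $\Db\alphab$ bound into the new right-hand side.

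For the $\omegab$-bounds the plan is a two-stage argument. First, on $\Cb_{\ub_*}$, apply $\Db^k$ to the canonical-foliation equation \eqref{lastomegab} for $k=1,2$; after commuting $\Db^k$ with $\Deltas$ one obtains an elliptic equation $\Deltas(\Db^k\omegab) = \widetilde G_k$ on each $S_{\ub_*, u}$, whose right-hand side is controlled in $L^2(S_{\ub_*, u})$ by using Bianchi and first structure equations to reduce every $\Db^k$-derivative on the right. Because $\overline{\log\Omega} = 0$ determines $\overline{\Db^k\omegab}$ via identities of the form $\Db^k\overline{\log\Omega} = 0 + (\text{curvature/connection terms from the variation of the area element})$, Poincar\'e's inequality together with Lemma \ref{elliptic} then yields $\|(r\nablas)^{\le 2}(\Db^k\omegab)\|_{L^2(S_{\ub_*, u})} \le C$. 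Next, commute $\Db^k$ with the structure equation $D\omegab = \Omega^2(2(\eta, \etab) - |\eta|^2 - \rho)$ to obtain a $D$-transport equation for $\Db^k\omegab$, and integrate backward from $\Cb_{\ub_*}$ using the backward form of Lemma \ref{evolution}, controlling the right-hand side (which contains $\Db^k\eta, \Db^k\etab, \Db^k\rho$, each reduced via Bianchi/structure equations to spherical derivatives of $\alphab$ bounded by $\underline{\mathcal{R}}$). This yields the sphere-wise bound needed for $\mathcal{O}^2[\Db\omegab]$, and integration in $u$ on each $\Cb_\ub$ gives $\mathcal{O}[\Db^2\omegab]$.

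The principal obstacle is the bookkeeping of the commutators $[\Db, \Dh], [\Db, D], [\Db, \Deltas], [\Db, \nablas]$ and the successive substitutions of Bianchi and first structure equations needed to avoid invoking any derivative of order beyond what $\mathcal{O}, \mathcal{R}, \underline{\mathcal{R}}$ controls. The most delicate single term is $\nablas\tensor\Db^2\betab$ appearing in the $\Db^2\alphab$-transport equation: naively substituting its Bianchi expression produces $\nablas^2\Db\alphab$, which is not in our target norms. This will be handled either by keeping $\Db^2\betab = -\Omega\divs(\Db\alphab) + (\text{lower})$ and invoking the $\|(r\nablas)\Db\alphab\|_{L^2(\Cb_\ub)}$ bound proved in the previous step after a single integration by parts, or, if cleaner, by switching to the energy-estimate approach for the commuted Bianchi pair $(\Db^2\alphab, \Db^2\betab)$ modeled on Proposition \ref{curvaturecompletenullcone}, in which integration by parts on the divergence/trace-free-symmetric structure eliminates the top-order derivative altogether and only the lower-$\Db$-order bounds are needed.
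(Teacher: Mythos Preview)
Your overall architecture is right, and for $\Db\omegab$ it is essentially the paper's own two-stage argument. But there are two genuine gaps, and they are linked.

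\textbf{First gap: the transport approach for $(r\nablas)\Db\alphab$ loses a derivative.} After commuting $\nablas$ with your transport equation $\Dh(\Db\alphab)-\tfrac12\Omega\tr\chi\,\Db\alphab=F$, the right-hand side contains $\nablas(\nablas\tensor\Db\betab)$. Your substitution $\Db\betab=-\Omega\divs\alphab+\text{l.o.t.}$ then produces $\nablas^3\alphab$, a third angular derivative which $\underline{\mathcal{R}}$ does not control (it stops at two). So the step ``The $(r\nablas)\Db\alphab$ bound follows by commuting one further $\nablas$'' fails. You recognise precisely this issue one level up, for $\Db^2\alphab$, and propose the energy estimate for the commuted Bianchi pair as a fix; the point is that this fix is already mandatory at the $(r\nablas)\Db\alphab$ level. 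The paper does exactly that: it treats $(\Db\betab,\Db\alphab)$ as a Bianchi pair, so that $\Omega\nablas\tensor\Db\betab$ in the $\alphab$-equation and $\Omega\divs\Dbh\alphab$ in the $\betab$-equation cancel by integration by parts, and no substitution for the top-order term is ever needed.

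\textbf{Second gap: the right-hand side of the $\Db\omegab$ elliptic equation is not in $L^2(S_{\ub_*,u})$.} Commuting $\Db$ with \eqref{lastomegab} produces $\Db\divs(\Omega\betab)$, which after the Bianchi substitution becomes $-\Omega\divs\divs\alphab+\text{l.o.t.}$ This is two derivatives of $\alphab$, controlled only in $L^2(\Cb_\ub)$ by $\underline{\mathcal{R}}$, not sphere-wise. So your claim that $\widetilde G_1$ is controlled in $L^2(S_{\ub_*,u})$ is false for this term, and Lemma~\ref{elliptic} cannot be applied directly. The paper instead writes $\int_S|\nablas\Db\omegab|^2=-\int_S\Deltas\Db\omegab\,(\Db\omegab-\overline{\Db\omegab})$ and integrates the offending term by parts \emph{once}, leaving only $\|\divs\alphab\|_{L^2(S)}$. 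That in turn is bounded sphere-wise via the trace-type inequality \eqref{SobolevCb}, which requires $\|\Db\alphab\|_{L^2(\Cb_\ub)}$, i.e.\ $\underline{\mathcal{R}}[\Db\alphab]$.

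These two gaps interact: once you switch to the energy estimate for $(\Db\betab,\Db\alphab)$ (fixing the first gap), the term $\Db(\omegab\betab)$ in the $\Db\Db\betab$ equation forces $\mathcal{O}^2[\Db\omegab]$ into the error, while (fixing the second gap) the elliptic bound for $\Db\omegab$ depends on $\underline{\mathcal{R}}[\Db\alphab]$. The paper closes this loop by observing that $\mathcal{O}^2[\Db\omegab]$ enters the energy estimate with an $\varepsilon$-factor, so substituting the $\Db\omegab$ bound back yields a closable inequality for small $\varepsilon$. Your proposal treats the two estimates as sequential and independent, which is not possible.
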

\begin{remark}
This proposition is proven in the origin region $M_{\ub_*,\varepsilon}$. We can then choose $\delta$ and $\delta'$ sufficiently small such that this proposition still holds in the region $M_{\ub_*+\delta,\varepsilon+\delta'}$ with a twice large bound.
\end{remark}

\begin{proof}For the norms about $\alphab$, we rely on the null Bianchi equations for $\Db\betab$ and $\Dh\alphab$. We first commute $\Db$ with the Bianchi equations:
\begin{align*}
&\Dh\Dbh\alphab-\frac{1}{2}\Omega\tr\chi\Dbh\alphab+\Omega\nablas\tensor\Db\betab\\
=&\frac{1}{2}\Db(\Omega\tr\chi)\cdot\alphab-2\omega\Dbh\alphab-2\Db\omega\cdot\alphab+[\Dh,\Dbh]\alphab-[\Dbh,\nablas\tensor]\betab\\ &-\Omega\omegab\nablas\tensor\betab-\Db\left(\Omega\{(4\etab-\zeta)\tensor \betab+3\chibh \rho-3{}^*\chibh \sigma\}\right),\\
&\Db\Db\betab-\Omega\chibh\cdot \Db\betab+\Omega\divs\Dbh\alphab\\=&\Db(\Omega\chibh)\cdot\betab-[\Dbh,\divs]\alphab-\Omega\omegab\divs\alphab-\Db\left(\frac{3}{2}\Omega\tr\chib\betab-\omegab\betab+\Omega\{(\eta-2\zeta)\cdot\alphab\}\right).\\
\end{align*}
As what we have done in energy estimates, we can establish the following inequality:
\begin{align*}
\int_{C_u}|\Db\betab|^2+\int_{\Cb_{\ub}}|\Db\alphab|^2\lesssim\int_{C_0}|\Db\betab|^2+\int_{\Cb_0}|\Db\alphab|^2+\int_M|\tau_1^4|.
\end{align*}
The term $\tau_1^4$ comes from $\Db\alphab$ times the right hand side of the first equation, plus $\Db\betab$ times the right hand side of the second equation. By commuting one more $\nablas$, we also have
\begin{align*}
\int_{C_u}|(r\nablas)\Db\betab|^2+\int_{\Cb_{\ub}}|(r\nablas)\Db\alphab|^2\lesssim\int_{C_0}|(r\nablas)\Db\betab|^2+\int_{\Cb_0}|(r\nablas)\Db\alphab|^2+\int_M|\tau_1^{4(1)}|.
\end{align*}

The terms which are $\Db$ applying to connection coefficients are expressed directly using null structure equations (except $\Db\omegab$). Therefore, using the bounds for $\mathcal{O}$, $\mathcal{R}$, $\underline{\mathcal{R}}$, Lemma \ref{curvatureonS}, H\"older and Sobolev inequalities, for $\varepsilon>0$ sufficiently small depending on $\mathcal{O}_0$, $\mathcal{R}_0$ and $\underline{\mathcal{R}}_0$, we have
\begin{align*}
&\int_M|\tau_1^4|+\int_M|\tau_1^{4(1)}|\\
\lesssim&\varepsilon^{1/2}C(\mathcal{O}_0,\mathcal{R}_0,\underline{\mathcal{R}}_0)+\int_0^{\ub}r^{-2}\mathcal{O}[\omega,\Omega\tr\chi-\overline{\Omega\tr\chi}]\int_{\Cb_{\ub'}}\sum_{i=0}^1|(r\nablas)^i\Db\alphab|^2\D \ub'\\
&+C(\mathcal{O}_0,\mathcal{R}_0)\left(\int_0^{\ub}r^{-2}\int_{\Cb_{\ub'}}\sum_{i=0}^2|(r\nablas)^i\alphab|^2\D\ub'\right)^{1/2}\left(\int_0^{\ub}r^{-2}\int_{\Cb_{\ub'}}\sum_{i=0}^1|(r\nablas)^i\Db\alphab|^2\D\ub'\right)^{1/2}\\
&+\mathcal{O}[\tr\chib,\omegab,\chibh]\int_0^{u}\int_{C_{u'}}\sum_{i=0}^1|(r\nablas)^i\Db\betab|^2\D u'\\
&+\mathcal{O}[\eta,\etab]\left(\int_0^{u}\int_{C_{u'}}\sum_{i=0}^2|(r\nablas)^i\Db\betab|^2\D u'\right)^{1/2}\left(\int_0^{\ub}r^{-2}\int_{\Cb_{\ub'}}\sum_{i=0}^1|(r\nablas)^i\Db\alphab|^2\D\ub'\right)^{1/2},\\
&+O^2[\Db\omegab]\left(\int_0^{u}\int_{C_{u'}}\sum_{i=0}^2|(r\nablas)^i\betab|^2\D u'\right)^{1/2}\left(\int_0^{u}\int_{C_{u'}}\sum_{i=0}^1|(r\nablas)^i\Db\betab|^2\D u'\right)^{1/2},
\end{align*}
where the terms which are already estimated are collected in the first term on the right hand side. By Gronwall's inequality, we have
\begin{equation}\label{energyDbalphab}
\begin{split}
\mathcal{R}[\Db\betab]^2+\underline{\mathcal{R}}[\Db\alphab]^2\lesssim&\mathcal{R}_0[\Db\betab]^2+\underline{\mathcal{R}}_0[\alphab]^2+C(\mathcal{O}_0,\mathcal{R}_0,\underline{\mathcal{R}}_0)\left(\underline{\mathcal{R}}[\Db\alphab]+\varepsilon^{1/2}+\varepsilon^{1/2}\mathcal{R}[\Db\betab]\underline{\mathcal{R}}[\Db\alphab]\right)\\
&+\varepsilon O^2[\Db\omegab]\mathcal{R}[\betab]\mathcal{R}[\Db\betab].
\end{split}
\end{equation}

Therefore we should first estimate $O^2[\Db\omegab]$ . We commute $\nablas^i\Db$ with the structure equation for $D\omegab$ for $i\le 1$:
\begin{align*}
D\nablas^i\Db\omegab=\nablas^i\Db(\Omega^2(2(\eta,\etab)-|\eta|^2-\rho))+[D,\nablas^i\Db]\omegab.
\end{align*}
We should integrate this equation from the last slice $\Cb_{\ub_*}$. Commuting $\Db$ with the equation \eqref{lastomegab} which is written on $\Cb_{\ub_*}$:
\begin{align*}
2\Deltas\Db\omegab=&\Db\left(2\divs(\Omega\betab)+\divs(3\Omega\chibh\cdot\eta+\frac{1}{2}\Omega\tr\chib\eta)+\Omega\tr\chib\divs\eta+(F-\overline{F}+\overline{\Omega\tr\chib\check\rho}-\overline{\Omega\tr\chib}\cdot\overline{\check{\rho}})\right)\\
&+2[\Deltas,\Db]\omegab,\\
\overline{\Db\omegab}=&\overline{\Omega\tr\chib}\overline{\omegab}-\overline{\Omega\tr\chib\omegab}-\Db\overline{\Omega\tr\chib\log\Omega}.
\end{align*}

We should pay special attention to the right hand side. Notice that, by null Bianchi equations for $\Db\betab$,  up to lower order terms, $\Db\divs(\Omega\betab)\sim\divs\divs\alphab$, which is not in $L^2(S_{\ub,u})$ but only in $L^2(\Cb_{\ub})$. But notice that we can control $\|(r\nablas)^i\alphab\|_{L^4(S_{\ub,u})}$ for $i\le 1$ using an analogue of Proposition 10.2 in \cite{Chr} or a special case for $Y=u$ in \eqref{SobolevCb}, as:
\begin{align*}
\|(r\nablas)^i\alphab\|_{L^4(S_{\ub,u})}\lesssim\mathcal{R}_0+C(\underline{\mathcal{R}}[\alpha], \underline{\mathcal{R}}[\Db\alpha]).
\end{align*}
Therefore, we should be able to control $\|(r\nablas)^i\Db\omegab\|_{L^4(S_{\ub,u})}$. In this paper, we only control $\|(r\nablas)^i\Db\omegab\|_{L^2(S_{\ub,u})}$ for simplicity, which is enough for our work.

To do this, we denote $\Deltas\Db\omegab=-\Omega\divs\divs\alphab+G_{(3)}$ on the last slice $\Cb_{\ub_*}$. It is direct to verify that $\|r^2G_{(3)}\|_{L^2(S_{\ub,u})}\lesssim C(\mathcal{O}_0,\mathcal{R}_0)$. Then we compute
\begin{align*}
\int_{S_{\ub_*,u}}|\nablas\Db\omegab|^2=&-\int_{S_{\ub_*,u}}\Deltas\Db\omegab(\Db\omegab-\overline{\Db\omegab})=\int_{S_{\ub_*,u}}(-\divs\divs\alphab+G_{(3)})(\Db\omegab-\overline{\Db\omegab})\\
\lesssim&\|G_{(3)}\|_{L^2(S_{\ub_*,u})}\|\nablas\Db\omegab\|_{L^2(S_{\ub_*,u})}+\int_{S_{\ub_*,u}}\divs\alphab\cdot\nablas\Db\omegab.
\end{align*}
Therefore,
\begin{align*}
\sum_{i=0}^1\|(r\nablas)^i\Db\omegab\|_{L^2(S_{\ub_*,u})}\lesssim\|r^2G_{(3)}\|_{L^2(S_{\ub_*,u})}+\|(r\nablas)\alphab\|_{L^2(S_{\ub_*,u})}\lesssim C(\mathcal{O}_0,\mathcal{R}_0,\underline{\mathcal{R}}_0,\underline{\mathcal{R}}[\Db\alphab]).
\end{align*}
Then we apply Gronwall type estimates to the equations for $D\nablas^i\Db\omegab$ for $i\le1$, and conclude that the above estimate holds for all $\ub$.

Substituting this estimate back to \eqref{energyDbalphab}, we have
\begin{align*}
\mathcal{R}[\Db\betab]+\underline{\mathcal{R}}[\Db\alphab]\lesssim C(\mathcal{O}_0,\mathcal{R}_0,\underline{\mathcal{R}}_0).
\end{align*}
This in turn gives the desired estimate of $\|(r\nablas)^i\Db\omegab\|_{L^2(S_{\ub,u})}$ for $i\le1$. By Sobolev inequality, we can also bound $\|r^{1/2}\Db\omegab\|_{L^4(S_{\ub,u})}$.

In order to estimate $\underline{\mathcal{R}}[\Db^2\alphab]$, we commute $\Db^2$ with the null Bianchi equations for $\Dh\alphab$ and $\Db\betab$. We do the estimate exactly as above, if we have the estimate for $\mathcal{O}[\Db^2\omegab]$. This can be done as above estimates for $\mathcal{O}^2[\Db\omegab]$, by commuting $\Db^2$ with the equation for $\Db\omegab$ and the equation \eqref{lastomegab}. Notice that in this case, when considering the Laplacian equation for $\Db^2\omegab$ on the last slice $\Cb_{\ub_*}$, the highest order term of the  is $\divs\divs(\Db\alphab)$, which has no estimates. But we have controlled $\divs(\Db\alphab)$ in $L^2(\Cb_{\ub})$. Therefore, by a similar argument, we can first control $\|\Db^2\omegab\|_{L^2(\Cb_{\ub_*})}$ and then integrate the equation for $D\Db^2\omegab$ to obtain the estimate of $\|\Db^2\omegab\|_{L^2(\Cb_{\ub})}$ for all $\ub$.\end{proof}

Now we consider the function space $\mathcal{K}=\mathcal{K}_{\ub_*+\delta,\varepsilon+\delta'}\subset C([0,\varepsilon],H^2(S_{\ub_*+\delta,0}))$ such that
\begin{equation}\label{functionspacelast}
\begin{cases}W(0,\theta)=0,\\0\le W(s,\theta)\le\varepsilon+\delta',\\\sup_s\|(r\nablas)^{\le2}(W(s,\cdot)-s)\|_{L^2(S_{\ub_*+\delta,0})}\le\varepsilon_\mathcal{K},\end{cases}
\end{equation}
for some small number $\varepsilon_\mathcal{K} $ to be fixed. $\mathcal{K}$ is a close set in $C([0,\varepsilon],H^2(S_{\ub_*+\delta,0}))$. We remark that the second condition ensure that there is enough room such that the new foliation ${}^{(W)}u$ given by $W\in\mathcal{K}$ also varies from $0$ to $\varepsilon$. This is exactly the point that why we need $\delta$ is sufficiently small depending on $\delta'$. Obviously we can first choose $\varepsilon_\mathcal{K}$ small enough such that the second condition is ensured. Also notice that by our definition of $\mathcal{K}$, $W$ need not to represent a foliation, but only a family of sections parameterized by $s$. As in the first step, we will prove
\begin{proposition}\label{fixedpoint}
For $\varepsilon$ sufficiently small depending on $\mathcal{O}_0$, $\mathcal{R}_0$, $\underline{\mathcal{R}}_0$, and $\delta$ sufficiently small (may depending on $\varepsilon$ and $\delta'$), $\mathcal{A}(\mathcal{K})\subset\mathcal{K}$ and $\mathcal{A}$ is a contraction in $\mathcal{K}\subset C([0,\varepsilon],H^2(S_{\ub_*+\delta,0}))$.
\end{proposition}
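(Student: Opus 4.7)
The approach is to repeat the fixed-point argument of Section \ref{sectioninitialslice}, now applied on $\Cb_{\ub_*+\delta}$ equipped with its background foliation $u$. First I would verify that the hypotheses \eqref{assumption0}--\eqref{assumption3} of Proposition \ref{canonicalslice} hold on $\Cb_{\ub_*+\delta}$: assumption \eqref{assumption0} follows from Propositions \ref{connection3}, \ref{curvaturecompletenullcone}, and \ref{DbDbalphab}, extended by at most a factor of two to the enlarged region $M_\delta$ by continuity in $\delta$ and $\delta'$; assumptions \eqref{assumption1}, \eqref{assumption2} hold for every $W \in \mathcal{K}$ with constants depending on $\mathcal{O}_0, \mathcal{R}_0, \underline{\mathcal{R}}_0$ and $\varepsilon_\mathcal{K}$ once $\varepsilon$ is small; and \eqref{assumption3} holds once we ensure $\varepsilon + \delta' < \varepsilon_W$. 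Proposition \ref{canonicalslice} then provides the Lipschitz estimate \eqref{contraction} with a constant $C'$ depending only on $\mathcal{O}_0, \mathcal{R}_0, \underline{\mathcal{R}}_0$.

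The contraction property of $\mathcal{A}$ on $\mathcal{K}$ follows directly from \eqref{contraction}. Writing
\begin{align*}
\mathcal{A}(W_1)(s,\theta) - \mathcal{A}(W_2)(s,\theta) = \int_0^s \bigl[({}^{((W_1)_\mathcal{A})}\Omega^2 \Omega^{-2})(W_1(s',\theta),\theta) - ({}^{((W_2)_\mathcal{A})}\Omega^2 \Omega^{-2})(W_2(s',\theta),\theta)\bigr] \D s',
\end{align*}
the $H^2(S_{\ub_*+\delta,0})$ norm of the integrand is controlled by $C \|W_1(s',\cdot) - W_2(s',\cdot)\|_{H^2}$ using \eqref{contraction} together with the elementary estimate on $\|(r\nablas)^{\le 2}(\log\Omega(W_1(s',\cdot),\cdot) - \log\Omega(W_2(s',\cdot),\cdot))\|_{L^2}$, which is bounded because the derivatives of the background $\log\Omega$ are controlled. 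Integration in $s' \in [0,\varepsilon]$ produces the desired factor of $\varepsilon$, yielding a strict contraction for $\varepsilon$ sufficiently small.

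The main obstacle is showing $\mathcal{A}(\mathcal{K}) \subset \mathcal{K}$, i.e.\ that $\|(r\nablas)^{\le 2}(\mathcal{A}(W)(s,\cdot) - s)\|_{L^2(S_{\ub_*+\delta,0})} \le \varepsilon_\mathcal{K}$ for $W \in \mathcal{K}$, because the comparison is now to the identity function $s$ rather than to $0$ as in Section \ref{sectioninitialslice}. The natural decomposition is
\begin{align*}
\mathcal{A}(W) - s = \bigl(\mathcal{A}(W) - \mathcal{A}(s)\bigr) + \bigl(\mathcal{A}(s) - s\bigr),
\end{align*}
in which the first term is bounded by $\varepsilon C' \varepsilon_\mathcal{K}$ by the contraction estimate just obtained. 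The second term encodes the extent to which the background foliation $u$ fails to be canonical on $\Cb_{\ub_*+\delta}$, and is the key quantity. Since $u|_{\Cb_{\ub_*}}$ is exactly canonical, equation \eqref{lastslice} holds at $\ub = \ub_*$; propagating $\log\Omega$ along the $D$-direction over $\ub \in [\ub_*, \ub_*+\delta]$ via $D\log\Omega = \omega$ and similarly propagating the source terms appearing in the right-hand side of \eqref{lastslice} produces only $O(\delta)$ errors in $H^2(S_{\ub,u})$. By the elliptic estimate used in the proof of Proposition \ref{canonicalslice}, this gives $\|{}^{(s_\mathcal{A})}\log\Omega - \log\Omega\|_{H^2(S_{\ub_*+\delta,u})} \lesssim \delta$, hence $\|\mathcal{A}(s) - s\|_{H^2(S_{\ub_*+\delta,0})} \lesssim \varepsilon\delta \cdot C(\mathcal{O}_0, \mathcal{R}_0, \underline{\mathcal{R}}_0)$. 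Choosing $\varepsilon$ first so that $\varepsilon C' \le 1/2$, and then $\delta$ small enough (depending on $\varepsilon$, $\delta'$, and $\varepsilon_\mathcal{K}$) so that $\varepsilon \delta C \le \varepsilon_\mathcal{K}/2$, yields $\mathcal{A}(W) \in \mathcal{K}$; since $\mathcal{K}$ is a closed subset of the Banach space $C([0,\varepsilon], H^2(S_{\ub_*+\delta,0}))$, the Banach fixed-point theorem then produces the unique fixed point, giving the desired canonical function $u_\delta$ on $\Cb_{\ub_*+\delta}$.
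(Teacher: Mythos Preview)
Your proposal is correct and follows essentially the same approach as the paper. Both arguments verify the hypotheses of Proposition~\ref{canonicalslice} on $\Cb_{\ub_*+\delta}$, invoke \eqref{contraction} for the Lipschitz/contraction estimate, and exploit the fact that $u$ is exactly canonical on $\Cb_{\ub_*}$ together with continuity in $\delta$ to handle the self-mapping property. The only cosmetic difference is that you decompose $\mathcal{A}(W)-s = (\mathcal{A}(W)-\mathcal{A}(s)) + (\mathcal{A}(s)-s)$ explicitly, treating the second term by a propagation argument via $D\log\Omega=\omega$, whereas the paper applies \eqref{contraction} with $W_1=W$, $W_2=s$ directly and absorbs the failure of the background to be canonical on $\Cb_{\ub_*+\delta}$ into a single continuity error $\varepsilon_*$ that is then made small by shrinking $\delta$; the content is the same.
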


\begin{proof}It is not hard to see the assumptions of Proposition \ref{canonicalslice} hold for $\varepsilon$ sufficiently small depending on  $\mathcal{O}_0$, $\mathcal{R}_0$, $\underline{\mathcal{R}}_0$. We use \eqref{contraction} for $W_1=W$ and $W_2=s$. We recall that $\Cb_{\ub_*}$ is a canonical last slice with the foliation given by the origin $u$, then for any $\varepsilon_*>0$, by continuity, we can choose $\delta$ sufficiently small, such that
\begin{align*}
&\|(r\nablas)^{\le2}(\log{}^{((W)_{\mathcal{A}})}\Omega(W(s,\cdot),\cdot)-\log\Omega(s,\cdot))\|_{L^2(S_{\ub_*+\delta,0})}\\
\lesssim& C(\mathcal{O}_0,\mathcal{R}_0,\underline{\mathcal{R}}_0)\|(r\nablas)^{\le2}(W(s,\cdot)-s)\|_{L^2(S_{\ub_*+\delta,0})}+\varepsilon_*.
\end{align*}
Then we integrate $s$ from $0$ to $1$,
\begin{align*}
&\|(r\nablas)^{\le2}(W(s,\cdot)-s)\|_{L^2(S_{\ub_*+\delta,0})}\\
\lesssim&\int_0^s\|(r\nablas)^{\le2}{}^{((W)_{\mathcal{A}})}({}^{(W_\mathcal{A})}\Omega^2(W(s',\cdot),\cdot)\Omega^{-2}(W(s',\cdot),\cdot)-1)\|_{L^2(S_{\ub_*+\delta,0})}\D s\\
\lesssim& \varepsilon C(\mathcal{O}_0,\mathcal{R}_0,\underline{\mathcal{R}}_0)\sup_s\|(r\nablas)^{\le2}(W(s,\cdot)-s)\|_{L^2(S_{\ub_*+\delta,0})}+\varepsilon\varepsilon_*.
\end{align*}
The inequality \eqref{SobolevCb} should also be used here because we compare the bound of background lapse $\Omega$ on the sphere $S_{\ub_*+\delta,s}$ to the bound on $S_{W(s)}$. We choose $\varepsilon$ small depending only on $\mathcal{O}_0$, $\mathcal{R}_0$, $\underline{\mathcal{R}}_0$, then choose $\varepsilon_*$ sufficiently small depending on $\varepsilon_\mathcal{K}$ and $\varepsilon$, such that
\begin{align*}
\sup_s\|(r\nablas)^{\le2}(W(s,\cdot)-s)\|_{L^2(S_{\ub_*+\delta,0})}\le\varepsilon_\mathcal{K}.
\end{align*}
This implies $\mathcal{A}$ maps $\mathcal{K}$ into itself.

Also, we can use \eqref{contraction} to conclude that,
\begin{align*}
&\|(r\nablas)^{\le2}(W_1(s,\cdot)-W_2(s,\cdot))\|_{L^2(S_{\ub_*+\delta,0})}\\
\lesssim&\int_0^s\|(r\nablas)^{\le2}({}^{(W_\mathcal{A})}\Omega^2(W_1(s',\cdot))\Omega^{-2}(W_1(s',\cdot),\cdot)\\
&-{}^{(W_\mathcal{A})}\Omega((W_2,\cdot),\cdot)\Omega^{-2}(W_2(s,\cdot),\cdot)\|_{L^2(S_{\ub_*+\delta,0})}\D s\\
\lesssim& \varepsilon C(\mathcal{O}_0,\mathcal{R}_0,\underline{\mathcal{R}}_0)\sup_s\|(r\nablas)^{\le2}(W(s,\cdot)-s)\|_{L^2(S_{\ub_*+\delta,0})}.
\end{align*}
By choosing $\varepsilon$ sufficiently small depending only on $\mathcal{O}_0$, $\mathcal{R}_0$, $\underline{\mathcal{R}}_0$, then $\mathcal{A}:\mathcal{K}\to\mathcal{K}$ is a contraction and this completes the proof. \end{proof}

To proceed further, we are going to estimate the difference between two optical functions $u_1$ and $u_2$ which are constructed from canonical foliations on $\Cb_{\ub_1}$ and $\Cb_{\ub_2}$ respectively, and $\ub_1\le\ub_2$. The corresponding geometric quantities are labelled by lower index $1$ or $2$. In the past of $\Cb_{\ub_1}$, The null frames are related by
\begin{align*}
 \Lb'_2&=\Lb'_1,\\
L_2&=L_1+ (-\frac{1}{2}g(L'_1,L_2))\Lb_1+2 (-\frac{1}{2}g(L_1,L_2))^{1/2}\sigma^A (E_1)_A,\\
(E_2)_A&=(E_1)_A+\Omega_1^{-1}(-\frac{1}{2}g(L'_1,L_2))^{1/2}\sigma_A\Lb_1.
\end{align*}
Here $\sigma^1$, $\sigma^2$ satisfy $\sigma^A\sigma^Bg((E_1)_A,(E_1)_B)=1$. This relation is exactly the relation \eqref{nullframerelation}. 

Along the null generators of $(C_2)_{u_2}$, using the above relation, we compute
\begin{align}\label{D2u1u2}
D_2(u_1-u_2)=[L_1+ (-\frac{1}{2}g(L'_1,L_2))\Lb_1+2 (-\frac{1}{2}g(L_1,L_2))^{1/2}\sigma^A (E_1)_A](u_1)=-\frac{1}{2}g(L'_1,L_2),
\end{align}
and
\begin{align}\label{Db2u1u2}
\Db_2(u_1-u_2)=\Omega_2^2\Omega_1^{-2}-1.
\end{align}
We should consider $g(L'_1,L_2)$ to estimate $u_1-u_2$. Then we compute
\begin{equation}\label{D2L1L2}
\begin{split}
&D_2g(L'_1,L_2)\\
=&\omega_2 g(L'_1,L_2)+g(\nabla_{L_2}L'_1,L_2)\\
=&\omega_2 g(L'_1,L_2)-2g(L'_1,L_2)\sigma^A\sigma^B\Omega_1(\chih_1)_{AB}-g(L'_1,L_2)\Omega_1(\tr\chi)_1\\
&-4g(L'_1,L_2)(-\frac{1}{2}g(L_1,L_2))^{1/2}\sigma^A(\eta_1)_A-g(L'_1,L_2)^2\omegab_1,
\end{split}
\end{equation}
and
\begin{align}\label{Db2L1L2}\Db_2g(L'_1,L_2)=-2\Omega_1^{-2}\Omega_2^2\omegab_1g(L'_1,L_2)+4\Omega_1^{-2}\Omega_2^2(-g(L_1,L_2))^{1/2}(\sigma_1^A(\eta_1)_A+\sigma_2^A(\eta_2)_A).
\end{align}
Here $\sigma_1$ and $\sigma_2$ satisfy the same property as $\sigma$.

Now we go back to the canonical foliation given by $u_\delta$ on the new last slice $\Cb_{\ub_*+\delta}$, which we have constructed above. The origin optical function $u$ which is canonical on $\Cb_{\ub_*}$ is also extended up to $\Cb_{\ub_*+\delta}$ and the difference between $u$ and $u_\delta$ on $\Cb_{\ub_*+\delta}$ is controlled by $\varepsilon_\mathcal{K}$, see \eqref{functionspacelast}. Notice that we can choose $\varepsilon_\mathcal{K}$ arbitrarily small by setting $\delta$ sufficiently small. 

Our goal is to extend $u_\delta$ as a optical function back to the initial null cone $\Cb_0$. This means that both $u_\delta$ and $u$ satisfy the above system of equations \eqref{D2u1u2}-\eqref{Db2L1L2} with $u_1=u$ and $u_2=u_\delta$. With the ``initial data'' on $\Cb_{\ub_*+\delta}$ which are close to each other, we can conclude that the extension of $u_\delta$ to $\Cb_0$ can be done. This is essentially the continuity of the equations \eqref{D2u1u2} and \eqref{D2L1L2}. Again by continuity, the curvature norms $\mathcal{R}_\delta$, $\underline{\mathcal{R}}_\delta$ expressed in the new foliation $(\ub,u_\delta)$ is bounded by twice the norms $\mathcal{R},\underline{\mathcal{R}}$ in the old foliation $(\ub,u)$. Therefore, we complete Step 4 of the proof.

\subsection{Global Optical Function and Recovering the Full Decay} Finally, we will construct the gloabl retarted time function $u$ to complete the proof. We also appeal to the equations \eqref{D2u1u2}-\eqref{Db2L1L2}. Let $u_1$ and $u_2$ be optical functions from canonical foliations on $\Cb_{\ub_1}$ and $\Cb_{\ub_2}$ respectively, and $\ub_1\le\ub_2$. The space-time is constructed by sending $\ub_1$ to $+\infty$ and we will consider the convergence of the function $\ub_1$. Therefore, we need to compare $\ub_1$ to $\ub_2$ for any $\ub_2\ge\ub_1$.

We first consider \eqref{Db2L1L2} on $\Cb_{\ub_1}$. Because $L_1=L_2$ on $C_0$, then $g(L'_1,L_2)=0$ on $C_0$. We assume on $\Cb_{\ub_1}$, $|r_2^4(\ub_1)g(L'_1,L_2)|\lesssim \Delta_4$\footnote{$r_2$ should also depend on $u_2$ but we only emphasize the dependence on $\ub_1$ here.}. Then  by integrating \eqref{Db2L1L2}, we have
\begin{align*}
|r_2^4(\ub_1)g(L'_1,L_2)|\lesssim\varepsilon r_2^{-1}(\ub_1) C(\mathcal{O}_0,\mathcal{R}_0)\Delta_4+\varepsilon C(\mathcal{O}_0,\mathcal{R}_0)\Delta_4^{1/2}.
\end{align*}
By choosing $\varepsilon$ sufficiently small, we have $|r_2^4(\ub_1)g(L'_1,L_2)|\lesssim \frac{1}{2}\Delta_4$. Therefore, we have deduce that $|r_2^4(\ub_1)g(L'_1,L_2)|\lesssim C(\mathcal{O}_0,\mathcal{R}_0)$ on $\Cb_{\ub_1}$.

Now consider the equation \eqref{D2L1L2} in the past of $\Cb_{\ub_1}$. We are going to apply Gronwall type estimates to integrate this equation from $\Cb_{\ub_1}$ to its past. The term $\omega_2 g(L'_1,L_2)-2g(L'_1,L_2)\sigma^A\sigma^B\Omega_1(\chih_1)_{AB}$ can be absorbed by Gronwall's inequality because both $\omega_2$ and $\chih_1$ decay\footnote{Recall that both $r_1$ and $r_2$ are equivalent to $(1+\ub)$.} not slower than $r_2^{-2}$. On the other hand, by the structure equation for $\Db\tr\chi$, we can estimate, in the foliations given by $u_1,u_2$, 
\begin{align*}
|\Omega_i\tr\chi_i-\Omega_i\tr\chi_i|_{S_{\ub,0}}|\lesssim\varepsilon^{1/2}C(\mathcal{O}_0,\mathcal{R}_0)\frac{1}{r_i^{-2}}
\end{align*}
for $i=1,2$. By construction $\Omega_1\tr\chi_1|_{S_{\ub,0}}=\Omega_2\tr\chi_2|_{S_{\ub,0}}$, we can estimate
\begin{align*}
|\Omega_1\tr\chi_1-\Omega_2\tr\chi_2|\lesssim C(\mathcal{O}_0,\mathcal{R}_0)\frac{1}{r_2^{-2}}.
\end{align*}
Therefore, replacing the term $\Omega_1\tr\chi_1$ by $\Omega_2\tr\chi_2$ will add terms decaying no slower than $r_2^{-2}$, which can be absorbed by Gronwall's inequality. We are in the position to apply Gronwall type estimates,
\begin{align*}
|r_2^2(\ub)g(L'_1,L_2)|\lesssim&|r_2^2(\ub_1)g(L'_1,L_2)|_{\Cb_{\ub_1}}\\
&+\int_{\ub}^{\ub_1}r_2^2(\ub')|(-g(L'_1,L_2))^{3/2}\sigma^A(\eta_1)_A-g(L'_1,L_2)^2\omegab_1|\D\ub'.
\end{align*}
We argue by bootstrap again. We assume $|r_2^2(\ub)g(L'_1,L_2)|\le\Delta_5 r_2^{-2}(\ub_1)$. Taking into account that $|r_2^4(\ub_1)g(L'_1,L_2)|\lesssim C(\mathcal{O}_0,\mathcal{R}_0)$ on $\Cb_{\ub_1}$, we have
\begin{align*}
|r_2^2(\ub)g(L'_1,L_2)|\lesssim&(r_2^{-2}(\ub_1)+(\Delta_5^{3/2}r_2^{-3}(\ub_1)+\Delta_5^2r_2^{-4}(\ub_1))r_2^{-2}(0))C(\mathcal{O}_0,\mathcal{R}_0).
\end{align*}
Now we can choose $r_2^{-2}(\ub_1)$ sufficiently small, i.e., $\ub_1$ sufficiently large, and $\Delta_5$ sufficiently large, such that $|r_2^2(\ub)g(L'_1,L_2)|\lesssim\frac{1}{2}\Delta_5 r_2^{-2}(\ub_1)$. Then we conclude that $$|r_2^2(\ub)g(L'_1,L_2)|\lesssim C(\mathcal{O}_0,\mathcal{R}_0)r_2^{-2}(\ub_1).$$

We go back to the equation \eqref{Db2u1u2} and \eqref{D2u1u2}. By $\eqref{Db2u1u2}$ on $\Cb_{\ub_1}$, and $u_1=u_2$ on $C_0$, we conclude that $|u_1-u_2|\lesssim C(\mathcal{O}_0,\mathcal{R}_0)r_2^{-1}(\ub_1)$. Then by $\eqref{D2u1u2}$ in the past of $\Cb_{\ub_1}$, we have
\begin{align*}
|u_1-u_2|\lesssim&|u_1-u_2|_{\Cb_{\ub_1}}+\int_{\ub}^{\ub_1}|g(L'_1,L_2)|\D\ub'\\
\lesssim& C(\mathcal{O}_0,\mathcal{R}_0)r_2^{-1}(\ub_1),
\end{align*}
which means that, as $\ub_1$ tends to infinity, $u_1$ will converge to a global function $u$.

It is direct to check the convergence of the derivatives of $u_1$. Now consider $\nabla(u_1-u_2)=-\frac{1}{2}(L'_1-L'_2)$. Because $u_1-u_2\to 0$, we have $\Omega_1-\Omega_2\to0$. Also, the above computation shows that $g(L'_1,L_2)\to0$. By the relation between $L_1$ and $L_2$, we have $L'_1-L'_2\to0$ and we conclude $L'_1=-2\nabla u_1\to L'=-2\nabla u$ as $\ub_1\to+\infty$. In particular, $|\nabla u|=0$ and $u$ is a global optical function.

To verify that, under the global double null foliation $(\ub,u)$, the curvature norm $\mathcal{R}, \underline{\mathcal{R}}\le C(\mathcal{O}_0,\mathcal{R}_0,\underline{\mathcal{R}}_0)$, we need to show that the norms written in $(\ub,u_1)$, $\mathcal{R}_1$, $\underline{\mathcal{R}}_1$ converge to $\mathcal{R}$ and $\underline{\mathcal{R}}$. To do this, it is sufficient to show that $\nablas_2^ig(L'_1,L_2)\to0$ as $\ub_1\to+\infty$ for $i\le2$ in suitable norm. The case $i=0$ is proved just above. For the case $i=1$ we compute
\begin{align*}
(E_2)_Ag(L'_1,L_2)=g(\nabla_{{(E_2)}_A}L'_1,L_2)+g(L'_1,\nabla_{{(E_2)}_A}L_2).
\end{align*}
For the first term on the right hand side, we express $(E_2)_A$ and $L_2$ in terms of $(E_1)_A$, $\Lb_1$ and $L_1$. Then the first term on the right hand side equals to the sum of the form $\Gamma_1\cdot g(L'_1,L_2)^{\nu}$ with the power $\nu\ge1$. For the second term on the right hand side, we express $\nabla_{{(E_2)}_A}L_2=\Omega_2(\chi_2)_{A}^C(E_2)_C+\Omega_2(\etab_2)_A L_2$, and then express the $E_2$ and $L_2$ in terms of $E_1$, $L_1$ and $\Lb_1$. Therefore, the second term on the right hand side equals to the sum of $\Gamma_2\cdot g(L'_1,L_2)^{\nu}$ with $\nu\ge 1$. We already have $g(L'_1,L_2)\to 0$ and $E_2\to E$ where $E$ is tangent to the sphere $S_{\ub,u}$, therefore $\nablas_2g(L'_1,L_2)\to0$ as $\ub_1\to+\infty$. The second order derivaitves are done in a similar way, involving one order derivative of connection coefficients. Then $\nablas_2^2g(L'_1,L_2)\to0$ in $L^4(S_{\ub,u})$.

At last, if we denote
\begin{align*}\mathcal{R}[\alpha]=\sup_u\|(r\nablas)^{\le2}(r^2\alpha)\|_{L^2(C_u)},&\ \underline{\mathcal{R}}[\beta]=\sup_{\ub}\|(r\nablas)^{\le2}(r^2\beta)\|_{L^2(\Cb_{\ub})},\\
\mathcal{R}[D\alpha]=\sup_u\|(r\nablas)^{\le1}(r^3D\alpha)\|_{L^2(C_u)},&\ \underline{\mathcal{R}}[D\beta]=\sup_{\ub}\|(r\nablas)^{\le1}(r^3D\beta)\|_{L^2(\Cb_{\ub})},\\
\mathcal{R}[D^2\alpha]=\sup_u\|r^4D^2\alpha\|_{L^2(C_u)},&\ \underline{\mathcal{R}}[D^2\beta]=\sup_{\ub}\|r^4D^2\beta\|_{L^2(\Cb_{\ub})},
\end{align*}
We denote $\mathcal{R}_0[\alpha,D\alpha,D^2\alpha]$ to be the corresponding norms taken on $C_0$ and $\underline{\mathcal{R}}_0[\beta,D\beta,D^2\beta]$ to be the corresponding norms taken on $\Cb_0$.
We can show that if in addition,
$$\mathcal{R}_0[\alpha,D\alpha,D^2\alpha]+\underline{\mathcal{R}}_0[\beta,D\beta,D^2\beta]<\infty,$$ 
then $\mathcal{R}[\alpha,D\alpha,D^2\alpha],\underline{\mathcal{R}}[\beta,D\beta,D^2\beta]\le C(\mathcal{O}_0,\mathcal{R}_0,\underline{\mathcal{R}}_0)$, where we have included the above two groups of initial norms into the definition of $\mathcal{R}_0,\underline{\mathcal{R}}_0$. Such a decaying condition on $\alpha$ comes from the work of Christodoulou-Klainerman \cite{Ch-K} and Klainerman-Nicol\`o \cite{K-N}, which deal with the so-called strongly asymptotically flat Cauchy data.

We firstly consider the case for $\alpha$ itself. For angular derivatives $\nablas$, the proof is similar. We use the null Bianchi equation for $\Db\alpha$ and $D\beta$ to obtain
\begin{align*}
\int_{C_u}|r^2\alpha|^2+\int_{\Cb_{\ub}}2|r^2\beta|^2\lesssim\int_{C_0}|r^2\alpha|^2+\int_{\Cb_0}2|r^2\beta|^2+\int_Mr^4|\tau_4^{(0)}|.
\end{align*}
We estimate
\begin{align*}
&\int_Mr^4|\tau_4^{(0)}|\lesssim\int_Mr^4|\Gamma[\tr\chib,\omegab]\cdot\alpha\cdot\alpha+\Gamma[\eta,\etab]\cdot\beta\cdot\alpha+\chih\cdot R[\rho,\sigma]\cdot\alpha+\Gamma[\tr\chi,\omega]\cdot\beta\cdot\beta|.
\end{align*}
To estimate this term, we put all the curvature in suitably weighted $L^2(C_u)$ norm and obtain
\begin{align*}
\int_Mr^4|\tau_4^{(0)}|
\lesssim\varepsilon\mathcal{O}(\mathcal{R}[\alpha]^2+\mathcal{R}[\beta,\rho,\sigma]\mathcal{R}[\alpha]+\mathcal{R}[\beta]^2).
\end{align*}
Choosing $\varepsilon$ sufficiently small we can derive the desired bound for $\mathcal{R}[\alpha]$ and then $\underline{\mathcal{R}}[\beta]$.

\begin{remark}\label{gaindecay}
Though the proof is quite direct we also make some remarks on this estimate. We find that the decay rates of $\alpha$ and $\beta$ are the same, and the decay rate implied by the norm on incoming null cone $\underline{\mathcal{R}}[\beta]$ is weaker than that implied by the norm on outgoing null cone $\mathcal{R}[\beta]$, which is not the case for other components $\rho,\sigma,\betab$. Therefore, unlike the proof of Proposition \ref{curvaturecompletenullcone}, though $\tr\chi$ do appear in the error terms as $\tr\chi|\beta|^2$, we can also estimate this term by using $\mathcal{R}[\beta]$ instead of $\underline{\mathcal{R}}[\beta]$.
\end{remark}

We then go to $D\alpha$. We commute $D$ with the null Bianchi equations for $\Dbh\alpha$-$D\beta$, and obtain
\begin{align*}
&\Dbh\Dh\alpha-\Omega\nablas\tensor D\beta\\
=&[\Dbh,\Dh]\alpha+[\Dh,\nablas\tensor]\beta+\Omega\omega\nablas\tensor\beta\\&-\Dh\left(\Omega\{-\frac{1}{2}\tr\chib\alpha-2\omegab\alpha-(4\eta+\zeta)\tensor \beta+3\chih \rho+3{}^*\chih \sigma\}\right),\\
&DD\beta-\Omega\chih\cdot D\beta-\Omega\divs\Dh\alpha\\=&-\frac{3}{2}\Omega\tr\chi D\beta-\frac{3}{2}D(\Omega\tr\chi)\beta+D(\Omega\chih)\cdot\beta+[\Dh,\divs]\alpha+\Omega\omega\divs\alpha+D\left(\omega\beta+\Omega\{(\etab+2\zeta)\cdot\alpha\}\right).\\
\end{align*}

The terms which are $D$ applying to connection coefficients (except $D\omega$) are expressed directly by the null structure equations, and $D\omega$ is estimated by commuting $D$ with the equation for $\Db\omega$\footnote{Unlike in the proof of Proposition \ref{DbDbalphab}, the estimate for $D\omega$ is not coupled with $D\alpha$ because we simply integrate the equation from initial null cone $C_0$.}. We can then complete the estimates by energy methods as above. We should notice that, the first term on the right hand side of the second equation $-\frac{3}{2}\Omega\tr\chi D\beta$ suggests the weight function $r^3$, and $\beta$ which appears in the second term $-\frac{3}{2}D(\Omega\tr\chi)\beta$ should be estimated by using $\mathcal{R}[\beta]$ instead $\underline{\mathcal{R}}[\beta]$ to gain enough decay, see Remark \ref{gaindecay}. There is another approach to the term $\frac{3}{2}\Omega\tr\chi D\beta$. According to the null Bianchi equation, $D\beta\sim\divs\alpha$ and can be estimated by using $\mathcal{R}[\alpha]$ up to lower order terms. This gains enough decay in the energy estimates. The estimate for $\nablas D\alpha$-$\nablas D\beta$ and then $D^2\alpha$-$D^2\beta$ are similar.

\end{document}